\def\arXiv#1{\href{http://arxiv.org/abs/#1}{arXiv:#1}}
\newcommand{\aw}[1]{\textcolor{blue}{AW: #1}}
\def\?[#1]{\textbf{[#1]}\marginpar{\Large{\textbf{??}}}}
\let\epsilon=\varepsilon 
\newcommand{\RR}{{\mathbb R}}
\newcommand{\NN}{{\mathbb N}}
\newcommand{\CC}{{\mathbb C}}
\newcommand{\ZZ}{{\mathbb Z}}
\newcommand{\tu}{\Tilde{u}}
\newcommand{\tv}{\Tilde{v}}
\newcommand{\td}{\widetilde}
\newcommand{\cA}{\mathcal{A}}
\newtheorem{theo}{Theorem}
\newtheorem{prop}{Proposition}[section]	
\newtheorem{defi}[prop]{Definition}
\newtheorem{ex}{Example}
\newtheorem{lemm}[prop]{Lemma}
\newtheorem{corr}[prop]{Corollary}
\newtheorem{rem}{Remark}
\numberwithin{equation}{section}
\newcommand{\Red}{\textcolor{red}}
\newcommand{\Blue}{\textcolor{blue}}
\DeclareMathOperator{\Spec}{Spec}
\let\Im=\Imag
\DeclareMathOperator{\loc}{loc}
\let\Re=\Real
\DeclareMathOperator{\Tr}{Tr}
\newcommand\reallywidehat[1]{\arraycolsep=0pt\relax%
\begin{array}{c}
\stretchto{
  \scaleto{
    \scalerel*[\widthof{\ensuremath{#1}}]{\kern-.5pt\bigwedge\kern-.5pt}
    {\rule[-\textheight/2]{1ex}{\textheight}} 
  }{\textheight} %
}{0.5ex}\\           
#1\\                 
\rule{-1ex}{0ex}
\end{array}
}
\author[S.\,Becker]{Simon Becker}
\address[Simon Becker]{ETH Zurich, 
Institute for Mathematical Research, 
R\"amistrasse 101, 8092 Zurich, 
Switzerland}
\email{simon.becker@math.ethz.ch}
\author[S.\,Quinn]{Solomon Quinn}
\address[Solomon Quinn]{School of Mathematics, University of Minesota, Minneapolis, MN, 55455, USA}
\email{srquinn@umn.edu}
\author[Z.\,Tao]{Zhongkai Tao}
\address[Zhongkai Tao]{Department of Mathematics, University of California, Berkeley, CA 94720, USA}
\email{ztao@math.berkeley.edu}
\author[A.\,Watson]{Alexander Watson}
\address[Alexander Watson]{School of Mathematics, University of Minesota, Minneapolis, MN, 55455, USA}
\email{abwatson@umn.edu}
\author[M.\,Yang]{Mengxuan Yang}
\address[Mengxuan Yang]{Department of Mathematics, University of California, Berkeley, CA 94720, USA}
\email{mxyang@math.berkeley.edu}
\title[Dirac cones and magic angles in TBG]{Dirac cones and magic angles in the Bistritzer--MacDonald TBG Hamiltonian}
\begin{document}
\begin{abstract}
We demonstrate the generic existence of Dirac cones in the full Bistritzer--MacDonald Hamiltonian for twisted bilayer graphene. Its complementary set, when Dirac cones are absent, is the set of \emph{magic angles}. We show the stability of magic angles obtained in the chiral limit by demonstrating that the perfectly flat bands transform into quadratic band crossings when perturbing away from the chiral limit. Moreover, using the invariance of Euler number, we show that at magic angles there are more band crossings beyond these quadratic band crossings. This is the first result showing the existence of magic angles for the full Bistritzer--MacDonald Hamiltonian and solves Open Problem No.\,2 proposed in the recent survey \cite{Z23}.
\end{abstract}
\maketitle 
\section{Introduction}
\label{sec:intro}
The discovery of superconductivity and strong electronic interactions in twisted bilayer graphene (TBG) has established TBG as an important platform for investigating strongly correlated physics within a system characterized by a topologically non-trivial band structure.

The Bistritzer-MacDonald (BM) Hamiltonian serves as the standard model for describing the effective one-particle band structure of TBG \cite{BM11}. Considerable efforts in both mathematical and physics literature have been devoted to analyzing specific properties of a notable limit of this model, known as the \emph{chiral limit} \cite{magic}.  Within this chiral limit, a discrete set of parameters has been identified \cite{magic,beta}, at which the bands closest to zero energy become entirely flat. This phenomenon is often interpreted as a key factor contributing to the system's strongly correlated electronic properties.

It has been shown in \cite{magic,beta,Z23} that in the chiral limit, the band structure exhibits either a Dirac cone (i.e., a conic singularity of the band structure) at zero energy or a flat band. In short, the model exhibits Dirac cones if and only if the angle is not \emph{magic}. 

Since completely flat bands at zero energy are not observed away from the chiral limit, the absence of Dirac cones is studied to identify magic angles in the band structure of the self-adjoint BM Hamiltonian \cite{BM11}. The BM Hamiltonian is defined as $H (\alpha,\lambda): H^1 ( \mathbb C ; \mathbb C^4 ) \subset L^2 ( \mathbb C ; \mathbb C^4 ) \to 
L^2 ( \mathbb C ; \mathbb C^4 )$ given by
\begin{equation}
\label{eq:defBM} H (\alpha,\lambda) := \begin{pmatrix} \lambda C & D (\alpha)^* \\
D (\alpha) &  \lambda C \end{pmatrix}
\end{equation}
with
\[ D(\alpha):=\begin{pmatrix} 2D_{\bar z} & \alpha U(z) \\ \alpha U(-z) & 2D_{\bar z}. \end{pmatrix} \text{ and }C := \begin{pmatrix} 0 & V(z) \\ \overline{V(z)} & 0 \end{pmatrix}\]
and tunnelling potentials $U,V$ satisfying symmetries for $\gamma \in \Lambda = \ZZ + \omega \ZZ$
\[\begin{split}
U(z+\gamma) = e^{i \langle \gamma,K \rangle} U(z), \quad U(\omega z) = \omega U(z), \text{ and } \overline{U(\overline{z})} = -U(-z) \\
V(z)=V(\overline{z}) = \overline{V(-z)}, \quad  V(\omega z) = V(z), \text{ and }V(z+\gamma)=e^{i \langle \gamma,K \rangle} V(z),
\end{split}\]
where $\omega = e^{2\pi i/3}$ and $K= \frac{4}{3}\pi$ with inner-product $\langle a,b\rangle:=\Re(a \bar b)$ for $a,b \in \CC.$
Honeycomb lattices consist of two non-equivalent vertices per fundamental domain that we denote by $A$ and $B$. Then parameters $\alpha$ and $\lambda$ tune the strengths of the tunnelling potentials of the $A/B$ and $A/A,B/B$ regions, respectively and are inversely proportional to the twisting angle $\theta$ for $\theta \ll 1.$ The chiral limit is obtained by setting $\lambda = 0$ in the Hamiltonian \eqref{eq:defBM}, so $\lambda$ also represents how far the model is from this limit. We refer the reader to Section \ref{sec:bm} for a detailed discussion of the special form of this Hamiltonian and additional definitions on the tunnelling potentials.
The Hamiltonian \eqref{eq:defBM} commutes with the translation operator
\begin{equation}
\label{eq:FL_ev}   \mathscr L_{\gamma } \mathbf w ( z ) := 
\operatorname{diag}(  \omega^{\gamma_1 + \gamma_2}  , 1 , \omega^{\gamma_1 + \gamma_2}  , 1) 
\mathbf w( z + \gamma ),  \ \ \ \gamma \in \Lambda, \quad \gamma=\gamma_1 + \omega  \gamma_2, \ (\gamma_1,\gamma_2) \in \ZZ^2.  
\end{equation}
Thus, by Bloch--Floquet theory, the Hamiltonian is equivalent to a family of Bloch transformed operators
\begin{gather}
    \label{eq:floH}
    H_k(\alpha,\lambda) = \begin{pmatrix} \lambda C & D(\alpha)^* + \bar k \\ D(\alpha)+k & \lambda C \end{pmatrix}: L^2_{0}(\CC/\Lambda;\CC^4) \to L^2_{0}(\CC/\Lambda;\CC^4),\\ 
    \label{eq:defscH} 
    L^2_{k}(\CC/\Lambda;\CC^4):= \{ \mathbf v \in L^2_{\loc} ( \mathbb C, \CC^4 ) :
\mathscr L_{\gamma } \mathbf v = e^{ i \langle k , \gamma \rangle } \mathbf v , \ \ \gamma \in \Lambda \} , \quad k \in \CC/\Lambda^*
\end{gather}
such that $$\Spec_{L^2 ( \mathbb C ; \mathbb C^4 ) }(H(\alpha,\lambda))= \bigcup_{k \in \CC/\Lambda^*} \Spec_{L^2_{0}(\CC/\Lambda;\CC^4)} (H_{k}(\alpha,\lambda)),$$
where $\Lambda^*$ is the dual lattice associated with $\Lambda.$
The spectrum of $H_{k}(\alpha,\lambda)$ on $L^2_{0}(\CC/\Lambda;\CC^4)$ is discrete, and we label it as follows
\begin{equation}
\label{eq:eigs} 
\begin{gathered} \{ E_{ \pm j } (\alpha,\lambda; k ) \}_{ j \in  \mathbb N_+ },\ \  E_{\pm 1} ( \alpha,\lambda; -K) =  E_{\pm 1} ( \alpha,\lambda;  0 ) = 0, \\
 \ \cdots \leq E_{-2} ( \alpha,\lambda; k ) \leq E_{-1} ( \alpha,\lambda; k ) \leq 0\leq E_1 ( \alpha,\lambda; k ) \leq E_2 ( \alpha,\lambda; k ) \leq \cdots,
\end{gathered} \end{equation}
forming the \emph{Bloch bands}. The points $k= -K, 0 $
are high-symmetry points and are typically denoted by $ K $ and $ K' $ in the physics literature\footnote{We shift them by $K$ for notational convenience in the computation.}. At points $k=0,-K$, there exist two protected states (cf.\,\cite[Proposition 2]{bz23})
\begin{equation}
\label{eq:protect}
    \varphi_{k}(\alpha,\lambda) = (u,v)^T,\ \psi_{k} (\alpha,\lambda) = (\tu, \tv)^T\in \ker_{L^2_0(\CC/\Lambda;\CC^4)} (H_k(\alpha,\lambda)),\ k=0,-K
\end{equation}
for all $\lambda,\alpha \in \CC$ explaining $E_{\pm 1} ( \alpha,\lambda; -K) =  E_{\pm 1} ( \alpha,\lambda;  0 ) = 0$ in \eqref{eq:eigs}. As we shall argue, the protected states play an essential role in analyzing the presence of Dirac cones in this model.

As mentioned above, we shall study the band structure of the BM Hamiltonian \eqref{eq:defBM} near the points $0, -K$ in the Brillouin zone $\CC/\Lambda^*$. In particular, we investigate the existence of conic singularities (see Figure \ref{fig:band}) in $E_{\pm 1}(\alpha, \lambda; k)$ for $(\alpha,\lambda)\in \RR^2$. First, we introduce the following 
\begin{defi}[Simple Dirac cone]
\label{def:cone}
    Assume $\dim \ker_{L^2_{0}(\CC/\Lambda;\CC^4)}(H_{0}(\alpha,\lambda)) =2$.
    We say that the BM Hamiltonian \eqref{eq:defBM} exhibits a simple Dirac cone at $k=0$ at $(\alpha,\lambda)$ if and only if 
    \[E_{\pm 1}(\alpha,\lambda;k)=\pm |v_F(\alpha,\lambda)| \vert k\vert+ \mathcal O(\vert k \vert^2) \text{ with } v_F(\alpha,\lambda) \neq 0.\]
    We call such parameters $(\alpha,\lambda)$ \emph{non-magic}. When $v_F(\alpha,\lambda)=0$, i.e. the dispersion relation is not linear, we call the parameters $(\alpha,\lambda)$ \emph{magic}.
\end{defi}
\begin{rem}
    The Dirac cone at $k=-K$ is defined analogously. To simplify the presentation, we prove all results near $k=0$. In fact, the two points are connected by the symmetries $\mathscr S$ and $\mathscr M$ that will be defined in \eqref{eq:mapp}.
\end{rem}

\begin{figure}[ht]
    \includegraphics[width=10cm]{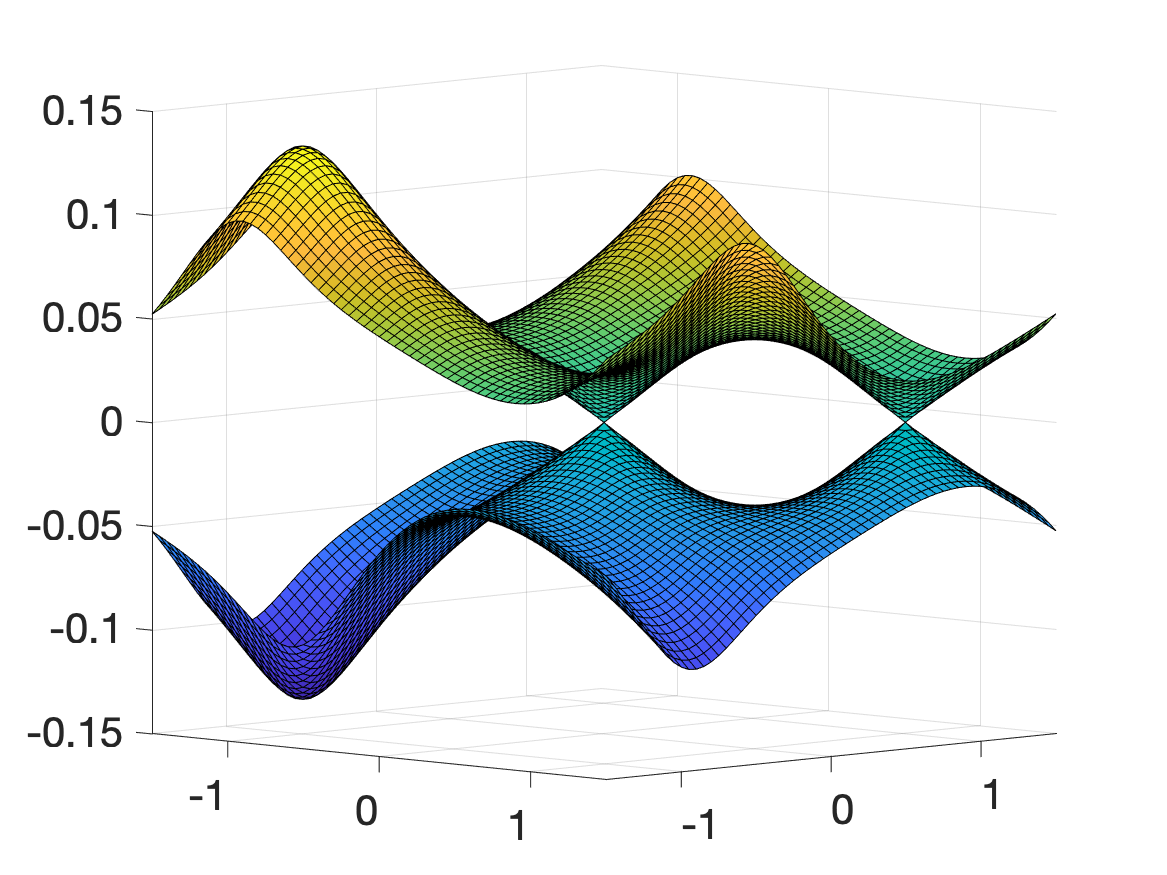}
    \caption{The two bands $E_{\pm 1}(\alpha,\lambda;k)$ closest to zero energy of BM Hamiltonian exhibiting two Dirac points for parameters $\alpha=0.7$ and $\lambda=0.3.$}
    \label{fig:band}
\end{figure}

We start by proving a theorem on the generic existence (cf.~Figure \ref{fig:fermi}) of Dirac cones for the bands $E_{\pm 1}(\alpha,\lambda,k)$ near $k=0$.
\begin{theo}[Generic existence of Dirac cones]
\label{thm:generic}
    There is a locally finite family of points and analytic curves $\{S_i\}$ such that the BM Hamiltonian \eqref{eq:defBM} exhibits a simple Dirac cone at $k=0$ for all $(\alpha,\lambda)\in \RR^2\setminus \bigcup_i S_i$. 
\end{theo}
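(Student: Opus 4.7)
The plan is to exhibit the Fermi velocity $v_F(\alpha,\lambda)$ as a real-analytic function on the locus where $\dim\ker H_0(\alpha,\lambda)=2$ and then invoke the fact that the zero set of a non-trivial real-analytic function on $\RR^2$ is a locally finite union of points and analytic curves. The hypothesis $\dim\ker H_0=2$ in Definition~\ref{def:cone} itself holds on a set whose complement already has the required structure: the protected states in \eqref{eq:protect} give $\dim\ker H_0\geq 2$ on all of $\RR^2$, while the condition $\dim\ker H_0\geq 3$ is cut out by the vanishing of a real-analytic Grushin/Fredholm determinant associated to the self-adjoint holomorphic family $H_0(\alpha,\lambda)$, and hence---if proper in $\RR^2$---is already a locally finite union of points and analytic curves.

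On the open set $\Omega\subset\RR^2$ where $\dim\ker H_0(\alpha,\lambda)=2$, I would apply Kato's holomorphic perturbation theory to the family
\[
k\longmapsto H_k(\alpha,\lambda)=H_0(\alpha,\lambda)+V(k),\qquad V(k):=\begin{pmatrix} 0 & \bar k\,I\\ k\,I & 0 \end{pmatrix},
\]
which is linear in $(k,\bar k)$. The total Riesz projector $P(k)$ onto the two eigenvalues near $0$ is analytic in $k$ for small $|k|$, and the reduced $2\times 2$ operator has eigenvalues $E_{\pm 1}(\alpha,\lambda;k)$. Expressed in an orthonormal basis of $\ker H_0(\alpha,\lambda)=\mathrm{span}(\varphi_0,\psi_0)$, its leading part is
\[
\widetilde H_k=\begin{pmatrix} \langle\varphi_0,V(k)\varphi_0\rangle & \langle\varphi_0,V(k)\psi_0\rangle\\ \langle\psi_0,V(k)\varphi_0\rangle & \langle\psi_0,V(k)\psi_0\rangle \end{pmatrix}+O(|k|^2),
\]
whose eigenvalues are of the form $\pm|v_F(\alpha,\lambda)|\,|k|+O(|k|^2)$, with $|v_F|^2$ an explicit quadratic polynomial in the matrix entries above. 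By Kato's theorem the entries, and hence $|v_F|^2$, are real-analytic in $(\alpha,\lambda)\in\Omega$.

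To verify $|v_F|^2\not\equiv 0$, I would evaluate at the decoupled point $(\alpha,\lambda)=(0,0)$: there $H_0(0,0)$ reduces, up to the Bloch twist in \eqref{eq:FL_ev}, to block copies of $2D_{\bar z}$ on the torus; the protected states are explicit constants, and a direct computation recovers the free Dirac dispersion $E_{\pm 1}(0,0;k)=\pm|k|$, so $|v_F(0,0)|\neq 0$. Consequently the zero set of $|v_F|^2$ inside $\Omega$ is a locally finite union of points and analytic curves, and adjoining the analytic complement $\RR^2\setminus\Omega$ produces the desired family $\{S_i\}$.

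The main obstacle is producing the protected states $\varphi_0(\alpha,\lambda),\psi_0(\alpha,\lambda)$---or, more intrinsically, the Riesz projector onto $\ker H_0(\alpha,\lambda)$---as real-analytic objects in $(\alpha,\lambda)$ throughout $\Omega$, so that $|v_F|^2$ inherits real-analyticity uniformly. This is handled by Kato's analytic perturbation theorem for the self-adjoint holomorphic family $H_0(\alpha,\lambda)$, together with the discrete symmetries of the BM model (\emph{cf.}\ the $\mathscr S$, $\mathscr M$ to be introduced in \eqref{eq:mapp}), which guarantee that the protected subspace deforms smoothly across the whole parameter plane.
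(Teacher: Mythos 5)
Your overall strategy coincides with the paper's: (i) cut out the locus where $\dim\ker H_0(\alpha,\lambda)>2$ by the vanishing of a real-analytic Fredholm determinant (the paper uses $F(\alpha,\lambda)=\partial_{zz}^2\det(I+i(H+i-z)^{-3})|_{z=0}$ and checks $F(0,0)\neq 0$ --- you should verify the analogous non-triviality rather than hedge with ``if proper''); (ii) on the remaining open set $\Omega$, show that the Fermi velocity, extracted from a rank-two reduction of $H_k$ (the paper uses a Grushin problem, you use the Riesz projector --- these are equivalent here), is real-analytic, and note $v_F(0,0)\neq 0$; (iii) apply the structure theorem for zero sets of real-analytic functions. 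Up to this point your argument matches the paper's, which explicitly describes exactly this chain of reasoning and then states that it only \emph{almost} proves the theorem.

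The genuine gap is your last sentence: ``the zero set of $|v_F|^2$ inside $\Omega$ is a locally finite union of points and analytic curves, and adjoining the analytic complement $\RR^2\setminus\Omega$ produces the desired family.'' Local finiteness is a condition in $\RR^2$, not in $\Omega$, and \L ojasiewicz's theorem applied on $\Omega$ only gives local finiteness at points of $\Omega$. Since $|v_F|^2$ is a priori analytic only on $\Omega$, its zero set may consist of infinitely many branches accumulating at a point of $\partial\Omega\subset\bigcup_i S_i$ (compare $\Im e^{1/z}$ on $\CC\setminus\{0\}$, whose zero set accumulates at the origin); in that case the union of the two families is not locally finite and the theorem fails as stated. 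Kato's one-parameter analytic perturbation theory does not rescue this: for two real parameters, eigenprojections of an analytic self-adjoint family need not admit any continuous, let alone analytic, selection near a point where eigenvalues cross (the paper's Example with $A(x_1,x_2)=\begin{pmatrix}x_1^2&x_1x_2\\ x_1x_2&x_2^2\end{pmatrix}$ makes this explicit), so the ``discrete symmetries'' cannot guarantee that the protected subspaces, or $v_F$, extend analytically across $\RR^2\setminus\Omega$. The paper closes this gap with the Kurdyka--Paunescu multiparameter perturbation theorem: after finitely many blowups of the parameter plane near each bad point, the eigenfunctions (hence $v_F$) become real-analytic on the blowup space; one then applies \L ojasiewicz there and checks that the blowdown of the resulting zero set is still a locally finite family of points and analytic curves. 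Without this (or some substitute controlling the zero set of $v_F$ near $\bigcup_i S_i$), the proof is incomplete.
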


\begin{figure}[ht]
    \includegraphics[width=8cm,height=7cm]{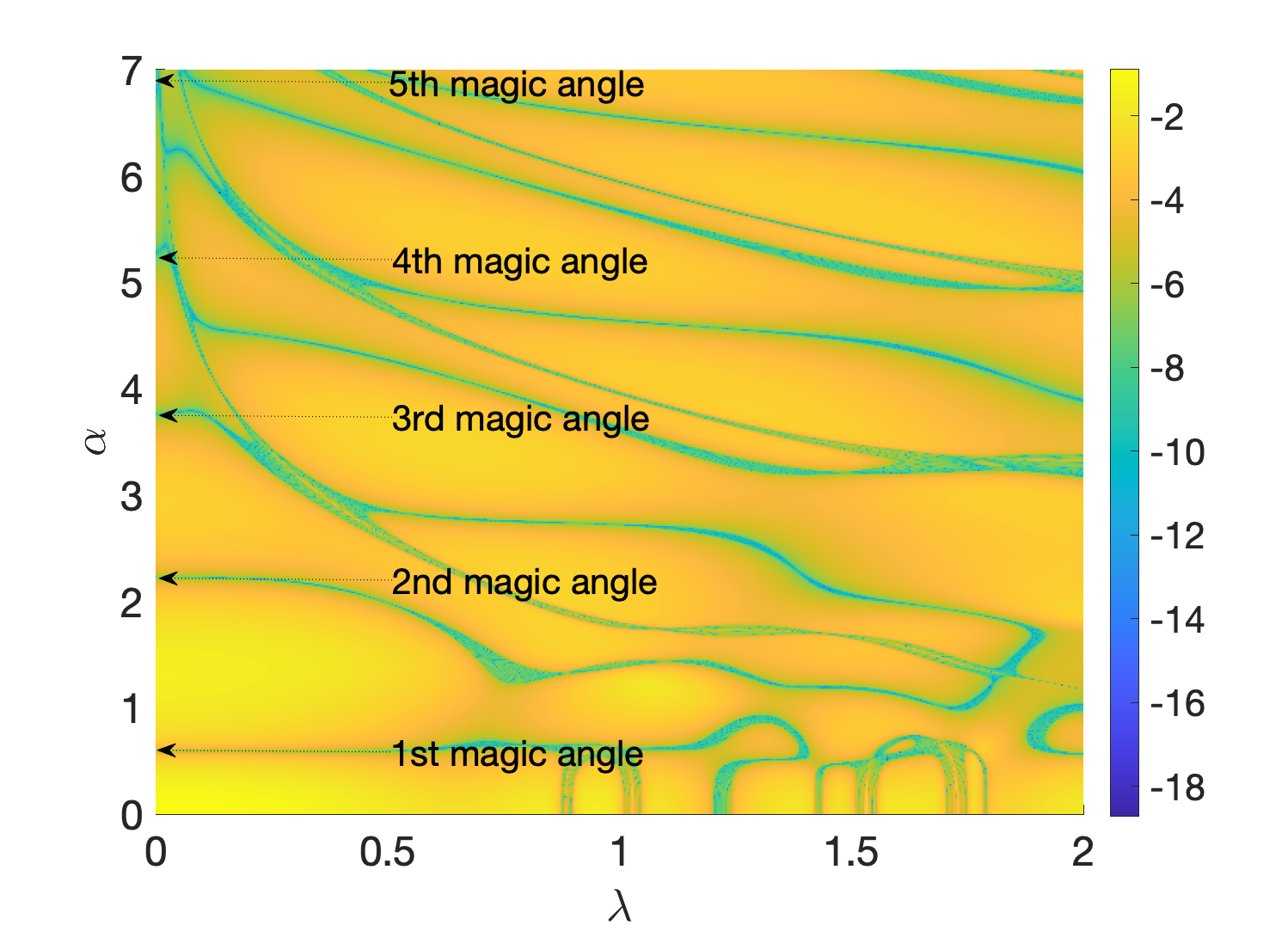}
    \includegraphics[width=8cm,height=7cm]{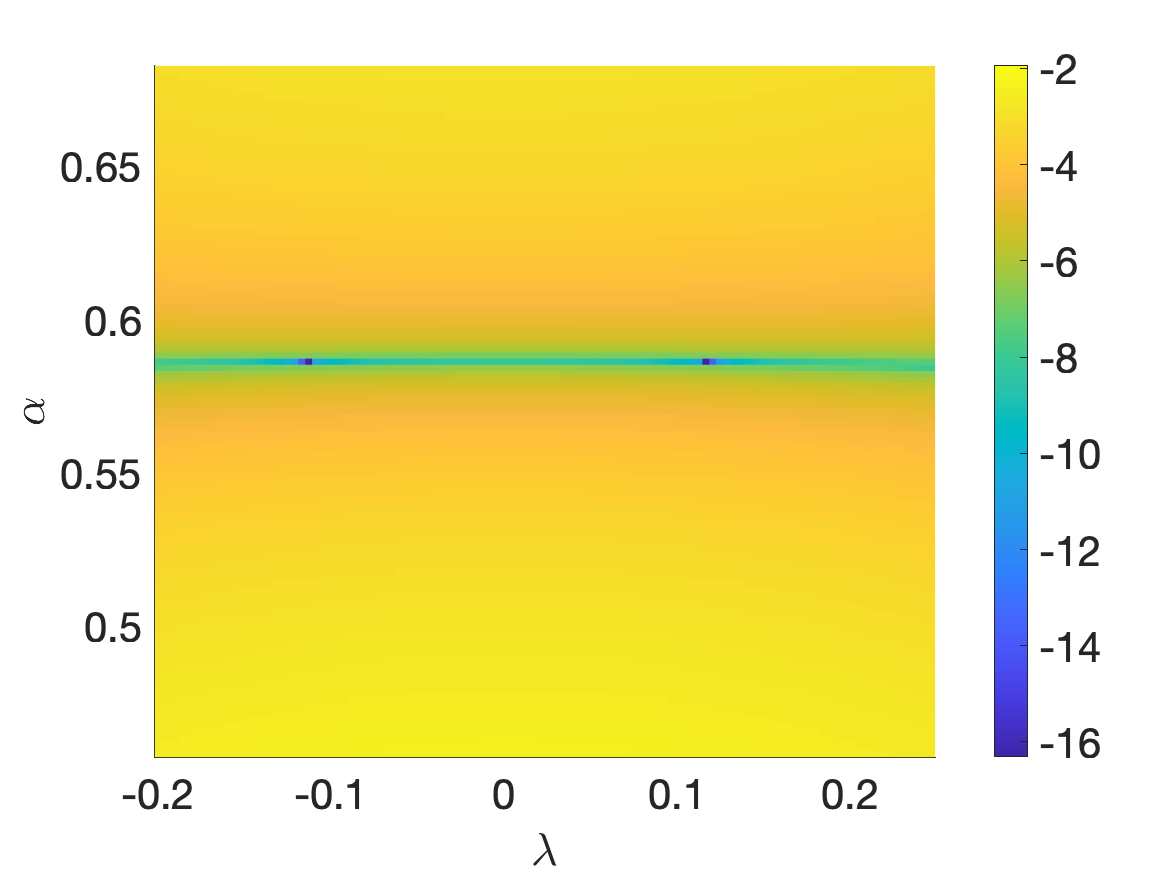}
\caption{ Logarithm of Fermi velocity  $(\alpha, \lambda)\mapsto \log |v_F(\alpha,\lambda)|$. Yellow colors indicate the existence of Dirac cones for the correspoding parameter $(\alpha,\lambda)\in\RR^2$, whereas blue colors indicate the absence of Dirac cones or extremely small Fermi velocities $v_F(\alpha,\lambda)$. The second picture zooms in near the first magic angle $(\alpha,\lambda)\sim (0.588,0)$ of the chiral limit illustrating Theorem \ref{thm:generic}.}
\label{fig:fermi}
\end{figure}

Next we show that there exist real parameters $(\alpha,\lambda)\in\RR^2$ at which the Dirac cones disappear, as suggested by Figure \ref{fig:fermi} numerically. 

Recall that in the chiral limit of the BM Hamiltonian $H(\alpha,0)$ where $\lambda = 0$ in \eqref{eq:defBM}, it has been proven (cf.~\cite{magic,beta,bhz2}) that there exists a discrete set $\mathcal{A}\subset \CC$ such that 
\[\alpha\in \mathcal{A} \iff E_{\pm 1} ( \alpha, 0; k ) = 0 \ \  \forall k\in \CC, \]
which means that there is a flat band at zero energy for $\alpha\in \mathcal{A}$. If $E_2(\alpha,0,k)$ does not vanish for all $k$ as well, then $\alpha$ is called \emph{simple}. The existence of the first real magic angle $\alpha\in \mathcal{A}\cap \RR$ is shown in \cite{walu,bhz1} for the BM potential \eqref{eq:defUV} used in the physics literature \cite{BM11}. It is shown in \cite[Appendix]{Z23} and \cite{magic} that $(\alpha,0)$ is non-magic in the sense of Definition \ref{def:cone} if and only if $\alpha\notin \mathcal{A}$. This means that for the chiral Hamiltonian $H(\alpha,0)$, the Dirac cone at $k=0$ disappears precisely for magic $\alpha \in \mathcal{A}$ at which the Hamiltonian exhibits a flat band. The next theorem generalizes this result to the BM Hamiltonian \eqref{eq:defBM}. The theorem shows that magic angles of the chiral model, persist in the full BM Hamiltonian. This is illustrated in Figure \ref{fig:Magic_angles}.
\begin{theo}[Persistence of chiral magic angles]
\label{thm:line}
  Assume $\partial_{\alpha} v_F(\alpha_0,0) \neq 0$ (cf.\,Proposition \ref{prop:real}) for some $\alpha_0\in\mathcal{A} \cap \mathbb R$ simple. There is $\epsilon>0$ and a real valued real analytic function $\lambda \mapsto \alpha(\lambda)$ for $\vert \lambda\vert < \epsilon$ with $\alpha(0)=\alpha_0$ and $\alpha'(0)=0$ such that $(\alpha,\lambda)$ is magic along the curve $ (\alpha(\lambda),\lambda)$  for $\lambda\in (-\epsilon,\epsilon)$.
  
 On the other hand, if $\alpha_0\notin\mathcal{A}$, then there exists a neighborhood $U_{\alpha_0}\subset \RR^2 $ of $(\alpha_0,0)$ such that $(\alpha,\lambda)$ is non-magic on $U_{\alpha_0}$.
\end{theo}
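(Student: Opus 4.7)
The two halves of the statement are handled independently. For the second (non-magic) part, the intrinsic chiral-limit characterization recalled just before the theorem gives $\alpha_0\notin\mathcal A \Longleftrightarrow v_F(\alpha_0,0)\neq 0$; Proposition \ref{prop:real} produces a real-analytic, real-valued extension of $v_F(\alpha,\lambda)$ to a neighborhood of $(\alpha_0,0)\in\RR^2$. (One expects this analyticity to rest on the observation that the protected pair $\{\varphi_0(\alpha,\lambda),\psi_0(\alpha,\lambda)\}$ from \eqref{eq:protect} keeps $\dim\ker_{L^2_0}H_0(\alpha,\lambda)\equiv 2$ nearby, so that $v_F$ can be read off analytically from the $2\times 2$ effective dispersion matrix on that kernel.) Continuity then supplies a neighborhood $U_{\alpha_0}$ of $(\alpha_0,0)$ on which $v_F\neq 0$, and Definition \ref{def:cone} yields non-magicity throughout $U_{\alpha_0}$.

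For the first (magic) part, $\alpha_0\in\mathcal A\cap\RR$ simple forces $v_F(\alpha_0,0)=0$. The hypothesis $\partial_\alpha v_F(\alpha_0,0)\neq 0$ is exactly the non-degeneracy required to apply the real-analytic implicit function theorem to the scalar equation $v_F(\alpha,\lambda)=0$; this produces a unique real-analytic function $\alpha:(-\epsilon,\epsilon)\to\RR$ with $\alpha(0)=\alpha_0$ and $v_F(\alpha(\lambda),\lambda)\equiv 0$, so the curve $(\alpha(\lambda),\lambda)$ is magic by construction. Differentiating this identity at $\lambda=0$ gives
\[
  \alpha'(0)=-\frac{\partial_\lambda v_F(\alpha_0,0)}{\partial_\alpha v_F(\alpha_0,0)},
\]
and reduces the tangency assertion $\alpha'(0)=0$ to the single identity $\partial_\lambda v_F(\alpha_0,0)=0$.

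My preferred route to this identity is to exhibit a discrete (possibly antiunitary) involution $\mathscr U$ on $L^2(\CC;\CC^4)$ intertwining the Floquet fibration with $\mathscr U H(\alpha,\lambda)\mathscr U^{-1}=H(\alpha,-\lambda)$; natural candidates are assembled from Pauli-type matrices on the four-component spinor index combined with the complex-conjugation and reflection symmetries obeyed by $U$ and $V$ in \eqref{eq:defBM}. Any such involution forces $\Spec H_k(\alpha,\lambda)=\Spec H_k(\alpha,-\lambda)$ for all $k$ near $0$, so that $v_F(\alpha,\cdot)$ is even in $\lambda$ and real-analyticity supplies the identity. If no clean global intertwiner is available, one computes $\partial_\lambda v_F(\alpha_0,0)$ directly by first-order Rayleigh--Schr\"odinger perturbation on $\mathrm{span}\{\varphi_0,\psi_0\}$: expanding the effective $2\times 2$ dispersion matrix jointly in $(k,\lambda)$ reduces $\partial_\lambda v_F(\alpha_0,0)$ to matrix elements of $C$ between the protected states, each of which should vanish by the $C_3$ rotational and mirror symmetries inherited from \cite{bz23}. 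The main obstacle is precisely locating and bookkeeping this $\lambda$-parity; the implicit-function step, the continuity step, and the extraction of $\alpha'(0)$ from $\partial_\lambda v_F=0$ are routine once the parity is in place.
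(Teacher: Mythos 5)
Your skeleton is the same as the paper's: real analyticity and real-valuedness of $v_F$ near $(\alpha_0,0)$ (via Proposition \ref{prop:real} and the openness of the condition $\dim\ker_{L^2_0}H_0(\alpha,\lambda)=2$ at a simple magic angle), the implicit function theorem applied to $v_F(\alpha,\lambda)=0$ using $\partial_\alpha v_F(\alpha_0,0)\neq 0$, the reduction of $\alpha'(0)=0$ to the single identity $\partial_\lambda v_F(\alpha_0,0)=0$, and continuity of $v_F$ and of the kernel dimension for the converse half. All of that is fine and is exactly how the paper proceeds.

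The genuine gap is that you do not actually prove $\partial_\lambda v_F(\alpha_0,0)=0$, and neither of your two proposed routes closes it as stated. For route (a): you do not exhibit the intertwiner $\mathscr U H(\alpha,\lambda)\mathscr U^{-1}=H(\alpha,-\lambda)$, and the natural candidates fail --- e.g.\ conjugation by $\diag(\sigma_3,\sigma_3)$ sends $H(\alpha,\lambda)$ to $H(-\alpha,-\lambda)$, not $H(\alpha,-\lambda)$, and lattice translations only produce cube-root-of-unity phases on $U$, never a sign flip. Consistently, the paper's expansion \eqref{eq:taylor} retains the term $c_{11}\alpha'\lambda$, which is odd in $\lambda$, so no such global parity is claimed. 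Moreover, even if one had $\Spec H_k(\alpha,\lambda)=\Spec H_{k'}(\alpha,-\lambda)$ with $|k'|=|k|$, this would only make $|v_F|$ even in $\lambda$; for the signed analytic $v_F$ this leaves open the possibility that $v_F(\alpha_0,\cdot)$ is \emph{odd} in $\lambda$, in which case $\partial_\lambda v_F(\alpha_0,0)$ need not vanish. For route (b): the framework (first-order perturbation of the protected states) is the right one, but the mechanism is not vanishing of matrix elements of $C$ by $C_3$ or mirror symmetry. What actually happens (this is the content of the paper's Lemma preceding the theorem) is the following. Writing $\varphi(\alpha_0,\lambda)=(u_0+\lambda u_{0,1}+\cdots,\ \lambda v_{0,1}+\cdots)$ and $\psi(\alpha_0,\lambda)=(\lambda\tu_{0,1}+\cdots,\ \tv_0+\lambda\tv_{0,1}+\cdots)$, the first-order equations from $H_0(\alpha_0,\lambda)\varphi=H_0(\alpha_0,\lambda)\psi=0$ decouple because the perturbation $\diag(C,C)$ is off-diagonal with respect to the chiral grading of the protected states: one gets $D(\alpha_0)u_{0,1}=0$ and $D(\alpha_0)^*\tv_{0,1}=0$ (the $C$-terms only source the \emph{other} components $v_{0,1},\tu_{0,1}$). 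Simplicity of $\alpha_0$ then forces $u_{0,1}\in\CC u_0$ and $\tv_{0,1}\in\CC\tv_0$, so $c_{01}=\langle\tv_0,u_{0,1}\rangle+\langle\tv_{0,1},u_0\rangle$ is a combination of $\langle u_0,\tv_0\rangle$, which vanishes precisely because $\alpha_0$ is magic in the chiral limit. Without this computation (or an equivalent one) the tangency $\alpha'(0)=0$ --- the only part of the first assertion that goes beyond a bare implicit function theorem --- is unproved.
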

\begin{rem}
    For the standard BM potential \eqref{eq:defUV}, the condition $\partial_{\alpha} v_F(\alpha_0,0) \neq 0$ in Theorem \ref{thm:line} is numerically checked for the first six magic angles (see Table \ref{tab:c10}). 
\end{rem}

\begin{figure}[ht]
    \centering
    \includegraphics[width=8cm]{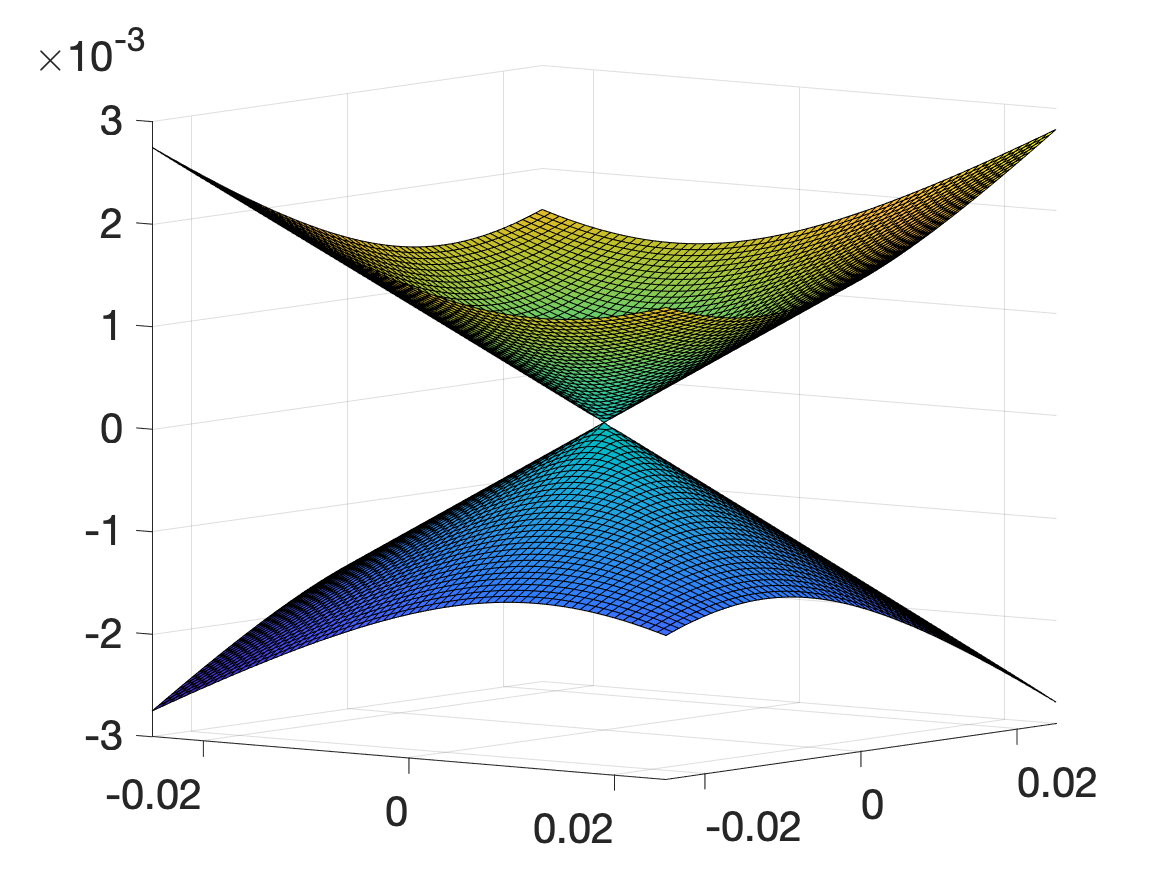}
    \includegraphics[width=8cm]{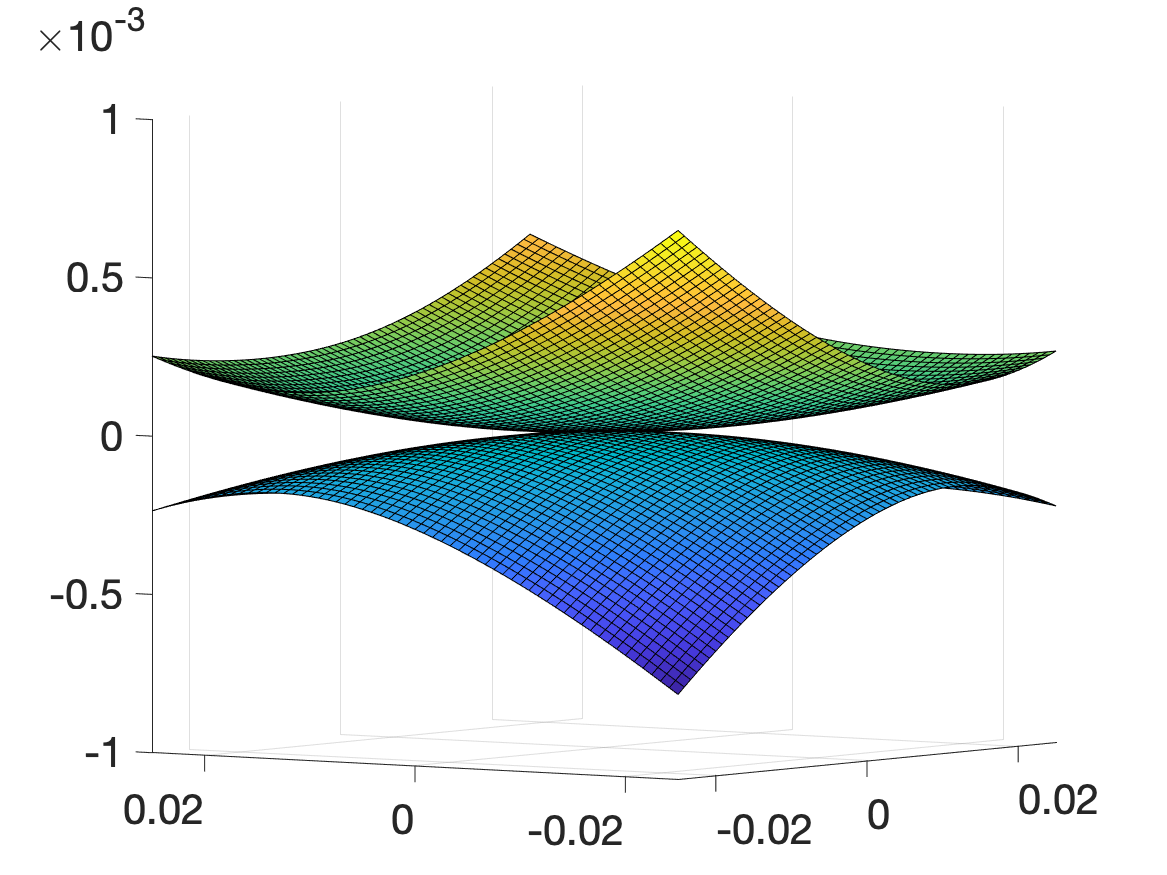}
    \caption{Non-magic band crossing for $\alpha=0.7$ and $\lambda=0.3$ with Dirac cone (left). Magic band crossing for $(\alpha,\lambda) = (0.62, 0.95)$, which is located on the blue curve emerging from the first magic angle in Figure \ref{fig:fermi} (right).}
    \label{fig:Magic_angles}
\end{figure}
In Section \ref{sec:topo}, we study band touching near magic parameters. We show that when transitioning from conic singularities to quadratic band touching at Dirac points $-K,0$, the change of winding number implies the existence of other band crossings at points other than $-K,0$ points as shown in Figure \ref{f:touching}. In fact, we prove the following
\begin{theo}
    For magic parameters $(\alpha,\lambda)$ near simple chiral magic angles $\alpha_0\in \mathcal{A}$ and $\lambda =0$, suppose the first two bands $E_{\pm 1}(\alpha,\lambda,k)$ are isolated from other bands and exhibit quadratic band crossing in the sense that near $k=0$
    \begin{equation*}
    E_{\pm 1}(\alpha,\lambda,k) = c_{\pm}(\alpha,\lambda)|k|^2 + \mathcal{O}(|k|^3), \quad  c_{\pm}(\alpha,\lambda)\neq 0. 
    \end{equation*}
    Then the two bands touch at some additional points other than $k = 0,-K$.
\end{theo}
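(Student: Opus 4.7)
The plan is to exploit a topological invariant---the Euler number of the pair of bands $\{E_{\pm 1}\}$ viewed as an oriented real rank-$2$ vector bundle over the Brillouin torus---whose value is preserved under continuous deformation of $(\alpha,\lambda)$ as long as the pair stays gapped from the remaining bands. The strategy is to compare the local contributions to this invariant at the high-symmetry points $k=0,-K$ in the non-magic regime (simple Dirac cones) and in the magic regime (quadratic band touchings), and to show that the difference cannot be reconciled unless additional band crossings exist.

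First, I would use the isolation hypothesis together with the antiunitary ($PT$-type) symmetry of $H(\alpha,\lambda)$, which squares to $+1$, to equip the subbundle $\mathcal E(\alpha,\lambda)$ spanned by $E_{\pm 1}$ with a real structure over the complement of the touching locus $\Sigma(\alpha,\lambda)\subset \CC/\Lambda^*$. In this setting one has a well-defined Euler class, and its local contribution at each point of $\Sigma$ is computed as the winding number of the effective $2\times 2$ Hamiltonian obtained via degenerate Rayleigh--Schr\"odinger perturbation around that point. A simple Dirac cone contributes $\pm 1$; a genuine quadratic band touching contributes $\pm 2$. The three-fold rotation and the $\mathscr S, \mathscr M$ symmetries (see the remark after Definition \ref{def:cone}) constrain the signs of the local contributions at $k=0$ and $k=-K$ to coincide, so the two contributions add rather than cancel.

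Second, I would use Theorem \ref{thm:line} and analyticity of the Bloch eigenvalues in $k$ to connect the magic parameter $(\alpha(\lambda),\lambda)$ to nearby non-magic parameters $(\alpha,\lambda)\in U_{\alpha_0}$ from Theorem \ref{thm:generic} by a continuous path along which $E_{\pm 1}$ remain separated from the rest of the spectrum. Along such a path, the total winding summed over $\Sigma$ is conserved, because new degeneracies can only appear or disappear in pairs of opposite local winding. By Theorem \ref{thm:generic}, the non-magic endpoint has $\Sigma = \{0,-K\}$ with total winding $\pm 2$, while the magic endpoint---assuming for contradiction that $\Sigma = \{0,-K\}$ there too---would have total winding $\pm 4$ from the quadratic touchings. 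This discrepancy forces $\Sigma(\alpha(\lambda),\lambda) \setminus \{0,-K\} \neq \emptyset$, which is the claim.

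The main obstacle I expect is step one: verifying that the hypothesis $c_\pm(\alpha,\lambda)\neq 0$ in the theorem genuinely yields a quadratic touching of winding $\pm 2$ rather than a topologically trivial touching of winding $0$ (as would occur for two parabolas meeting without hybridization). This requires extracting the effective $2\times 2$ Hamiltonian by projecting onto the protected states $\varphi_0,\psi_0$ from \eqref{eq:protect}, expanding to second order in $k$, and showing via the three-fold rotational symmetry and the antiunitary symmetry that the resulting symbol is of the form $a(k_1^2-k_2^2)\sigma_1 + 2a k_1 k_2 \sigma_3$ (or a unitary conjugate thereof) with $a\neq 0$, which is precisely a winding-$2$ object. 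A secondary subtlety is making the invariance of the total Euler number along the interpolating path fully rigorous when degeneracies are moving---this should follow from the analytic-in-$k$ structure of $H_k(\alpha,\lambda)$ and standard arguments from the theory of topological phases, but the bookkeeping across the non-magic to magic transition must be handled carefully.
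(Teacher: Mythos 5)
Your overall strategy coincides with the paper's: realize the two bands as an oriented rank-two real bundle $\mathcal{E}$ over $\CC/\Lambda^*$ via the $\mathcal{PT}$ (real) structure, use the deformation invariance of its Euler number while the bands stay gapped from the rest, and compare that invariant with the sum of local winding numbers at the touching points. However, there is a genuine gap in how you anchor the value of the conserved invariant. You propose to evaluate it at a nearby \emph{non-magic} parameter by asserting that the touching set there is exactly $\{0,-K\}$, so that the total winding is the sum of two Dirac-cone contributions. Theorem \ref{thm:generic} gives a simple Dirac cone \emph{at} $k=0$ (and at $-K$ by symmetry); it says nothing about degeneracies of $E_{\pm1}$ at other quasimomenta, and indeed Figure \ref{f:touching} shows additional band intersections appearing at non-magic parameters as one approaches the magic curve. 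Without controlling those, your count of the invariant at the non-magic endpoint is unjustified (and possibly wrong for the very parameters you need). The paper avoids this entirely by computing the Euler number once and for all at the chiral flat band $(\alpha_0,0)$, where the bundle is explicitly the holomorphic line bundle of flat-band states with $c_1(\mathcal{E})=e_2(\mathcal{E})=-1$ by \cite[Theorem 4]{bhz2}; deformation invariance then fixes $e_2=-1$ throughout the gapped neighborhood, with no need to enumerate touching points anywhere.

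A second, more minor point: the step you single out as the ``main obstacle''---showing that $c_\pm\neq0$ forces a winding-$\pm2$ (in your normalization) quadratic touching rather than a winding-$0$ one---is neither achievable in general nor necessary. In the effective symbol $a|k|^2 I+\Re(bk^2)\sigma_1+\Im(bk^2)\sigma_2$ the hypothesis $c_\pm=a\pm|b|\neq0$ is perfectly compatible with $b=0$, which is exactly the trivial-winding case; so you cannot hope to prove the symbol always has the $a(k_1^2-k_2^2)\sigma_1+2ak_1k_2\sigma_3$ form. But the argument does not need this: in the paper's $SO(2)$ normalization each high-symmetry point contributes $+1$ (if $b\neq0$) or $0$ (if $b=0$), so the two points contribute $+2$ or $0$ in total, and neither equals $e_2=-1$; the residual winding $-3$ or $-1$ must be carried by additional touching points. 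Reformulating your proof with the flat-band anchor and allowing both local windings would close the gap.
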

\begin{rem}
    The assumption of the theorem can be weaken: it holds as long as there exists a path $\gamma$ in $\RR^2$ connecting $(\alpha_0,0)$ and $(\alpha,\lambda)$ such that the first two bands $E_{\pm 1}(\alpha,\lambda,k)$ are gapped from the rest of bands along $\gamma$. See Section \ref{sec:topo} for more details. 
\end{rem}
To understand the transition from conical intersection in the band structure to quadratic ones, we need to develop the study of topological phases for our model. The mechanism behind this involves a winding number known as the Euler number which classifies the triviality of real vector bundles of rank two. The reality condition appears in this model due to the $C_{2z}T$ symmetry. 

Here we only discuss the BM model \eqref{eq:defBM} but our argument can be generalized to studying band touchings of two-band systems gapped from the rest of the bands. 

See also \cite{cw24} for discussions of the splitting of quadratic band touchings into Dirac cones for Schr\"odinger operators and \cite{dynip} for the formation and splitting of quadratic band touchings by in-plane fields in twisted bilayer graphene.

The analysis of conical intersections in band structures has received a considerable attention and has been studied for honeycomb lattice structures \cite{fefferman2012honeycomb}, describing materials such as graphene, as well as optical and acoustic analogues \cite{am20}. The first results on the dispersion relation of tight-binding models for honeycomb structures have been obtained more than half a century ago (see~\cite{Wal_pr47,SloWei_pr58}). A mathematical model of honeycomb quantum graphs with even potential on edges for graphene is studied by Kuchment--Post \cite{KucPos_cmp07}. The Schr\"odinger operator $-\Delta + V$ with smooth real valued potentials $V$ exhibiting honeycomb symmetry has been studied by Grushin \cite{Gru_mn09} for small potentials using perturbation theory method. For a generic set of smooth potentials, the existence of Dirac cones has been established by Fefferman--Weinstein \cite{fefferman2012honeycomb, FefLeeWei_prep16} and later by Berkolaiko--Comech \cite{bc18}. This is generalized to potentials with singularities at honeycomb lattice points in the work of Lee \cite{lee}. For generalizations to different class of elliptic operators defined on the honeycomb lattice, see \cite{cw21,lwz,llz,ca}. See also the survey by Kuchment \cite{ku}. The omnipresence of conical singularities in the context of topological insulators has been analyzed by Drouot \cite{top}. For twisted bilayer graphene, in the chiral limit of the BM Hamiltonian, the existence of Dirac cone has been proved by some of the authors of this paper in \cite[Appendix]{Z23} and in \cite{magic}.

\subsection*{Structure of the paper}
\begin{itemize}
    \item In Section \ref{sec:sym}, we review properties of the BM Hamiltonian which includes a basic derivation and symmetries of the BM Hamiltonian and the existence of protected states. 
    \item In Section \ref{sec:cone}, we prove the generic existence of Dirac cones. 
    \item In Section \ref{sec:magic}, we extend the notion of magic angles, show the stability of magic angles obtained from the chiral limit to the general BM Hamiltonian, and prove the existence of magic angles. 
    \item In Section \ref{sec:topo}, we discuss Euler numbers and winding numbers of bands and use them to study the behavior and crossing of bands near magic parameters.
\end{itemize}
\subsection*{Acknowledgements}
The authors are very grateful to Maciej Zworski for many helpful discussions and proposing this project. The authors would also like to thank Gregory Berkolaiko, Patrick Ledwith, Qiuyu Ren and Oskar Vafek for helpful discussions. Their insights were key to the development of the paper. We would like to thank Jens Wittsten for allowing us to use the left figure in Figure \ref{fig:tunneling} that he created. ZT and MY were partially supported by the NSF grant DMS-1952939 and by the Simons Targeted Grant Award No. 896630. AW's research was supported by the NSF grant DMS-2406981. SB acknowledges support by the SNF Grant PZ00P2 216019.

\section{Symmetries of the Hamiltonian and protected states}
\label{sec:sym}
In this section we briefly review the physical origin, symmetries, and Bloch-Floquet theory of the Hamiltonian \eqref{eq:defBM}.
\subsection{BM Hamiltonian}
\label{sec:bm}

\begin{figure}[ht]
    \centering
    \includegraphics[width=6.5cm]{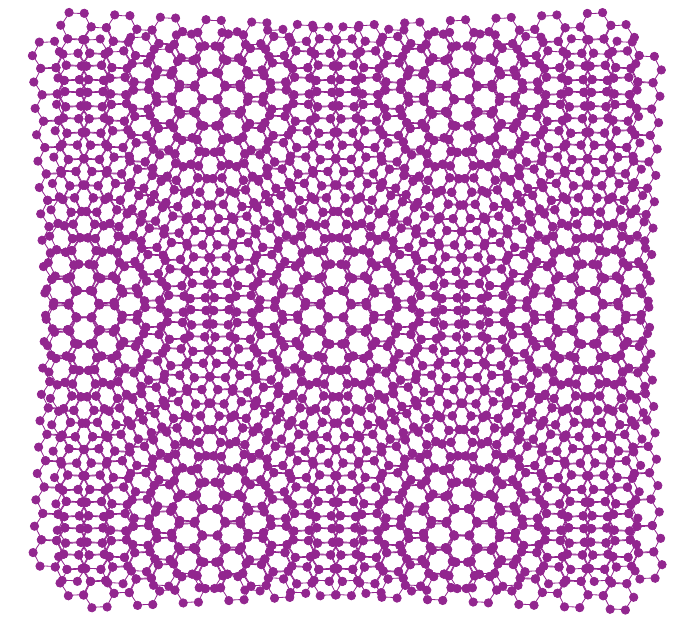}
    \includegraphics[width=7cm]{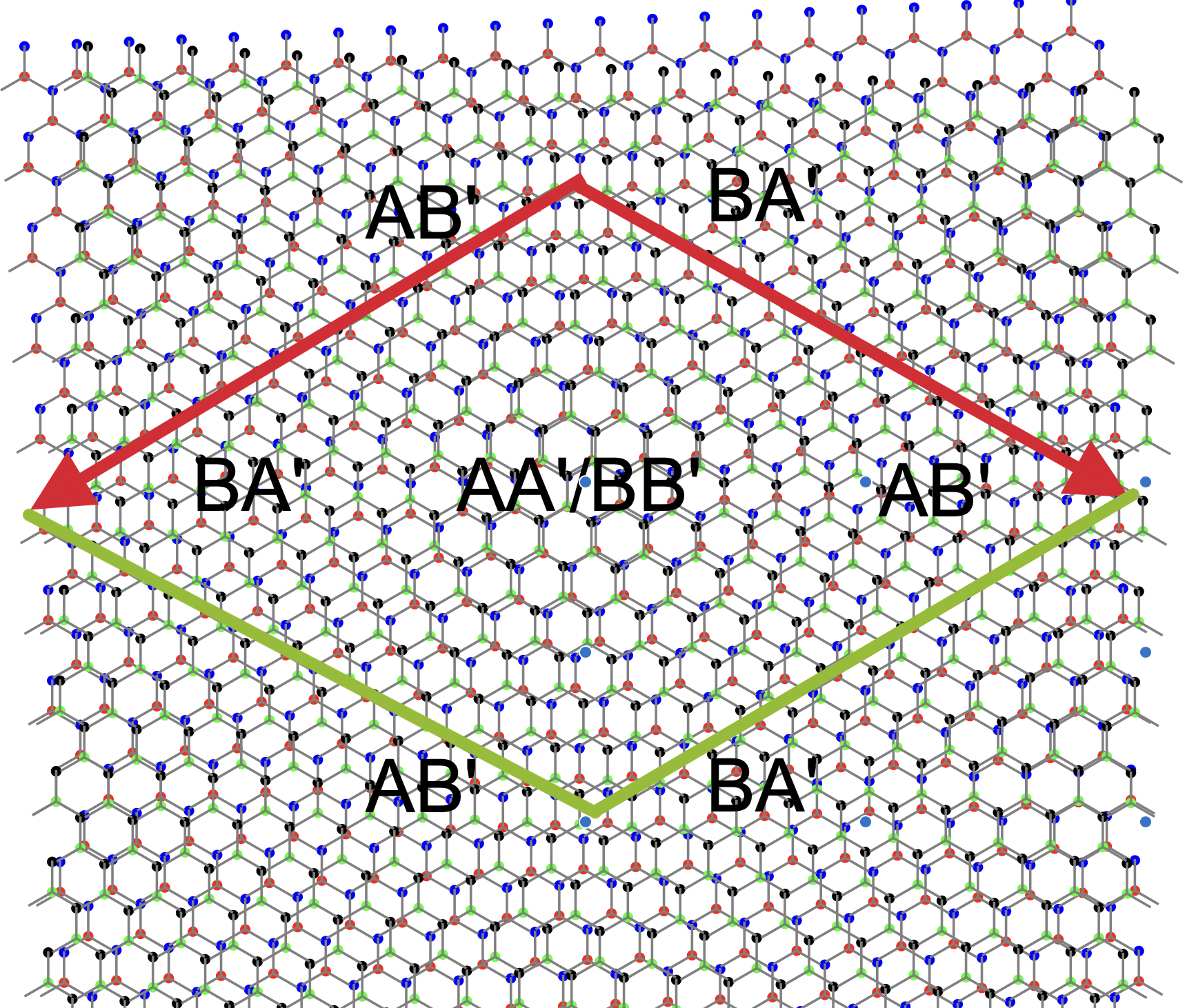}
    \caption{Left: Moir\'e patterns in TBG (courtesy of Jens Wittsten). Right: A moir\'e fundamental cell with regions of different ($AA', BB', AB'$... ) particle-type overlaps.}
    \label{fig:tunneling}
\end{figure}

Twisted bilayer graphene consists of two stacked graphene layers with a relative interlayer twist. The atomic structure of each graphene layer is the union of two offset Bravais lattices of carbon atoms, referred to as the $A$ and $B$ sublattices. The single-particle electronic properties of twisted bilayer graphene can be modeled by the following Bistritzer-MacDonald Hamiltonian $\mathcal{H} : L^2(\mathbb{R}^2;\mathbb{C}^4) \rightarrow L^2(\mathbb{R}^2;\mathbb{C}^4)$,
 \begin{equation}
 \label{eq:mathscrH}
 \mathcal H(\beta,\zeta)  = \begin{pmatrix} -i\boldsymbol{\sigma}\cdot\nabla & T(\beta,\zeta \mathbf r)  \\  
T(\beta,\zeta \mathbf r ) ^\dag & -i\boldsymbol{\sigma}\cdot \nabla \end{pmatrix},
\end{equation}
acting on functions $\psi(\boldsymbol{r}) = (\psi_1^A,\psi_1^B,\psi_2^A,\psi_2^B)^\top$, $\boldsymbol{r} = (x,y)$, where $|\psi_i^\sigma|^2$ represents electronic probability densities on layer $i$ and sublattice $\sigma$. 

In the diagonal (intralayer) terms, we have $\boldsymbol{\sigma} = (\sigma_x,\sigma_y)$, where $\sigma_\bullet $ are the Pauli matrices, and $\boldsymbol{\sigma}\cdot\nabla = \sigma_x \partial_x + \sigma_y \partial_y$. These terms capture the effective Dirac dispersion for electrons in monolayer graphene near to the Fermi level. The off-diagonal terms describe interlayer tunnelling, which is approximated by the interlayer tunnelling matrix potential
\[
    T(\beta,\zeta \mathbf r) = \begin{pmatrix}\beta_{AA} V(\zeta \mathbf r) &\beta_{AB} \overline{U(-\zeta \mathbf r)} \\  \beta_{BA} U(\zeta \mathbf r) & \beta_{AA} V(\zeta \mathbf r) \end{pmatrix}.
\]
Here, $\beta = (\beta_{AA},\beta_{AB})$ are the tunnelling potential amplitudes between like and unlike sublattices, $\zeta$ is the relative twist angle (here we assume $\zeta$ is small so that we can use the approximation $\sin(\zeta) \approx \zeta$), and $U$ and $V$ are given by

\[ 
\begin{gathered} 
    U(\mathbf r) = \sum_{i=0}^2 \omega^i e^{-i q_i\cdot \mathbf r}, \ \ 
    V(\mathbf r) = \sum_{i=0}^2  e^{-i q_i\cdot \mathbf r}, \ \ 
    q_i : = R^i  \begin{pmatrix} \ \ 0 \\ -1\end{pmatrix} , \ \  
    R := \tfrac{1}{2}\begin{pmatrix} -1 & -\sqrt{3} \\ \sqrt{3} & -1 \end{pmatrix},
\end{gathered} 
\]
where $\omega = e^{2 \pi i/3}$. The model \eqref{eq:mathscrH} was introduced in \cite{BM11}; for its mathematical justification and detailed discussion of the various approximations involved in deriving \eqref{eq:mathscrH}, see \cite{Wa22,Ko24,Q24,Ca23}. If we denote the lattice of periodicity of the Hamiltonian \eqref{eq:mathscrH} by $\Gamma$, the Hamiltonian \eqref{eq:mathscrH} commutes with translations in the moir\'e lattice $\frac{1}{3}\Gamma$ up to cubic root of unity phases; see e.g. \cite{Wa22} for details.

    The Hamiltonian \eqref{eq:defBM} is obtained from \eqref{eq:mathscrH} as follows. Let $K = \operatorname{diag}(1,\sigma_x,1)$. Then, upon making the change of variables $\zeta \mathbf r \mapsto \mathbf r $, $\lambda = \beta_{AA} /\zeta$, and $\alpha = \beta_{AB} /\zeta$, we find
\[ \zeta^{-1} K  \mathcal H(\beta,\zeta) K = \begin{pmatrix} \lambda C & D(\alpha)^{*} \\  D(\alpha) & \lambda C \end{pmatrix},\]
where 
\[\begin{split} 
    D(\alpha) &= \begin{pmatrix}D_{x} + i D_{y} & \alpha U(\mathbf r) \\ \alpha U(-\mathbf r) & D_{x}+i D_{y}\end{pmatrix} \text{ and }
C =\begin{pmatrix}0 & V(\mathbf r) \\ V(-\mathbf r) & 0 \end{pmatrix}, 
\end{split}\] 
which now acts on functions $\psi(\boldsymbol{r}) = (\psi_1^A,\psi_2^A,\psi_1^B,\psi_2^B)^\top$. Changing coordinates from $\boldsymbol{r} = (x,y)$ to $z$ so that $x + i y = \frac{4}{3} \pi i z$ we obtain \eqref{eq:defBM}, where $D(\alpha)$ and $C$ are given by (we abuse notation to write functions of $\boldsymbol{r}$ and $z$ by the same letters)
\begin{equation}
    \label{eq:defCC}
    D(\alpha)=\begin{pmatrix} 2D_{\bar z} & \alpha U(z)\\ \alpha U(-z) & 2D_{\bar z} \end{pmatrix} \text{ and }
    C = \begin{pmatrix} 0 & V ( z )  \\
V  ( -z ) & 0 \end{pmatrix},
\end{equation}
where
\begin{equation}
\label{eq:defUV} 
U(z) = \sum_{k=0}^2 \omega^k e^{\frac{1}{2}(z \bar \omega^k - \bar z \omega^k)} \text{ and }
V(z) = \sum_{k=0}^2 e^{\frac{1}{2}(z \bar \omega^k - \bar z \omega^k)}. 
\end{equation}
The moir\'e lattice in these coordinates is $\Lambda = a_1 \ZZ +a_2 \ZZ ,$ where $a_{\ell} = \omega^{\ell-1}$. This lattice and its associated reciprocal lattice $\Lambda^*$ are related to those of the physical space moir\'e by 
\[ \tfrac{1}{3}\Gamma=\tfrac{4}{3}\pi i \Lambda, \quad 3\Gamma^*=\frac{3}{4\pi i }\Lambda^*. \]

As discussed above, throughout this work we make the more general assumption that $U$ and $V$ are arbitrary smooth functions satisfying the symmetries 
\begin{equation}
\begin{gathered}
 \label{eq:UV}  
 \overline{U(z)}=U(\bar z), \ \ U(\omega z)=\omega U(z), \ \ 
 U ( z + \gamma ) = \bar{ \omega}^{\gamma_1+\gamma_2} U ( z) \\
 V (  z ) = V ( \bar z ) = \overline{ V ( - z ) } ,  \ \ V ( \omega z ) = V ( z ) , 
 \ \ V ( z + \gamma ) = \bar{ \omega}^{\gamma_1+\gamma_2} V ( z), 
 \end{gathered}
\end{equation}
with $\gamma= \gamma_1 a_1+ \gamma_2 a_2 \in \Lambda.$ Since our proofs rely only on symmetries of the model \eqref{eq:defBM}, this generalization does not complicate our proofs, but we use \eqref{eq:defUV} for our numerical computations for simplicity. 

    \begin{rem}
The assumption \eqref{eq:defUV} on the potentials $U,V$ amounts to an approximation where momentum-space hopping is truncated to nearest-neighbors in the moir\'e reciprocal lattice. This is generally an excellent approximation since the non-zero spacing between the graphene sheets causes the magnitude of momentum-space hops to decay exponentially with distance \cite{Wa22,BM11}. Note that mechanical relaxation \cite{Ca18,Ca20} complicates this picture by enhancing the strength of longer-range momentum hops; see e.g. \cite{Ma23}.
    \end{rem}

    \begin{rem}
    It is natural to ask whether our results could be generalized beyond the other simplifying approximations implicit in the model \eqref{eq:mathscrH}. One of these approximations is neglecting the rotation of the monolayer Dirac cones, see, e.g.~\cite[equation (8)]{BM11}. Another is the neglect of derivative terms in the interlayer tunneling, see e.g. \cite{Ca19}. Since these terms preserve the translation and rotation symmetries of the model, and the $\mathcal{P} \mathcal{T}$ symmetry defined below, all of our results should apply to models including these terms. However, we do not consider these generalizations in the present work because the statement of our results would become much more complicated. This is because these terms break the particle-hole symmetry $\mathscr{S}$ defined below, so that there is no need for Dirac cones to occur at energy $0$. Our results would then have to allow for Dirac cones to occur in a suitable range of energies.
    \end{rem}

\subsection{Symmetries revisited}
\label{sec:symmetry}
We start by recalling the basic translational and rotational symmetries of \eqref{eq:defBM}. The modified translation operator $ \mathscr L_\gamma,$ defined in \eqref{eq:FL_ev}, 
commutes with the Hamiltonian $ H( \alpha,\lambda ) $. 
We also define the rotation operator
\[   \Omega u ( z ) := u ( \omega  z) , \ \  u \in \mathscr S' ( \mathbb C; \mathbb C^2 ) ,\ \ \Omega D ( \alpha ) = \omega D ( \alpha ) \Omega.  \]
We then extend it to a commuting action with $ H ( \alpha,\lambda ) $ by introducing
\begin{equation}
\label{eq:defC} 
\mathscr C := \begin{pmatrix}
\Omega &  \ 0 \\
 0 & \bar \omega \Omega \end{pmatrix} : L_{\rm{loc}} ^2 ( \mathbb C ; \mathbb C^4 ) \to L^2_{\rm{loc}} ( \mathbb C ; \mathbb C^4 )  \end{equation}
such that
\[  
 \mathscr C H ( \alpha,\lambda ) = H( \alpha,\lambda ) \mathscr C , \ \ 
\mathscr L_\gamma \mathscr C = \mathscr C \mathscr L_{\omega \gamma}, 
\ \  \mathscr C \mathscr L_\gamma  =  \mathscr L_{\bar \omega \gamma} \mathscr C.
\]
From these two symmetries, it is possible \cite[(3.10),(3.15)]{dynip} to obtain the following orthogonal decomposition 
\begin{equation}
\label{eq:ortho} 
\begin{gathered}
 L^2_k ( \mathbb C/\Lambda ) = \bigoplus_{ p \in \mathbb Z_3 } 
 L^2_{k,p}  ( \mathbb C/\Lambda ), \ \ \ 
 k \in \mathcal K/\Lambda^* , 
 \end{gathered}
 \end{equation}
 where
 \begin{equation}
 \label{eq:defK} 
 \  \mathcal K := \{ k \in \mathbb C: \omega k
 \equiv k \!\! \mod \Lambda^* \} =  \{ K, -K, 0 \} + \Lambda^*, \quad  K = \frac 4 3 \pi
 \end{equation}
 and 
 \begin{equation}
\label{eq:defLpk} 
\begin{gathered} 
L^2_{ k,p} ( \mathbb C/\Lambda ; \mathbb C^4 ) := 
\{ u \in L^2_{\rm{loc} } ( \mathbb C ; \mathbb C^4 ) : 
\mathscr L_\gamma \mathscr C^\ell  u = e^{ i \langle k , \gamma \rangle } 
\bar  \omega^{\ell p} u \}, \\  
k \in (\tfrac13 \Lambda^*)/\Lambda^* \simeq \mathbb Z^2_3, \ \  p \in \mathbb  Z_3 . 
 \end{gathered}
 \end{equation}

\subsection{Additional symmetries}
We recall additional symmetries of the BM Hamiltonian using the notation of \cite{bz23} that are central in proving the existence of Dirac cones. We start with the symmetries of $ D(\alpha) $, the non-self-adjoint building block of the BM Hamiltonian \eqref{eq:defBM} for $ U $ satisfying the conditions in \eqref{eq:UV}. 

We recall an anti-linear symmetry,
\begin{equation}
 \label{eq:defQ} 
 {\mathscr Q}v(z) := \overline{ v(-z)}
 , \ \ \ \ {\mathscr Q}(D(\alpha) + k){\mathscr Q} = (D(\alpha) + k)^*, 
 \end{equation}
and two linear symmetries,
\begin{equation}
\label{eq:defH}  \mathscr H \begin{pmatrix} u_1 ( z )\\u_2 ( z ) \end{pmatrix}  :=  \begin{pmatrix} 
- i u_2 ( - z ) \\ i u_1 ( - z) \end{pmatrix} , \ \ \ \   \mathscr H ( D ( \alpha ) + k )  \mathscr H = - ( D( \alpha ) 
- k ) , \end{equation}
\begin{equation}
\label{eq:defN}
\mathscr N \begin{pmatrix}
    u_1(z)\\
    u_2(z)
\end{pmatrix} := \begin{pmatrix} u_2 ( - \bar z ) \\ u_1 ( - \bar z ) \end{pmatrix}, \ \ \ \
\mathscr N ( D ( \alpha ) + k ) \mathscr N =  - ( D ( \bar \alpha ) - \bar k )^* .
\end{equation}
All these symmetries are involutions, i.e. satisfy $ A^2 = I $ with $ A : L^2_0 \to L^2_0 $. For potentials satisfying \eqref{eq:UV} these
symmetries extend to symmetries of the BM Hamiltonian \eqref{eq:defBM} 
\begin{equation}
\label{eq:quaint}
\begin{aligned} 
& \text{\bf $\mathcal P\mathcal T $ symmetry:} \ \ \ &  \mathcal P \mathcal T := 
\begin{pmatrix} 0 & \mathscr Q \\ \mathscr Q & 0 \end{pmatrix}, \\
& \text{\bf particle-hole symmetry:} \ \ \ &  \mathscr S := \begin{pmatrix} \mathscr H & 0 \\
0 & \mathscr H \end{pmatrix}, 
\\ & \text{\bf mirror symmetry:} \ \ \  &  \mathscr M := \begin{pmatrix} \ 0 & i \mathscr N \\
-i \mathscr N & 0 \end{pmatrix} .
\end{aligned} 
\end{equation}
We recall the following proposition from \cite{bz23} summarizing basic properties of the aforementioned symmetries
\begin{prop}
\label{prop:sym}
For the BM Hamiltonian given in \eqref{eq:defBM} with potentials 
satisfying \eqref{eq:UV},and $ \alpha \in \mathbb C $, $ \lambda \in \mathbb R $, we have
\begin{equation}
\label{eq:comm}
\begin{gathered} 
\mathcal P \mathcal T H ( \alpha, \lambda ) = H (  \alpha, \lambda ) \mathcal P \mathcal T , \ \ \ \  
\mathscr M H ( \alpha, \lambda) = H ( \bar \alpha, \lambda ) \mathscr M, \\
\mathscr S H ( \alpha, \lambda ) = - H ( \alpha, \lambda ) \mathscr S.
\end{gathered} 
\end{equation}
Additionally, for $ k = \{0, \pm K\} + \Lambda^*$, $ p \in \mathbb Z_3$, and 
$ K = \frac43 \pi $, 
\begin{equation}
\label{eq:mapp}
\mathcal P \mathcal T : L^2_{k,p} \to L^2_{k,-p+1}, \ \ 
\mathscr S : L^2_{k,p} \to L^2_{-k-K,p}, \ \ \mathscr M : L^2_{k,p} \to L^2_{-k-K,1-p}, 
\end{equation}
where  $L^2_{\bullet, \bullet} := L^2_{\bullet, \bullet} ( \mathbb C/\Lambda; \mathbb C^4 ) $.
\end{prop}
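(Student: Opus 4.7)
My plan is to establish the intertwining relations \eqref{eq:comm} and the Floquet mapping properties \eqref{eq:mapp} by direct block-matrix computation, using the block structure of $H(\alpha,\lambda)$ together with the block structures of $\mathcal P \mathcal T$, $\mathscr S$, and $\mathscr M$.

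\emph{Intertwining relations.} Each of $\mathcal P \mathcal T$, $\mathscr S$, $\mathscr M$ has a $2\times 2$ block form, and expanding the product against $H(\alpha,\lambda)$ block-by-block reduces each of the three identities in \eqref{eq:comm} to one scalar statement for $D(\alpha)$ and one for $C$. The statements for $D(\alpha)$ are precisely \eqref{eq:defQ}, \eqref{eq:defH}, \eqref{eq:defN} at $k=0$. The corresponding $C$-statements,
\[
\mathscr Q\, C\, \mathscr Q = C,\qquad \mathscr H\, C\, \mathscr H = -C,\qquad \mathscr N\, C\, \mathscr N = C,
\]
I would verify by substituting the definition of $C$ and applying the $V$-symmetries in \eqref{eq:UV}: the relation $V(z) = \overline{V(-z)}$ (equivalently $V(-z) = \overline{V(z)}$) handles the first two, and adjoining $V(z) = V(\bar z)$ handles the third. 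Reality of $\lambda$ ensures the $\lambda C$ diagonal blocks pass through both the $\CC$-linear and anti-linear symmetries unchanged, and the substitution $\alpha \mapsto \bar\alpha$ in the $\mathscr M$-identity is exactly the one already encoded in \eqref{eq:defN}.

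\emph{Floquet mapping.} Since $L^2_{k,p}$ is characterized in \eqref{eq:defLpk} by the joint action of $\mathscr L_\gamma$ and $\mathscr C$, it suffices to commute each symmetry past these two generators. Direct calculations, keeping in mind that $\mathscr H$ and $\mathscr N$ act only on the argument of a function (so $\mathscr S$ and $\mathscr M$ are $\CC$-linear while $\mathcal P \mathcal T$ is anti-linear), yield
\begin{align*}
\mathcal P \mathcal T\,\mathscr L_\gamma &= \mathscr L_{-\gamma}\,\mathcal P \mathcal T, & \mathcal P \mathcal T\,\mathscr C &= \omega\,\mathscr C\,\mathcal P \mathcal T,\\
\mathscr S\,\mathscr L_\gamma &= \omega^{\gamma_1+\gamma_2}\mathscr L_{-\gamma}\,\mathscr S, & \mathscr S\,\mathscr C &= \mathscr C\,\mathscr S,\\
\mathscr M\,\mathscr L_\gamma &= \omega^{\gamma_1+\gamma_2}\mathscr L_{-\bar\gamma}\,\mathscr M, & \mathscr M\,\mathscr C &= \bar\omega\,\mathscr C^{-1}\mathscr M.
\end{align*}
Applying these to $u \in L^2_{k,p}$ and using the identity $\omega^{\gamma_1+\gamma_2} = e^{-i\langle K,\gamma\rangle}$ (which I would check on the two generators $\gamma = 1$ and $\gamma = \omega$ of $\Lambda$), the Floquet labels on both sides determine the target spaces: $\mathcal P \mathcal T$ preserves $k$ and sends $p \mapsto 1-p$; while $\mathscr S$ and $\mathscr M$ both implement $k \mapsto -k-K$ modulo $\Lambda^*$, with $\mathscr S$ fixing $p$ and $\mathscr M$ sending $p \mapsto 1-p$. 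The $-\bar\gamma$ in the $\mathscr M$-commutator would a priori give $k \mapsto -\bar k - K$, but the restriction $k \in \{0, \pm K\} + \Lambda^*$ makes $\bar k \equiv k$ modulo $\Lambda^*$, as required.

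The main obstacle is entirely bookkeeping of $\omega$-phases, of which three sources require care: (i) anti-linearity of $\mathcal P \mathcal T$ flips $\bar\omega \leftrightarrow \omega$ whenever a scalar is pulled through; (ii) the off-diagonal block structure of $\mathcal P \mathcal T$ and $\mathscr M$, together with the extra $\bar\omega$ on the lower block of $\mathscr C = \diag(\Omega,\bar\omega\,\Omega)$, is responsible for the lone $\omega$ and $\bar\omega$ factors appearing in the $\mathscr C$-commutators above; and (iii) the diagonal phase $\omega^{\gamma_1+\gamma_2}$ inside $\mathscr L_\gamma$ combines with the argument reflections $z \mapsto -z$ (for $\mathscr S$) and $z \mapsto -\bar z$ (for $\mathscr M$) to produce the effective shift $k \mapsto -k - K$ via $\omega^{\gamma_1+\gamma_2} = e^{-i\langle K,\gamma\rangle}$. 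Once these normalizations are fixed, the remaining verification is mechanical.
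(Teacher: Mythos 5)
Your verification is correct: all six commutator identities you list (I checked each against the definitions \eqref{eq:FL_ev}, \eqref{eq:defC}, \eqref{eq:defQ}--\eqref{eq:defN} and \eqref{eq:defLpk}) hold as stated, the identity $\omega^{\gamma_1+\gamma_2}=e^{-i\langle K,\gamma\rangle}$ is right, and the passage from the commutators to the labels $(k,p)\mapsto(k,1-p)$, $(-k-K,p)$, $(-k-K,1-p)$ is sound, including the observation that $\bar k\equiv k \bmod \Lambda^*$ on the high-symmetry set. Note that the paper itself offers no proof of this proposition --- it is quoted from \cite{bz23} --- so there is nothing internal to compare against; your block-by-block computation is exactly the standard argument. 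One cosmetic remark: the relation $\mathscr H C \mathscr H=-C$ is purely structural (the off-diagonal form of $C$ against the swap-and-reflect action of $\mathscr H$ already gives the sign) and does not actually use $V(-z)=\overline{V(z)}$; only the $\mathscr Q$- and $\mathscr N$-identities for $C$ need the symmetries \eqref{eq:UV}.
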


As a final ingredient from \cite{bz23}, we recall the existence of protected states of the BM Hamiltonian at $k=0$. (See \cite{gz} for a more abstract formulation.)
\begin{prop}
\label{prop:prote}
In the notation of \eqref{eq:floH} and \eqref{eq:defLpk} and for all $ \alpha, \lambda \in \mathbb R $, 
\begin{equation}
\label{eq:prote}
\dim \ker_{ L^2_{0,0} }  H_{0} ( \alpha , \lambda )  \geq 1 ,\ \ \dim \ker_{ L^2_{0,1} }  H_{0} ( \alpha , \lambda )  \geq 1.
\end{equation}
\end{prop}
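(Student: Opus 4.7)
The plan is to combine the symmetries of Proposition \ref{prop:sym} into an antilinear operator that acts as a ``chiral'' symmetry on each summand $L^2_{0,p}$, deduce a symmetric spectrum, and then run a parity invariance argument starting from an explicit computation at $(\alpha,\lambda)=(0,0)$. Define $T:=\mathscr{S}\mathscr{M}\mathcal{P}\mathcal{T}$. From the mapping rules \eqref{eq:mapp}, $T$ preserves each $L^2_{0,p}$ via the chain $L^2_{0,p}\to L^2_{0,1-p}\to L^2_{-K,p}\to L^2_{0,p}$. From \eqref{eq:comm}, for $\alpha\in\RR$ we have $\mathscr{M}H(\alpha,\lambda)=H(\alpha,\lambda)\mathscr{M}$, $\mathcal{P}\mathcal{T}H=H\mathcal{P}\mathcal{T}$ and $\mathscr{S}H=-H\mathscr{S}$, hence $TH_0(\alpha,\lambda)=-H_0(\alpha,\lambda)T$ for all real $\alpha,\lambda$. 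A short calculation with the definitions of $\mathscr{H}$, $\mathscr{N}$ and $\mathscr{Q}$ gives $(\mathscr{S}\mathscr{M})^2=-I$ and $\mathcal{P}\mathcal{T}\,\mathscr{S}\mathscr{M}=-\mathscr{S}\mathscr{M}\,\mathcal{P}\mathcal{T}$; together with $(\mathcal{P}\mathcal{T})^2=I$ this yields $T^2=I$. So $T$ is an antilinear involution on $L^2_{0,p}$ anti-commuting with $H_0(\alpha,\lambda)$.

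Since $T$ is antilinear and $H_0$ has real spectrum, $H_0\psi=E\psi$ forces $H_0(T\psi)=-E(T\psi)$, and $T\psi$ is linearly independent from $\psi$ whenever $E\neq 0$. Hence the spectrum of $H_0(\alpha,\lambda)|_{L^2_{0,p}}$ is symmetric about $0$. For any $\delta>0$ avoiding the rest of the spectrum, the number of eigenvalues in $(-\delta,\delta)$ counted with multiplicity is therefore congruent mod $2$ to $\dim\ker H_0(\alpha,\lambda)|_{L^2_{0,p}}$. As $(\alpha,\lambda)$ varies continuously in $\RR^2$, eigenvalues enter or leave $(-\delta,\delta)$ in $T$-paired fashion, so this count changes only by even amounts. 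The parity of $\dim\ker H_0(\alpha,\lambda)|_{L^2_{0,p}}$ is thus constant on the connected set $\RR^2$.

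It remains to evaluate this parity at the base point $(\alpha,\lambda)=(0,0)$. There $H_0(0,0)$ reduces to the block anti-diagonal with blocks $2D_zI_2$ and $2D_{\bar z}I_2$, so its $L^2_0(\CC/\Lambda;\CC^4)$-kernel is described by $(u_1,u_2,v_1,v_2)$ with $u_i$ holomorphic and $v_i$ antiholomorphic, each subject to its Bloch twist. The twisted components $u_1,v_1$ live in nontrivial flat line bundles over $\CC/\Lambda$ and vanish, while the periodic components $u_2,v_2$ reduce to constants. Decomposing these constants by $\mathscr{C}$-eigenvalue gives $(0,c,0,0)\in L^2_{0,0}$ and $(0,0,0,c)\in L^2_{0,1}$, each one-dimensional. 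Combined with the parity invariance, this shows that $\dim\ker H_0(\alpha,\lambda)|_{L^2_{0,p}}$ is odd, hence $\geq 1$, for $p\in\{0,1\}$ and every $(\alpha,\lambda)\in\RR^2$. The main technical hurdle is the rigorous justification of parity invariance at non-generic crossings, which is handled by restricting to a real-analytic path from $(0,0)$ to $(\alpha,\lambda)$ and invoking Kato's analytic perturbation theory, so that eigenvalues depend analytically and $T$-paired eigenvalues meet $0$ only in pairs.
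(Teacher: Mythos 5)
Your argument is correct: the antilinear involution $T=\mathscr{S}\mathscr{M}\mathcal{P}\mathcal{T}$ does preserve each $L^2_{0,p}$ and anticommute with $H_0(\alpha,\lambda)$ for real $(\alpha,\lambda)$, so the spectrum on each summand is symmetric about $0$, and the explicit kernel computation at $(\alpha,\lambda)=(0,0)$ (one constant state in $L^2_{0,0}$ and one in $L^2_{0,1}$) combined with the mod-$2$ invariance of the zero-eigenvalue multiplicity yields \eqref{eq:prote}. The paper does not reprove this statement but cites \cite[Proposition 2]{bz23}, and your proposal is essentially that symmetry-plus-spectral-flow argument.
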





\section{Existence of Dirac cones}
\label{sec:cone}
We prove Theorem \ref{thm:generic} in this section. We start with a general argument on perturbations of self-adjoint operators with $\ZZ/3\ZZ$ symmetry. The proof of this theorem relies on the Schur complement formula, which in the version that we use, is often referred to as a \emph{Grushin problem}. For a general discussion of Grushin problems, we refer to \cite{SZ07} and \cite[Section 2.6]{notes}.

\begin{prop}\label{prop:perturb-general}
    Suppose $H_k=H_0+kA+\bar{k}A^*:\mathcal{D}\to\mathcal{H}$ is a family of (unbounded) self-adjoint operators on a Hilbert space $\mathcal{H}$ with domain $\mathcal D$. In addition, we assume that
    \begin{itemize}
        \item $\mathcal{H}$ has an orthogonal decomposition $\mathcal{H}=\mathcal{H}_0\oplus \mathcal{H}_1\oplus\mathcal{H}_2$ such that $H_0:\mathcal{H}_j\cap \mathcal D \to \mathcal{H}_j$;
        \item $H_0$ has discrete spectrum at $0$ and there exist $\varphi\in \mathcal{H}_0$, $\psi\in\mathcal{H}_1$ (normalized) such that 
        $\ker H_0=\CC\varphi+\CC\psi$;
        \item $A$ is bounded and $A:\mathcal{H}_j\to \mathcal{H}_{j+1}$ (with the convention $\mathcal{H}_3=\mathcal{H}_0$).
    \end{itemize}
Then $H_k$ has two eigenvalues $E_{\pm 1}(k)$ near $0$ satisfying
\begin{equation}\label{eq:band}
    E_{\pm 1}(k)=\pm |\langle A\varphi, \psi\rangle||k|+\mathcal{O}(|k|^2).
\end{equation}
\end{prop}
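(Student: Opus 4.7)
The plan is to prove this via a Grushin problem (Schur complement). Define the auxiliary operators $R_+ : \mathcal{H} \to \mathbb{C}^2$ and $R_- : \mathbb{C}^2 \to \mathcal{H}$ by
\[ R_+ u := (\langle u, \varphi\rangle, \langle u, \psi\rangle), \qquad R_-(a,b) := a\varphi + b\psi, \]
so that $\mathrm{ran}\, R_- = \ker H_0$ and $R_+$ is the orthogonal projection onto this kernel expressed in coordinates. Because $\ker H_0 = \mathbb{C}\varphi \oplus \mathbb{C}\psi$ exactly and $H_0$ has discrete spectrum at $0$, the block operator
\[ \mathcal{P}(k,z) := \begin{pmatrix} H_k - z & R_- \\ R_+ & 0 \end{pmatrix} : \mathcal{D}\oplus\mathbb{C}^2 \to \mathcal{H}\oplus\mathbb{C}^2 \]
is invertible for $k, z$ small, with inverse block $E_{-+}(k,z)$ a $2\times 2$ matrix whose zeros of determinant coincide with the eigenvalues of $H_k$ near $0$.

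Next I would expand $E_{-+}(k,z)$ perturbatively. At $(k,z)=(0,0)$ we have $E_{-+}(0,0)=0$, and writing $V_k = kA + \bar k A^*$ the Neumann expansion yields
\[ E_{-+}(k,z) = z\, I_2 - R_+ V_k R_- + \mathcal{O}(|k|^2 + z|k|). \]
Here I use that the reduced resolvent $E(0,0) = (H_0|_{(\ker H_0)^\perp})^{-1}$ is bounded, so the second-order Neumann term contributes $\mathcal{O}(|k|^2)$.

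The key computation uses the $\mathbb{Z}/3\mathbb{Z}$-type grading. Since $\varphi\in\mathcal{H}_0$, $\psi\in\mathcal{H}_1$, and $A$ shifts the grading by $+1$ (so $A^*$ shifts by $-1$ modulo $3$), we have
\[ A\varphi\in\mathcal{H}_1, \quad A\psi\in\mathcal{H}_2, \quad A^*\varphi\in\mathcal{H}_2, \quad A^*\psi\in\mathcal{H}_0. \]
Taking inner products against $\varphi,\psi$ and using orthogonality of the $\mathcal{H}_j$, the diagonal entries of $R_+ V_k R_-$ vanish, $\langle A\psi,\varphi\rangle=0$ and $\langle A^*\varphi,\psi\rangle=0$, while $\langle A^*\psi,\varphi\rangle=\overline{\langle A\varphi,\psi\rangle}$. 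Setting $c := \langle A\varphi,\psi\rangle$ gives
\[ R_+ V_k R_- = \begin{pmatrix} 0 & \bar k\, \bar c \\ k\, c & 0 \end{pmatrix}, \qquad E_{-+}(k,z) = \begin{pmatrix} z & -\bar k\, \bar c \\ -k\, c & z \end{pmatrix} + \mathcal{O}(|k|^2 + z|k|). \]

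Finally I solve $\det E_{-+}(k,z)=0$: the leading equation $z^2 - |c|^2 |k|^2 = 0$ yields $z = \pm|c||k|$, and an application of the implicit function theorem (or direct inspection of the discriminant) with the $\mathcal{O}(|k|^2)$ remainder gives the two eigenvalues
\[ E_{\pm 1}(k) = \pm|\langle A\varphi,\psi\rangle|\,|k| + \mathcal{O}(|k|^2), \]
as claimed. The only delicate point is ensuring that the $\mathcal{O}$-terms in $E_{-+}(k,z)$ do not destroy the $|c||k|$ scaling when $c\neq 0$, which is immediate from the implicit function theorem applied to $\det E_{-+}$ in $z$; when $c=0$ the statement still holds trivially and only asserts $E_{\pm 1}(k) = \mathcal{O}(|k|^2)$. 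The main obstacle is really just the bookkeeping of which matrix entries vanish by the grading hypothesis — once that is clear, the Grushin machinery gives the conclusion automatically.
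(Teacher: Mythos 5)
Your proposal is correct and follows essentially the same route as the paper: the identical Grushin problem with $R_\pm$ built from $\varphi,\psi$, the same Neumann expansion of the effective matrix, and the same use of the $\mathbb{Z}_3$-grading to kill the diagonal entries and identify the off-diagonal ones as $kc$ and $\bar k\bar c$. The only (cosmetic) difference is the last step, where the paper compares eigenvalues of the self-adjoint matrices $F_{-+}(z)$ and $zI-R_+V_kR_-$ via Kato's perturbation theorem rather than solving $\det E_{-+}=0$ by rescaling $z=|k|w$; both yield \eqref{eq:band}.
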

 \begin{proof}
     Let $R_+:\mathcal{D}\to \CC^2$ and $R_-:\CC^2\to\mathcal{H}$ be defined by
     \begin{equation}
     \label{eq:Rpm}
         R_+u=(\langle u,\varphi\rangle,\langle u,\psi\rangle),\quad R_-(a,b)=a\varphi+b\psi.
     \end{equation}
     Consider the following Grushin problem
     \begin{equation}\label{eq:grushin-general-k}
        \mathcal P_k:= \begin{pmatrix}
             H_k-z&R_-\\
             R_+&0
         \end{pmatrix}:\mathcal{D}\times \CC^2\to \mathcal{H}\times \CC^2
     \end{equation}
     as a perturbation of the Grushin problem
         \begin{equation}\label{eq:grushin-general-0}
         \mathcal P_0=\begin{pmatrix}
             H_0-z&R_-\\
             R_+&0
         \end{pmatrix}:\mathcal{D}\times \CC^2\to \mathcal{H}\times \CC^2.
     \end{equation}
For sufficiently small $\varepsilon>0$, the operator $\mathcal P_0$ in \eqref{eq:grushin-general-0} is invertible for $z \notin (\Spec(H_0)\setminus \{0\})+ (-\varepsilon,\varepsilon)$  with inverse given by
 \begin{equation}
     \label{eq:P0inv}
     \mathcal P_0^{-1} =: \begin{pmatrix} E& E_+ \\ E_- & E_{-+}\end{pmatrix} 
 \end{equation}
and operators
 \begin{equation}
 \label{eq:E}
     E=(\Pi^\perp (H_0-z)\Pi^\perp)^{-1}\Pi^\perp,\quad E_+=R_-,\quad E_-=R_+,\quad E_{-+}=z.
 \end{equation}
Here, $\Pi=R_-R_+$ is the projection to $\ker H_0$ and $\Pi^{\perp}=I-\Pi$.
 The operator $\mathcal P_k$ in \eqref{eq:grushin-general-k} is invertible for $\vert k\vert$ sufficiently close to zero with inverse
 \begin{equation}
     \label{eq:Pkinv}
     \mathcal P_k^{-1} =: \begin{pmatrix} F & F_+ \\ F_- & F_{-+} \end{pmatrix}.
 \end{equation}
Here, for $\vert k\vert$ small enough
 \begin{equation}
 \label{eq:F}
 \begin{aligned}
        F_{-+}&=E_{-+}+\sum\limits_{j=0}^{\infty}(-1)^j E_-(kA+\bar{k}A^*)(E(kA+\bar{k}A^*))^{j-1}E_+\\
        &=z-R_+(kA+\bar{k}A^*)R_-+R_+(kA+\bar{k}A^*)E(kA+\bar{k}A^*)R_-+\mathcal{O}(|k|^3),
 \end{aligned}
 \end{equation}
 is invertible if and only if $H_k - z$ is invertible; see \cite[Lemma 2.10]{notes}. In other words, for all $\vert k\vert$ sufficiently small, $\det F_{-+} = 0$ if and only if $z$ is an eigenvalue of $H_k$.
 We compute
 \begin{equation}
 \label{eq:F1}
     R_+(kA+\bar{k}A^*)R_-=\begin{pmatrix}
         0& \bar{k}\langle A^*\psi ,\varphi\rangle\\
         k\langle A\varphi ,\psi\rangle &0
     \end{pmatrix}
 \end{equation}
 and
 \begin{equation}
 \label{eq:F12}
      R_+(kA+\bar{k}A^*)E(kA+\bar{k}A^*)R_-=\begin{pmatrix}
        |k|^2\langle (AEA^*+A^*EA)\varphi, \varphi\rangle& k^2\langle  AEA\psi,\varphi\rangle\\
        \bar{k}^2\langle A^*EA^*\varphi,\psi\rangle&|k|^2\langle (A^*EA+AEA^*)\psi,\psi\rangle
     \end{pmatrix}.
 \end{equation}
We have
\begin{equation*}
    F_{-+}-R_+(kA+\bar{k}A^*)R_-=\mathcal{O}(|k|^2).
\end{equation*}
Since $F_{-+}$ and $R_+(kA+\bar{k}A^*)R_-$ are both self-adjoint, their eigenvalues near zero differ by $\mathcal{O}(|k|^2)$.
Since the eigenvalues of \eqref{eq:F1} are $\pm |\langle A\varphi,\psi\rangle ||k|$, the eigenvalues of $E_{\pm 1}(k)$ are by \cite[Th.~4.10 p.~291]{kato2013perturbation} given by
 \begin{equation*}
     E_{\pm 1}(k)=\pm |\langle A\varphi,\psi\rangle ||k|+\mathcal{O}(|k|^2).\qedhere
 \end{equation*}
 \end{proof}

\begin{rem} 
We can compute more terms in the asymptotic expansion of the inverse of the Grushin problem \eqref{eq:grushin-general-k} as illustrated in \eqref{eq:F}. In our application to $H_k = H_0 + k \begin{pmatrix} 0 & 0 \\ I_{2\times 2} & 0 \end{pmatrix} + \bar k \begin{pmatrix} 0 & I_{2\times 2}  \\ 0 & 0 \end{pmatrix}$, we have an antilinear $\mathcal{PT}$-symmetry: 
\begin{equation}
    \mathcal{PT}:\mathcal{H}\to \mathcal{H},\quad \mathcal{PT}^2=I,\quad\mathcal{PT}(\mathcal{H}_0)=\mathcal{H}_1,\,\, \mathcal{PT}(\mathcal{H}_1)=\mathcal{H}_0,\,\, \mathcal{PT}(\mathcal{H}_2)=\mathcal{H}_2
\end{equation}
where the Hilbert spaces are the respective invariant subspaces of the Hamiltonian, so that 
\begin{equation*}
    H_0\mathcal{PT}=\mathcal{PT}H_0,\quad \mathcal{PT}A=A^*\mathcal{PT}.
\end{equation*}
Consequently, we can choose $\mathcal{PT}\varphi=\psi$. This immediately implies that 
\[\langle EA^*\varphi,A^*\varphi\rangle+\langle EA\varphi,A\varphi\rangle=\langle EA\psi,A\psi\rangle+\langle EA^*\psi,A^*\psi\rangle\] and the energies are given by
\begin{equation}
\begin{split}
\label{eq:expansion}
    E_{\pm 1}(k)&=\pm |k\langle A\varphi,\psi\rangle-\bar{k}^2\langle (\Pi^{\perp}H_0\Pi^{\perp})^{-1}A^*\varphi, A\psi\rangle|\\
    & \quad -|k|^2\langle (\Pi^{\perp}H_0\Pi^{\perp})^{-1}(A+A^*)\varphi, (A+A^*)\varphi\rangle+\mathcal{O}(|k|^3).
    \end{split}
\end{equation}
\end{rem}
\begin{rem}
    Under the condition $\langle A\varphi,\psi\rangle\neq 0$, the bands $E_{\pm 1}(k)$ have a conic singularity near $k=0$. We also notice when the first order term $\langle A\varphi,\psi\rangle$ vanishes, we get
\begin{equation*}
    E_{\pm 1}(k)=c_{2,\pm}(\varphi,\psi)|k|^2+\mathcal{O}(|k|^3)
\end{equation*}
where, with the notation introduced in the proof of Proposition~\ref{prop:perturb-general},
\begin{equation*}
    c_{2,\pm}(\varphi,\psi)=\pm |\langle (\Pi^{\perp}H_0\Pi^{\perp})^{-1}A^*\varphi, A\psi\rangle|-\langle (\Pi^{\perp}H_0\Pi^{\perp})^{-1}(A+A^*)\varphi, (A+A^*)\varphi\rangle.
\end{equation*}
\end{rem}

\begin{figure}[ht]
\includegraphics[width=5.4cm,height=5cm]{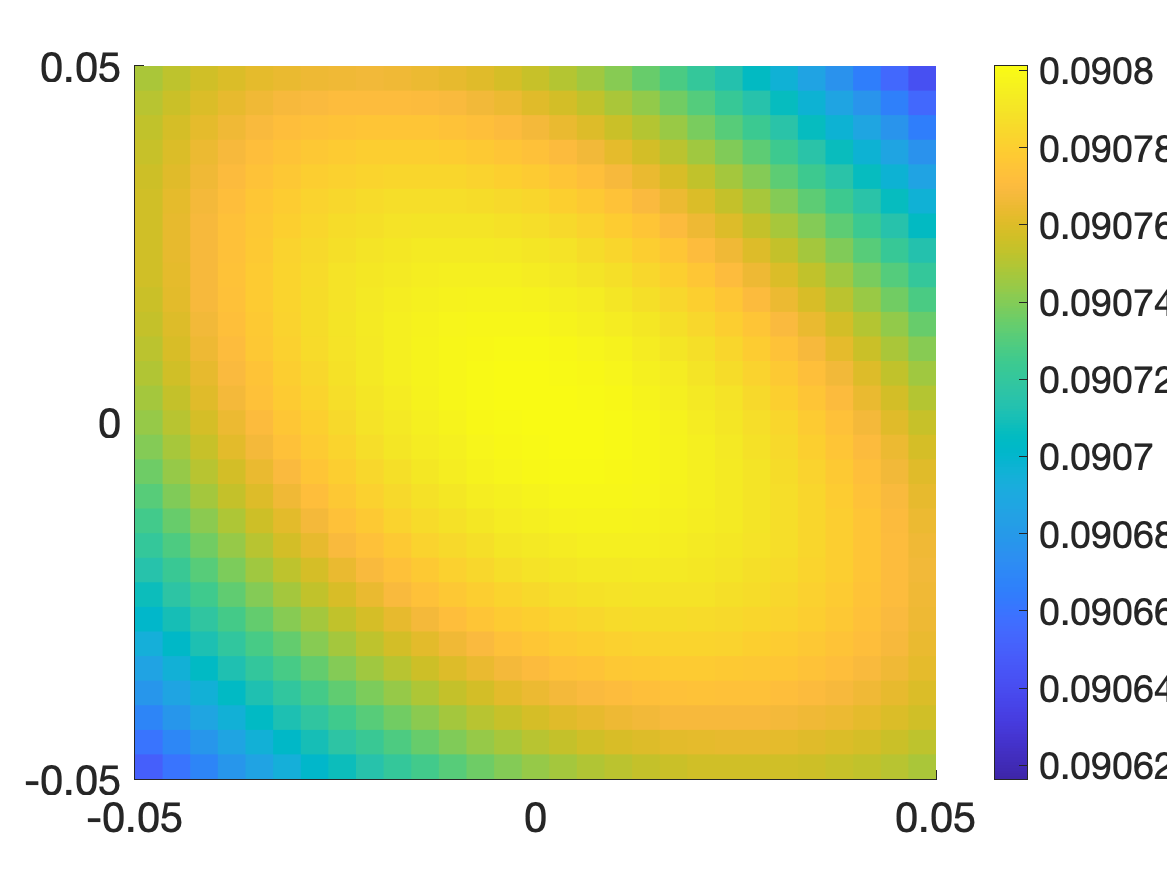}
\includegraphics[width=5.4cm,height=5cm]{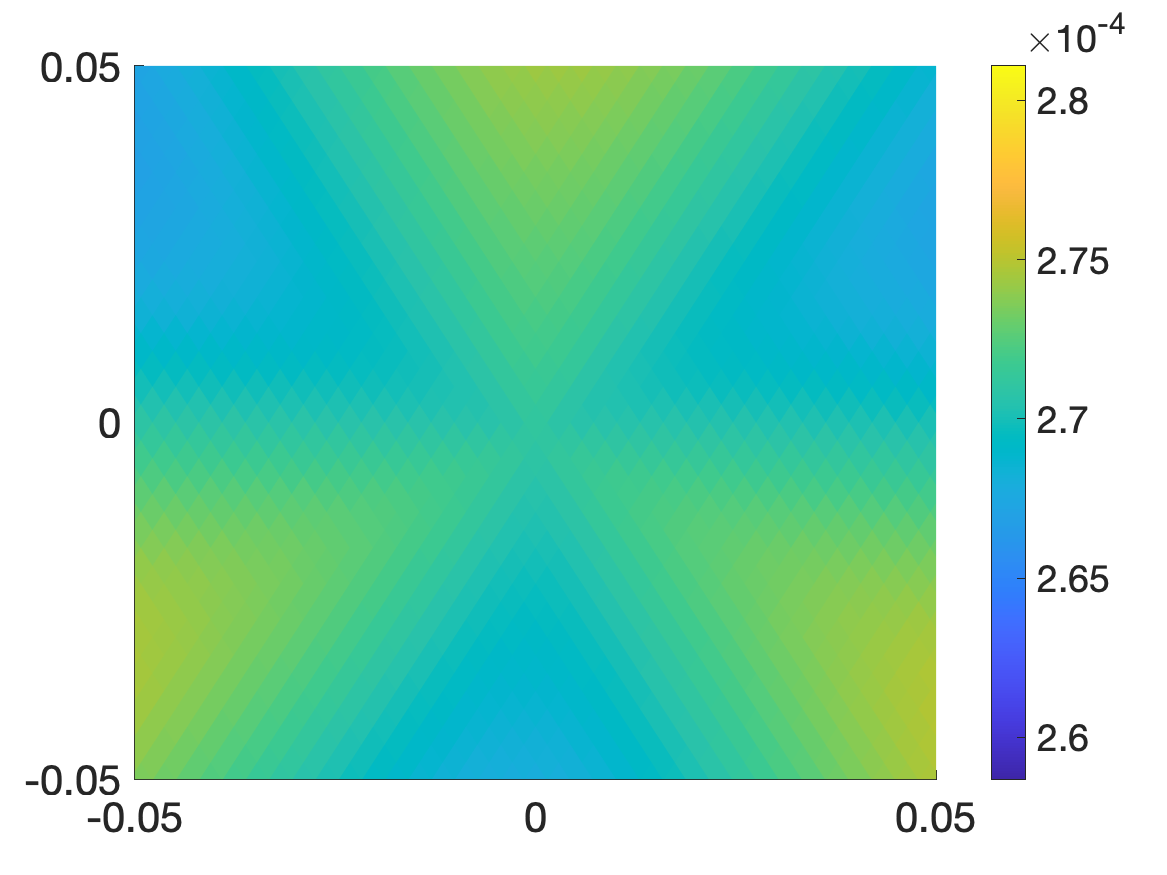}
\includegraphics[width=5.4cm,height=5cm]{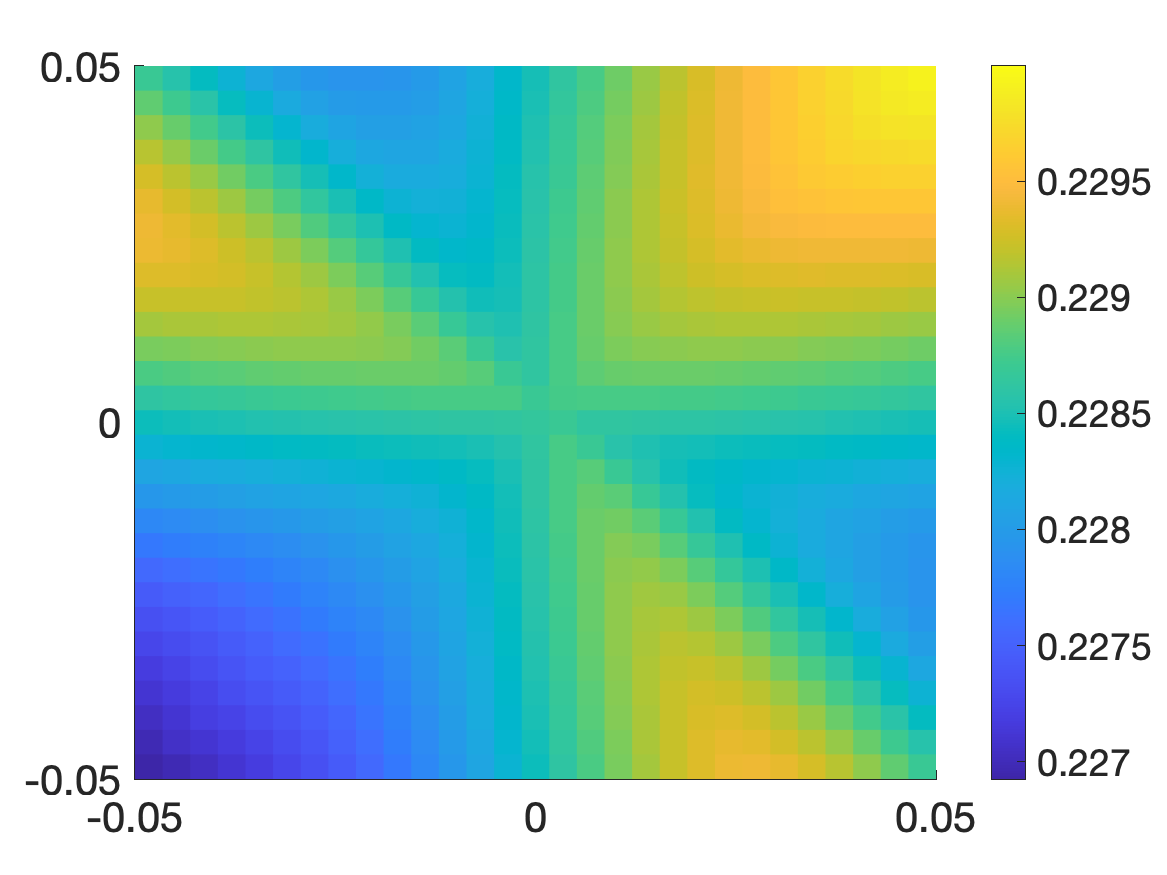}
\caption{The function $k \mapsto E_1(k)/|k|,\ k\in [-0.05,0.05]^2$ for $(\alpha,\lambda) = (0.7,0),(0.58655,0.1),(0.2,0.2)$ from left to right. The \emph{left figure} is for the chiral limit for a twisting angle at which the Dirac point is present, indicated by the fairly flat non-zero level set; in the \emph{center figure} one is close to the line of vanishing Fermi velocity, explaining the small values occurring in the figure. The \emph{right figure} is the analogue of the left figure away from the chiral limit. The color inhomogeneity in different angles suggests that the shape of the cone is affected by the higher order terms in equation \eqref{eq:band}.}
\end{figure}

We next apply Proposition \ref{prop:perturb-general} to the BM Hamiltonian \eqref{eq:defBM}. We also prove that, because of symmetries, we can assume that $\langle A\varphi,\psi\rangle$ in equation \eqref{eq:band} is real-valued. This fact will be important in the proof of Theorem \ref{thm:line}. 


\begin{prop}
\label{prop:real}
    Suppose there is an open set $\Omega\subset \RR^2$ such that 
    \begin{equation}\label{eq:condition-dim=2}        \dim \ker_{L^2_0(\CC/\Lambda;\CC^4)}\left(H_0(\alpha,\lambda)\right)=2,\quad (\alpha,\lambda)\in\Omega.
    \end{equation}
    Then there exists a real analytic function $v_F(\alpha,\lambda):\Omega\to \RR$ such that 
    \begin{equation}\label{eq:epm1-bm}
        E_{\pm 1}(\alpha,\lambda;k)=\pm |v_F(\alpha,\lambda)||k|+\mathcal{O}(|k|^2).
    \end{equation}
    In particular, when $v_F(\alpha,\lambda)\neq 0$, the bands exhibit a conic singularity at energy zero with slope $|v_F|$.
\end{prop}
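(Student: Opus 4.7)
\textbf{Proof plan for Proposition~\ref{prop:real}.} The plan is to apply Proposition~\ref{prop:perturb-general} to the Bloch Hamiltonian \eqref{eq:floH}, written as $H_k(\alpha,\lambda) = H_0(\alpha,\lambda) + kA + \bar k A^*$ with
\[
A := \begin{pmatrix} 0 & 0 \\ I_{2\times 2} & 0 \end{pmatrix},
\]
ambient space $\mathcal{H} = L^2_0(\CC/\Lambda;\CC^4)$, and invariant subspaces $\mathcal{H}_p := L^2_{0,p}$ for $p \in \ZZ_3$ from the decomposition \eqref{eq:ortho}. The relation $\mathscr C A = \bar\omega A \mathscr C$ (immediate from \eqref{eq:defC}) gives $A : L^2_{0,p} \to L^2_{0,p+1}$, verifying the third hypothesis. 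The assumption \eqref{eq:condition-dim=2} combined with Proposition~\ref{prop:prote} forces unique (up to unit scalars) normalized eigenvectors $\varphi \in \ker H_0 \cap L^2_{0,0}$ and $\psi \in \ker H_0 \cap L^2_{0,1}$. Analytic Kato perturbation theory produces real-analytic choices of $\varphi(\alpha,\lambda)$ and $\psi(\alpha,\lambda)$ on $\Omega$, and Proposition~\ref{prop:perturb-general} then yields
\[
E_{\pm 1}(\alpha,\lambda;k) = \pm |\langle A\varphi,\psi\rangle|\,|k| + \mathcal O(|k|^2).
\]

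The core issue not settled by Proposition~\ref{prop:perturb-general} is that $\langle A\varphi,\psi\rangle$ is a priori complex-valued; we need to produce a real-valued real-analytic $v_F$ with $|v_F| = |\langle A\varphi,\psi\rangle|$. For this I would use the composite antiunitary symmetry
\[
T := \mathcal{PT}\,\mathscr S\,\mathscr M.
\]
Direct $2\times 2$ block manipulations from \eqref{eq:quaint} yield the commutation data $\mathcal{PT}\mathscr S = -\mathscr S\mathcal{PT}$, $\mathcal{PT}\mathscr M = \mathscr M\mathcal{PT}$, $\mathscr S\mathscr M = -\mathscr M\mathscr S$, from which one verifies $T^2 = I$ and $T\mathcal{PT} = -\mathcal{PT}T$. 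Combined with \eqref{eq:comm} for real $\alpha$, $T$ anti-commutes with $H(\alpha,\lambda)$; tracking \eqref{eq:mapp} shows that $T$ preserves each $L^2_{0,p}$. The further identities $\mathscr S A = A\mathscr S$, $\mathscr M A = -A^*\mathscr M$, and $\mathcal{PT}A^* = A\mathcal{PT}$ combine to give $TA = -AT$.

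Now, $T$ preserves the one-dimensional space $\ker H_0 \cap L^2_{0,0} = \CC\varphi$, so $T\varphi = \gamma\varphi$ with $|\gamma| = 1$. Since $\gamma(\alpha,\lambda)$ is a unit-modulus real-analytic function on $\Omega$, it admits a local real-analytic square root (with at most a global sign ambiguity); multiplying $\varphi$ by this square root realizes the gauge $T\varphi = \varphi$. With $\psi := \mathcal{PT}\varphi$ (which lies in $\ker H_0 \cap L^2_{0,1}$ and is normalized because $\mathcal{PT}$ is antiunitary and commutes with $H_0$), the relation $T\mathcal{PT} = -\mathcal{PT}T$ gives $T\psi = -\psi$. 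Antiunitarity of $T$ together with $TA = -AT$ forces
\[
\overline{\langle A\varphi,\psi\rangle} \;=\; \langle T A\varphi, T \psi\rangle \;=\; \langle -A\varphi, -\psi\rangle \;=\; \langle A\varphi,\psi\rangle,
\]
so $v_F(\alpha,\lambda) := \langle A\varphi,\psi\rangle$ is real, and real-analytic in $(\alpha,\lambda)$ because $\varphi$ and $\psi$ are.

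The main obstacle is the symmetry bookkeeping: pinning down the single antiunitary involution $T = \mathcal{PT}\mathscr S\mathscr M$ that simultaneously preserves the $\mathscr C$-eigenspace $L^2_{0,0}$ at $k=0$ and anti-commutes with the perturbation $A$. Once $T$ has been isolated, reality of the leading Fermi-velocity coefficient is automatic, and the remaining input (analytic perturbation theory plus a square-root gauge fix) is standard.
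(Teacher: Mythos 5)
Your proposal is correct and follows the same overall strategy as the paper: the Grushin/perturbation argument of Proposition~\ref{prop:perturb-general} applied with $\mathcal H_j=L^2_{0,j}$ and $A=\begin{pmatrix}0&0\\I_{2\times 2}&0\end{pmatrix}$, followed by a symmetry argument to make $\langle A\varphi,\psi\rangle$ real. The difference is in how the symmetry and the analyticity are packaged. The paper simply sets $\psi:=\mathscr{SM}\varphi$ for an \emph{arbitrary} normalized $\varphi\in\ker H_0\cap L^2_{0,0}$ (using that $\mathscr{SM}$ is linear, maps $L^2_{0,0}\to L^2_{0,1}$, and anticommutes with $H_0$ for real $\alpha$); reality then follows from a one-line computation with $\mathscr{HN}$ and $(\mathscr{HN})^2=-I$, and since $\mathscr{SM}$ is linear the pairing $\langle A\varphi,\mathscr{SM}\varphi\rangle$ is invariant under $\varphi\mapsto e^{i\theta}\varphi$, so no gauge-fixing is needed; global real analyticity on $\Omega$ is then obtained by rewriting $v_F=-\Tr_{L^2_0}\mathscr{SM}A\Pi_0(\alpha,\lambda)$ in terms of the analytic spectral projector, which avoids choosing eigenvectors analytically at all. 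You instead combine the three symmetries into the single antiunitary involution $T=\mathcal{PT}\mathscr{SM}$ preserving $L^2_{0,0}$, fix the gauge $T\varphi=\varphi$ by taking a square root of the unit-modulus multiplier $\gamma$, and set $\psi=\mathcal{PT}\varphi$; note that in this gauge $\mathcal{PT}\varphi=\mathscr{SM}\varphi$, so your $v_F$ coincides with the paper's. Your commutation bookkeeping ($T^2=I$, $T$ preserving $L^2_{0,p}$, $TA=-AT$, $T\mathcal{PT}=-\mathcal{PT}T$) checks out against \eqref{eq:defQ}--\eqref{eq:mapp}. The one place where your route is more fragile is the appeal to ``analytic Kato perturbation theory'' for a two-parameter family plus the square-root gauge fix: multiparameter analytic eigenvector selection is exactly the delicate point the paper later handles via blowups, and a global analytic section (or a global square root of $\gamma$) need not exist on a general open $\Omega$. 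This is harmless here because the kernel inside each $L^2_{0,j}$ is one-dimensional and isolated, so local analytic sections exist, and your $v_F$ is invariant under the residual sign ambiguity, hence globally well-defined and locally (so globally) real analytic; but the paper's trace formula buys this conclusion without any of these caveats.
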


\begin{proof} 
We consider the spectrum of $H_k(\alpha,\lambda)$ on $\mathcal{H}=L^2_0(\CC/\Lambda;\CC^4)$. It satisfies the assumptions of Proposition~\ref{prop:perturb-general} with 
$$\mathcal{H}_j=L^2_{0,j} \text{ and } A=\begin{pmatrix}    0&0\\
    I_{2\times 2}&0
\end{pmatrix}, \text{ and }\ker_{L^2_0(\CC/\Lambda;\CC^4)}\left(H_0(\alpha,\lambda)\right)=\CC\varphi(\alpha,\lambda)+\CC\psi(\alpha,\lambda).
$$
Here $\varphi(\alpha,\lambda)\in L^2_{0,0}(\CC/\Lambda;\CC^4)$ and $\psi(\alpha,\lambda)\in L^2_{0,1}(\CC/\Lambda;\CC^4)$ are normalized protected states. 


We first claim that the spectral projection is analytic for $(\alpha,\lambda)\in \Omega$. For any $(\alpha_0,\lambda_0)\in \Omega$, by assumption \eqref{eq:condition-dim=2}, there is a neighbourhood of $(\alpha_0,\lambda_0)$ in which we can choose $\#\Spec_{L^2_{0,j}}(H_0(\alpha,\lambda)) \cap [-\delta,\delta]=1$, $j=0,1$. The spectral projection can then be written as a contour integral
\begin{equation}
\label{eq:holomorphic}
\mathbbm{1}_{0}(H_0(\alpha,\lambda)|_{L^2_{0,j}}) = \frac{1}{2\pi i} \int_{\partial B_0(\delta)}(z-H_0(\alpha,\lambda))^{-1} dz|_{L^2_{0,j}}, \quad j=0,1.
\end{equation}
The analyticity of $H_0(\alpha,\lambda)$ implies that the spectral projection $\Pi_j = \mathbbm{1}_{0}(H_0(\alpha,\lambda)|_{L^2_{0,j}})$ is real analytic.

We now show that $\langle A\varphi,\psi\rangle$ can be chosen to be real valued. 
Suppose 
\begin{equation*}
    \varphi(\alpha,\lambda)=\begin{pmatrix}
    u(\alpha,\lambda)\\
    v(\alpha,\lambda)
\end{pmatrix},\quad \psi(\alpha,\lambda)=\begin{pmatrix}
    \tilde{u}(\alpha,\lambda)\\
    \tilde{v}(\alpha,\lambda)
\end{pmatrix},
\end{equation*}
then $\langle A\varphi,\psi\rangle= \langle u,\tilde{v}\rangle$.
Recall the particle-hole and mirror symmetries \eqref{eq:quaint} from Section \ref{sec:symmetry}.
Note that by Proposition \ref{prop:sym}, 
\begin{gather*}
    \mathscr{S}\mathscr{M}: L^2_{0,p}(\CC/\Lambda;\CC^4)\to L^2_{0,1-p}(\CC/\Lambda;\CC^4),\ \ p=0,\pm 1,\\ 
    \mathscr{S}\mathscr{M}H_k(\alpha,\lambda)=-H_{\bar{k}}(\overline{\alpha},\lambda)\mathscr{S}\mathscr{M}.
\end{gather*}
Since we are only considering $(\alpha,\lambda)\in\RR^2$, $\mathscr{SM}$ maps $\ker_{L^2_{0,0}(\CC/\Lambda;\CC^4)}\left(H_k(\alpha,\lambda)\right)$ to the other $\ker_{L^2_{0,1}(\CC/\Lambda;\CC^4)}\left(H_{\bar{k}(\alpha,\lambda)}\right)$ for $ k = \{0, \pm K\} + \Lambda^*$. 
In particular, for $k=0$, we can choose $\psi(\alpha,\lambda)=\mathscr{SM}\varphi(\alpha,\lambda)$. In other words, using the notation in \eqref{eq:quaint}
and the fact that $(\mathscr{HN})^2 = -I$,
\begin{equation*}
    \tilde{v}=-i\mathscr{HN} u,\quad \mathscr{HN}\tilde{v}=iu.
\end{equation*}
Therefore, since $\mathscr{H}$ and $\mathscr{N}$ are unitary, we obtain
\begin{equation*}
    \langle u,\tilde{v}\rangle=  \langle \mathscr{HN}u,\mathscr{HN}\tilde{v}\rangle=\langle i\tilde{v},iu\rangle=\langle \tilde{v},u\rangle
\end{equation*}
and $v_F(\alpha,\lambda)=\langle u,\tilde{v}\rangle$ is real valued. 

Finally we claim that we can choose $v_F(\alpha,\lambda)$ to be real analytic in $\Omega$. For this we write $v_F=\langle A\varphi,\psi\rangle$ using the spectral projection $\Pi_j = \mathbbm{1}_{0}(H_0(\alpha,\lambda)|_{L^2_{0,j}})$. Note $\Pi_0(\alpha, \lambda) \phi = \langle \phi, \varphi (\alpha, \lambda) \rangle \varphi (\alpha, \lambda)$. Using the $\mathscr{SM}$ symmetry $\psi = \mathscr{SM}\varphi$, we see that
$$
    v_F (\alpha, \lambda) = \langle A \varphi (\alpha, \lambda), \mathscr{SM} \varphi (\alpha, \lambda) \rangle = - \Tr _{L^2_0} \mathscr{SM} A \Pi_0 (\alpha, \lambda),
$$
as $\mathscr{SM}$ is unitary with $(\mathscr{SM})^2 =-\mathrm{Id}$. Since $\Pi_0$ is analytic in $(\alpha, \lambda) \in \Omega$ with $\mathscr{SM}$ and $A$ independent of $(\alpha, \lambda)$, we conclude that $v_F (\alpha, \lambda)$ is analytic in $\Omega$.
\end{proof}

\begin{rem}
    Note that only the choice of an analytic \emph{real valued} Fermi velocity $v_F(\alpha,\lambda)\in \RR$ relies on the symmetry $\mathscr{SM}$, whereas the form of equation \eqref{eq:epm1-bm} with only analytic complex-valued Fermi velocity $v_F(\alpha,\lambda)\in \CC$ does not rely on the $\mathscr{SM}$-symmetry. In general, $v_F(\alpha,\lambda)$ is uniquely defined up to a phase.
\end{rem}

\begin{rem}\label{rem:simple}
    The condition \eqref{eq:condition-dim=2} is satisfied when $\lambda = 0$ and $\alpha\notin\cA$, or when $\lambda=0$ and $\alpha\in\cA$ is a simple magic angle. The condition \eqref{eq:condition-dim=2} is also stable under perturbation, i.e. the set of $(\alpha,\lambda)$ satisfying \eqref{eq:condition-dim=2} is open.
\end{rem}

\begin{figure}[ht]
    \includegraphics[width=10.5cm,height=8cm]{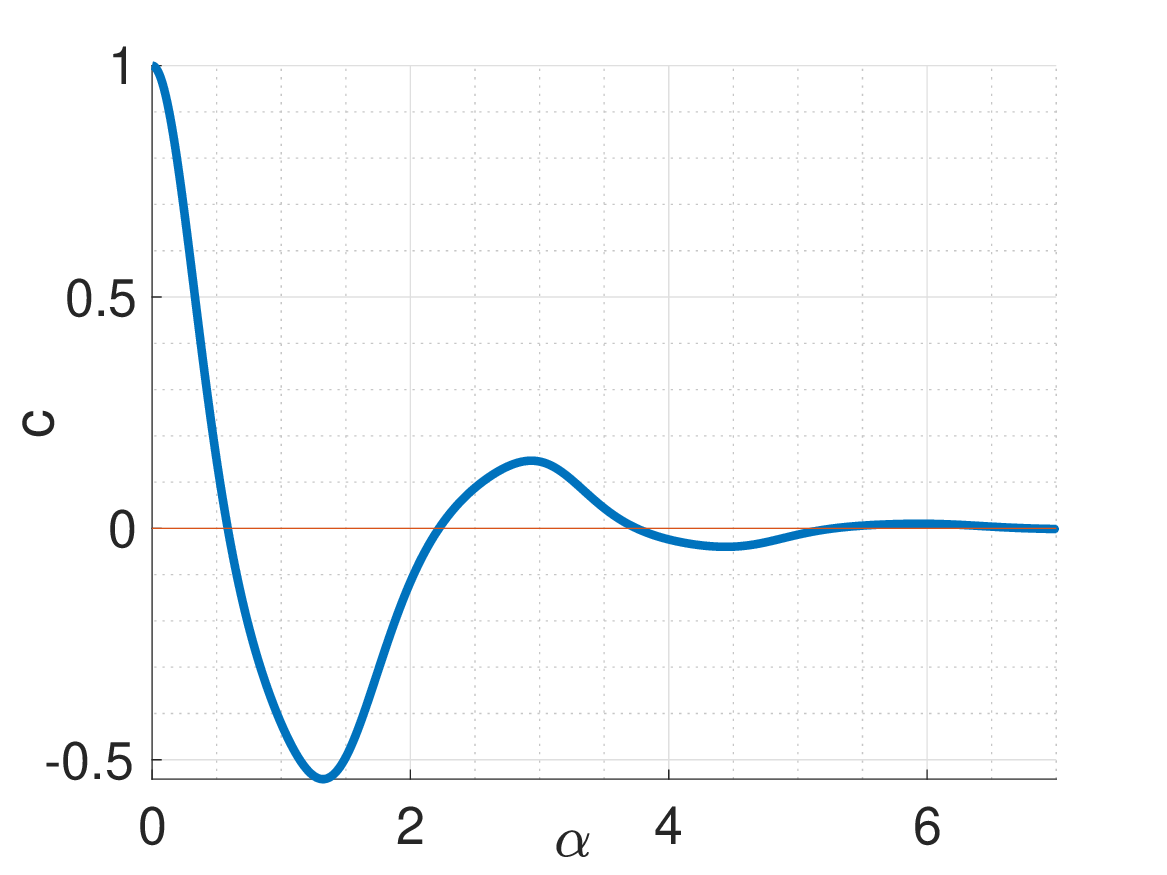}
    \caption{\label{f:real fermi}Real valued Fermi velocity $\alpha \mapsto c=v_F(\alpha,0)$ with zeros and sign changes at magic angles. The Fermi velocity satisfies the exponential estimate $v_F(\alpha,0) = \mathcal O(e^{-c_1\alpha})$ for some $c_1>0$ \cite{beta}.}
\end{figure}

The next proposition establishes that the nullspace $\ker_{L^2_0}(H_0(\alpha,\lambda))$ is generically two dimensional. This is done by identifying a real analytic function whose zero set coincides with the set where this condition fails.


\begin{prop}
\label{prop:analytic}
    There exists a locally finite family of points and analytic curves $S_i \subset \RR^2$ such that for $(\alpha,\lambda)\in \mathbb{R}^2\setminus \bigcup_{i} S_i$, $\dim\ker_{L^2_{0}} H(\alpha,\lambda) = 2$.
\end{prop}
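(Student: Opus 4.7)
My plan is to use the $\mathbb{Z}_3$-isotypic decomposition~\eqref{eq:ortho} to split the question into two pieces, identify each piece locally with the zero set of a real analytic function via Riesz spectral projections, verify non-triviality by reducing to the chiral line $\lambda = 0$, and apply the structure theorem for zero sets of real analytic functions on $\RR^2$.

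By \eqref{eq:mapp} and Proposition~\ref{prop:sym}, the anti-linear $\mathcal{PT}$ commutes with $H_0(\alpha,\lambda)$ and restricts to an anti-unitary bijection $L^2_{0,0}\to L^2_{0,1}$ while preserving $L^2_{0,2}$. Thus $\dim\ker_{L^2_{0,0}}H_0 = \dim\ker_{L^2_{0,1}}H_0$, and combined with Proposition~\ref{prop:prote} the condition $\dim\ker_{L^2_0}H_0 > 2$ holds if and only if $\dim\ker_{L^2_{0,0}}H_0\ge 2$ or $\dim\ker_{L^2_{0,2}}H_0\ge 1$. Denote these two sets by $S_A$ and $S_B$ respectively; it suffices to show each is locally the zero set of a non-trivial real analytic function.

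Fix a reference point $(\alpha_0,\lambda_0)\in\RR^2$ and pick $\delta>0$ with $\pm\delta\notin\Spec(H_0(\alpha_0,\lambda_0)|_{L^2_{0,j}})$ for $j=0,2$. By continuity of the spectrum and the analyticity of $(\alpha,\lambda)\mapsto H_0(\alpha,\lambda)$, on a neighborhood $U$ of $(\alpha_0,\lambda_0)$ the Riesz spectral projections
\begin{equation*}
    \Pi_j(\alpha,\lambda) := \frac{1}{2\pi i}\oint_{|z|=\delta}\bigl(z-H_0(\alpha,\lambda)|_{L^2_{0,j}}\bigr)^{-1}\,dz,\quad j=0,2,
\end{equation*}
are real analytic and of constant finite rank $N_j$. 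The characteristic polynomial $p_j(z;\alpha,\lambda):=\det_{\mathrm{Range}\,\Pi_j}(z - H_0\Pi_j)$ is then a polynomial in $z$ of degree $N_j$ with real analytic coefficients in $(\alpha,\lambda)$, whose roots with multiplicity are the eigenvalues of $H_0|_{L^2_{0,j}}$ in $(-\delta,\delta)$. By Proposition~\ref{prop:prote}, $p_0(0;\cdot)\equiv 0$, so I factor $p_0(z;\alpha,\lambda)=z\,q_0(z;\alpha,\lambda)$ and set $f_A(\alpha,\lambda):=q_0(0;\alpha,\lambda)$ and $f_B(\alpha,\lambda):=p_2(0;\alpha,\lambda)$. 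These are real analytic on $U$ and satisfy $S_A\cap U=\{f_A=0\}$ and $S_B\cap U=\{f_B=0\}$.

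The crux is to verify $f_A, f_B\not\equiv 0$ on $U$. The chiral-limit analysis in \cite{magic,bz23,Z23} yields $\dim\ker_{L^2_0}H_0(\alpha,0)=2$ for every $\alpha\in\RR\setminus \mathcal A$, which forces $\dim\ker_{L^2_{0,0}}H_0(\alpha,0)=1$ and $\dim\ker_{L^2_{0,2}}H_0(\alpha,0)=0$. If $f_A\equiv 0$ on $U$ then $\dim\ker_{L^2_{0,0}}H_0\ge 2$ throughout $U$, and Rellich--Kato analytic perturbation theory produces two eigenvalue branches of $H_0|_{L^2_{0,0}}$ pinned to zero on $U$; continuing them along a real analytic path from $U$ to a chiral non-magic point $(\alpha^*,0)$ contradicts $\dim\ker_{L^2_{0,0}}H_0(\alpha^*,0)=1$. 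The same reasoning handles $f_B$. With non-triviality in hand, Weierstrass preparation shows that $\{f_A=0\}$ and $\{f_B=0\}$ are each locally a finite union of real analytic arcs and isolated points; covering $\RR^2$ by countably many such neighborhoods $U$ and taking the union of the local zero sets yields the desired locally finite family $\{S_i\}$. The main obstacle is the analytic continuation step: one must track eigenvalue branches through possible crossings along the chosen path, which requires either the full Rellich--Kato machinery or a generic perturbation of the path to avoid non-transverse crossings.
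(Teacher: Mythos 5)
Your argument is correct, but it is genuinely different from the paper's. The paper works with a single globally defined real analytic function: the Fredholm determinant $D(\alpha,\lambda;z)=\det\bigl(I+i(H(\alpha,\lambda)+i-z)^{-3}\bigr)$, which vanishes to order at least two at $z=0$ by the protected states; it sets $F(\alpha,\lambda)=\partial^2_{zz}D(\alpha,\lambda;z)\rvert_{z=0}$, checks $F(0,0)\neq 0$, and applies \L ojasiewicz's structure theorem once, globally. You instead split the degeneracy via the $\ZZ_3$-isotypic decomposition and the $\mathcal{PT}$ pairing $L^2_{0,0}\leftrightarrow L^2_{0,1}$, build local analytic functions $f_A,f_B$ from characteristic polynomials of Riesz projections, and establish non-triviality by Rellich continuation along a path to a chiral non-magic point. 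Both routes are sound; the paper's buys global definedness (so non-vanishing at the single point $(0,0)$ suffices and no patching or continuation is needed), while yours separates the two failure modes ($\dim\ker_{L^2_{0,0}}\ge 2$ versus $\ker_{L^2_{0,2}}\neq 0$), which is slightly more informative. Two small points: the determinant over the varying range of $\Pi_j$ should be made analytic explicitly, e.g.\ via $\det\bigl(I-\Pi_j+(z-H_0)\Pi_j\bigr)$ or the Kato--Nagy trivialization used in Proposition \ref{prop:blowup}; and the "main obstacle" you flag at the end is not actually one --- Rellich's one-parameter theorem for analytic families of type (A) gives globally analytic eigenvalue branches on the path regardless of crossings, and the standard counting argument (each branch's zero set is either the whole interval or discrete) then forces two branches to vanish identically, so no genericity of the path is needed.
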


\begin{proof}
We define the Fredholm determinant (cf.~\cite[(1.2)]{S77} or \cite[Appendix~B.5.2]{dz})
\begin{equation}
    D(\alpha,\lambda;z) = \det (I+i(H(\alpha,\lambda)+i-z)^{-3}).
\end{equation}
Note that for any $(\alpha,\lambda)\in\RR^2$, the function $D(\alpha,\lambda;z)$ has at least a zero of order two at $z=0$ by the existence of the protected state and is real analytic in $(\alpha,\lambda)$. 

Since $F(\alpha,\lambda):=\partial^2_{zz} D(\alpha,\lambda;z)\rvert_{z=0}$ is a real analytic function for $(\alpha,\lambda)\in\RR^2$ and $F(0,0)\neq 0$, by \L ojasiewicz's structure theorem (see \cite[Theorem 6.3.3]{kp}) the zeros of $F(\alpha,\lambda)$ is the union of a locally finite family of points and analytic curves $S_i\subset\RR^2$. Our proposition follows since $F(\alpha,\lambda)\neq 0$ implies $\dim\ker_{L^2_0}H_0(\alpha,\lambda)=2$.
\end{proof}

We are now in a position to \emph{almost} prove Theorem \ref{thm:generic} by the following argument. First, recall that the BM Hamiltonian exhibits a simple Dirac cone at $k = 0$ at $(\alpha,\lambda)$ if and only if $\dim \ker_{L^2_0(\CC/\Lambda;\CC^4)} \left(H_0(\alpha,\lambda)\right)=2$ and $\langle A \varphi (\alpha, \lambda), \psi (\alpha, \lambda) \rangle \ne 0$. Applying Proposition \ref{prop:analytic}, we have that the first condition fails only on a locally finite family of points and analytic curves
 $S_i \subset \RR^2$. By Proposition \ref{prop:real}, we have that the function $v_F(\alpha,\lambda)$ is real analytic everywhere outside of the sets ${S_i}$. Applying \L ojasiewicz's structure theorem (\cite[Theorem 6.3.3]{kp}) to a neighborhood of any point in $\mathbb{R}^2 \setminus \bigcup_i S_i$, we have that the zero set of $v_F(\alpha,\lambda)$ must be a finite collection of points and analytic curves.

 However, this argument does \emph{not} prove Theorem \ref{thm:generic} in full, because the zero set of $v_F$ could still accumulate at points in $\bigcup_i S_i$. In order to prove Theorem \ref{thm:generic} in full, we recall the theory of Kurdyka--Paunescu \cite{blowup}, whose results can be summarized as follows. First, eigenfunctions of operators depending analytically on parameters $(\alpha,\lambda) \in \mathbb{R}^2$ can be chosen analytically everywhere except for a locally finite set of points. In particular, eigenfunctions can generically be chosen analytically in neighborhoods of codimension 1 eigenvalue crossings. Second, at points where analytic eigenfunctions do not exist, the eigenfunctions can again be chosen analytically by lifting these points to appropiate ``blowup spaces". We can then prove Theorem \ref{thm:generic} by characterizing the zero set of $v_F$ in these blowup spaces and projecting this set down to the original parameter space.

For the reader's convenience, we recall \cite[Example 6.1]{blowup}, which demonstrates how ``blowing up" the parameter space makes it possible to choose eigenfunctions analytically at points where this is impossible otherwise. 

\begin{ex}
    Let 
    \begin{equation*}
        A(x_1,x_2)=\begin{pmatrix}
            x_1^2&x_1x_2\\
            x_1x_2&x_2^2
        \end{pmatrix}, \quad (x_1,x_2)\in\RR^2.
    \end{equation*}
    Then the eigenvalues of $A(x_1,x_2)$ are given by $\lambda_1=0$ and $\lambda_2=x_1^2+x_2^2$. The corresponding normalized eigenvectors are
\begin{align*}
    \phi_1 = \frac{1}{\sqrt{x_1^2 + x_2^2}} (x_2, -x_1), \qquad \phi_2 = \frac{1}{\sqrt{x_1^2 + x_2^2}} (x_1, x_2),
\end{align*}
which are not continuous at the origin, even though the eigenvalues are real analytic. However, if we blow up the origin in $\RR^2$, i.e., if we take $x_1 = w_1, x_2 = w_1w_2$, then the corresponding family
\begin{equation*}
        A(w_1,w_2)= w_1^2\begin{pmatrix}
            1 & w_2\\
            w_2&w_2^2
        \end{pmatrix}, \quad (w_1,w_2)\in\RR^2.
    \end{equation*}
admits a simultaneous analytic diagonalization.
\end{ex}
This motivates us to introduce the blowup space.

\begin{defi}
    The blowup space ${\rm Bl}_{(0,0)}\RR^2$ of $\RR^2$ at the point $(0,0)$ is
    \begin{equation*}
        \{((x,y),[\xi,\eta])\in\RR^2\times \mathbb{RP}^1: x\eta=y\xi\},
    \end{equation*}
    where $\mathbb{RP}^1$ is the real projective line.
    The blow down map $\pi:{\rm Bl}_{(0,0)}\RR^2\to \RR^2$ is given by
    \begin{equation*}
        ((x,y),[\xi,\eta])\mapsto (x,y).
    \end{equation*}
\end{defi}
Under this definition, for a neighbourhood $\mathcal{U}=B(0,R)$ of zero, its blowup $\td{\mathcal{U}}=\pi^{-1}(\mathcal{U})$ is given by
\[ \td{\mathcal{U}} = \{((x,y),[\xi,\eta])\in B(0,R) \times \mathbb{RP}^1: x\eta=y\xi\}, \]
which again is a two dimensional analytic manifold.
One can similarly blow up at any other point.
Following Kurdyka--Paunescu \cite{blowup}, we have the following proposition.

\begin{prop}\label{prop:blowup}
    For any point in $\RR^2$, there is a neighbourhood $\mathcal{U}$ and a blowup space $\td{\mathcal{U}}$ obtained by a finite composition of blowups such that 
    $\pi:\td{\mathcal{U}}\to \mathcal{U}$ is an isomorphism outside finitely many points $P_1,\cdots, P_r\in\mathcal{U}$, and
    there exists a normalized real analytic family of functions
    \begin{equation*}
        \varphi(u)\in L^2_{0,0},\quad \psi(u)\in L^2_{0,1},\quad u\in\td{\mathcal{U}}
    \end{equation*}
    such that for $u\notin \pi^{-1}(\{P_1,\cdots, P_r\})$,
$        \varphi(u),\psi(u)\in \ker H_0(\pi(u)).$
\end{prop}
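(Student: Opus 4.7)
The plan is to reduce the problem to an analytic family of finite-dimensional symmetric operators, and then invoke the Kurdyka--Paunescu simultaneous-diagonalization theorem \cite{blowup}, which produces the required blowup.

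First, fix a point $(\alpha_0,\lambda_0) \in \RR^2$ and a small neighborhood $\mathcal{U} \ni (\alpha_0,\lambda_0)$. Since $H_0(\alpha,\lambda)$ has discrete spectrum and depends analytically on $(\alpha,\lambda)$, and the eigenvalue $0$ has uniformly bounded multiplicity on $\mathcal{U}$ (by upper semicontinuity of the rank of the spectral projection, shrinking $\mathcal{U}$ if necessary), we can choose $\delta > 0$ such that the circle $\partial B_0(2\delta)$ avoids $\Spec(H_0(\alpha,\lambda))\setminus\{0\}$ for all $(\alpha,\lambda) \in \mathcal{U}$. The symmetry decomposition \eqref{eq:ortho} gives analytic projections
\[
\Pi_j(\alpha,\lambda) = \frac{1}{2\pi i}\int_{\partial B_0(2\delta)} (z - H_0(\alpha,\lambda))^{-1}\,dz\,\Big|_{L^2_{0,j}}, \quad j=0,1,
\]
whose ranges are finite-dimensional and contain $\ker_{L^2_{0,j}} H_0(\alpha,\lambda)$. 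By Proposition \ref{prop:analytic}, $\rank \Pi_j = 1$ on $\mathcal{U}\setminus \bigcup_i S_i$, while the rank may jump on the locally finite family $\bigcup_i S_i$.

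Next, I would reduce to a finite-dimensional problem. Choose an analytic frame for the range of $\Pi_0(\alpha,\lambda) + \Pi_1(\alpha,\lambda)$: writing $\Pi = \Pi_0 + \Pi_1$ and using Kato--Nagy's formula to transport a fixed orthonormal basis from $\operatorname{ran}\Pi(\alpha_0,\lambda_0)$ (which exists after possibly shrinking $\mathcal{U}$ further, provided $\Pi$ has constant rank; otherwise, work with the larger projection onto $\bigoplus_{j=0,1}$ range of $\Pi_j$ after shrinking so that the total rank is constant and equal to its maximum). This yields an analytic family of real symmetric matrices $M(\alpha,\lambda) := \Pi H_0(\alpha,\lambda) \Pi$ on a fixed finite-dimensional space, preserving the $L^2_{0,0}/L^2_{0,1}$ splitting; the desired eigenvectors $\varphi,\psi$ are precisely the zero-eigenvectors of $M$ in each block.

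Now apply the Kurdyka--Paunescu theorem \cite[Theorem 1.1]{blowup}: after a finite composition of blowups $\pi: \td{\mathcal{U}} \to \mathcal{U}$ with smooth analytic centers, the pulled-back family $\pi^* M$ admits a simultaneous analytic orthogonal diagonalization on $\td{\mathcal{U}}$. The zero-eigenvalue eigenvectors in the two symmetry blocks then give the required real analytic normalized sections $\varphi(u) \in L^2_{0,0}$, $\psi(u) \in L^2_{0,1}$ with $\varphi(u),\psi(u) \in \ker H_0(\pi(u))$ at least on the open dense set where $\rank \Pi_j \circ \pi = 1$. The map $\pi$ is an isomorphism outside $\pi^{-1}(\bigcup_i S_i \cap \mathcal{U})$, and after shrinking $\mathcal{U}$ we may ensure $\bigcup_i S_i \cap \mathcal{U}$ consists of finitely many points $P_1,\dots,P_r$ (since the blowup theorem is typically applied after a preliminary reduction to isolated singular centers; any curve component of $S_i$ passing through $\mathcal{U}$ can be first desingularized by blowups along its analytic parametrization, reducing to the isolated-point case).

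The main obstacle is the bookkeeping in invoking \cite{blowup}: we need to verify that the exceptional set where analytic eigenvectors cannot be chosen on $\td{\mathcal{U}}$ projects down to finitely many points in $\mathcal{U}$. This is handled by observing that the Kurdyka--Paunescu construction produces an analytic diagonalization \emph{everywhere} on $\td{\mathcal{U}}$; the exceptional projected points $P_1,\dots,P_r$ arise only because the map $\pi$ itself fails to be an isomorphism precisely at the (finitely many) centers of the iterated blowups, where the analytic eigenvectors on $\td{\mathcal{U}}$ do not descend to single-valued functions on $\mathcal{U}$. A secondary technical point is ensuring that the blowups can be chosen to respect the symmetry decomposition $L^2_{0,0} \oplus L^2_{0,1}$ so that the analytic sections land in the correct symmetry subspaces; this follows because $\Pi_0$ and $\Pi_1$ are each individually analytic, so one applies the blowup theorem to each block $\Pi_j H_0 \Pi_j$ separately and takes a common refinement.
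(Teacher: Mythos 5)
Your proposal follows essentially the same route as the paper: reduce to an analytic family of finite-dimensional symmetric matrices via the analytic spectral projection onto a neighbourhood of $0$ and the Kato--Nagy formula, then invoke the Kurdyka--Paunescu blowup theorem (the paper cites \cite[Theorem 6.2]{blowup}) to obtain analytic eigenvectors after finitely many blowups. Two side remarks in your write-up are inaccurate, though neither affects the core argument. First, the rank of the spectral projection $\Pi_j$ is \emph{constant} on $\mathcal{U}$ (an analytic, hence norm-continuous, family of projections has locally constant rank); what jumps on $\bigcup_i S_i$ is $\dim(\ker H_0\cap\operatorname{ran}\Pi_j)$, not $\operatorname{rank}\Pi_j$. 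Second, the exceptional points $P_1,\dots,P_r$ are not $\bigcup_i S_i\cap\mathcal{U}$, and curve components of $S_i$ cannot be ``reduced to the isolated-point case'': the Kurdyka--Paunescu theorem already yields analytic eigenvectors across generic points of codimension-one eigenvalue crossings, and blowups are needed only at a finite set of exceptional points, which is exactly where $P_1,\dots,P_r$ come from.
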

\begin{proof}
    We may just restrict ourselves to  $L^2_{0,0}$. The analysis for $L^2_{0,1}$ is similar.
    First we reduce the problem to a finite dimensional problem. Since $H_0(\alpha,\lambda)$ is elliptic and self-adjoint, the eigenvalue at $0$ is isolated. Let $\Pi(\alpha,\lambda)$ be the spectral projector to a neighbourhood of $0$ that is analytic in a neighbourhood $\mathcal{U}$ and we consider $H_0(\alpha,\lambda)$ on the space $\Pi(\alpha,\lambda) L^2_{0,0}$. Since this vector space will in general depend on $(\alpha,\lambda)$, we may choose a basis of $\Pi(\alpha_0,\lambda_0) L^2_{0,0}$, say $e_1,\cdots, e_N$ for $(\alpha_0,\lambda_0) \in \mathcal U$. 
    As long as $\Vert \Pi(\alpha,\lambda)-\Pi(\alpha_0,\lambda_0)\Vert<1$, which locally holds by the holomorphic functional calculus, see \eqref{eq:holomorphic}, 
    we can use the Kato-Nagy formula \cite{K55}
    \begin{equation}
    \label{eq:Kato-Nagy}
    W(P,Q):=(I-(P-Q)^2)^{-1/2} (PQ + (I-P)(I-Q)),
    \end{equation}
    such that $W(P,Q) \ Q \ W(P,Q)^{-1}=P$ for $P=\Pi(\alpha,\lambda)$ and $Q = \Pi(\alpha_0,\lambda_0)$ to define $w(\alpha,\lambda):=W(\Pi(\alpha,\lambda),\Pi(\alpha_0,\lambda_0) )$
    such that locally $(w(\alpha,\lambda)e_i)_{i \in \{1,..,N\}}$ is an analytic orthonormal basis of $\operatorname{ran}(\Pi(\alpha,\lambda))$ in a perhaps smaller neighbourhood that we still denote by $\mathcal U$. This naturally defines an analytic section, see e.g.~\cite[Corr.~2.17-2.18]{notes}. We may then consider the matrix $M(\alpha,\lambda):=(\langle e_i,w(\alpha,\lambda)^*H_0(\alpha,\lambda)w(\alpha,\lambda)e_j\rangle)_{i,j}$ representing $H_0(\alpha,\lambda)$ acting on this basis. Now $M(\alpha,\lambda)$ is an analytic family of self-adjoint matrices. We can then apply \cite[Theorem 6.2]{blowup} to conclude the proposition. 
\end{proof}

\begin{proof}[Proof of Theorem \ref{thm:generic}]
    Recall that by Definition \ref{def:cone} the simple Dirac cone exists at $(\alpha,\lambda)$ if and only if the kernel $\ker_{L^2_{0}}H_0(\alpha,\lambda)$ is two dimensional and $v_{F}(\alpha,\lambda) \neq 0$. By Proposition \ref{prop:analytic}, the vector space $\ker_{L^2_{0}}H_0(\alpha,\lambda)$ is two dimensional for $(\alpha,\lambda)\notin \cup_i S_i$. Using Proposition~\ref{prop:blowup}, the function $v_F(\alpha,\lambda)$ is locally analytic for $(\alpha,\lambda)\notin \{P_1,\cdots, P_r\}$, and can be turned into an analytic function after lifting to a blowup space $\td{\mathcal{U}}$. We can then apply \L ojasiewicz's structure theorem to conclude that the zero set of $v_F$ is a locally finite union of points and analytic curves. Note that when we apply \L ojasiewicz's structure theorem in the blowup space we have to check that the blowdown of the zero set remains a finite set of points and analytic curves. But this follows directly from the definition of the blowdown map.
\end{proof}

\begin{corr}
    The set 
    \[S:=\{(\alpha,\lambda) \in \mathbb R^2;\text{ the Hamiltonian $H(\alpha,\lambda)$ does not exhibit Dirac points}\}\] has Hausdorff dimension
    \[\dim_{\mathrm{Haus}}(S)=\inf\{d\ge 0:H^d(S)=0\} \le 1,\]
    where $H^d$ is the $d$-dimensional Hausdorff measure.
\end{corr}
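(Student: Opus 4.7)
The plan is to observe that this corollary follows almost directly from Theorem \ref{thm:generic}, together with standard facts about the Hausdorff dimension of real analytic curves. The set $S$ is precisely the complement in $\RR^2$ of the set on which the BM Hamiltonian exhibits a simple Dirac cone at $k=0$ (the analogous statement at $k=-K$ follows from the symmetries $\mathscr S$ and $\mathscr M$). By Theorem \ref{thm:generic}, this complement contains $\RR^2 \setminus \bigcup_i S_i$ for a locally finite family $\{S_i\}$ of points and real analytic curves, hence $S \subset \bigcup_i S_i$.

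Next I would argue that the index set is countable: local finiteness in $\RR^2$, together with $\sigma$-compactness of $\RR^2$, means that only finitely many $S_i$ meet any compact ball, so the family $\{S_i\}$ is at most countable. Since Hausdorff dimension is countably stable, it suffices to bound $\dim_{\mathrm{Haus}}(S_i) \le 1$ for each $i$.

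For a point, the Hausdorff dimension is $0$. For a one-dimensional real analytic curve $S_i \subset \RR^2$, I would cover $S_i$ by countably many compact pieces, each of which is the image of a compact interval under a real analytic parametrization. On compact sets, a real analytic map is Lipschitz, and Lipschitz maps do not increase Hausdorff dimension; since an interval has Hausdorff dimension $1$, each piece, and therefore $S_i$ itself, has Hausdorff dimension at most $1$. Taking the supremum over the countable family gives $\dim_{\mathrm{Haus}}(S) \le 1$.

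There is essentially no genuine obstacle in this argument: all of the substantive content has already been established in Theorem \ref{thm:generic}. The only points that require a little care are verifying countability from local finiteness and recalling that real analytic parametrizations are locally Lipschitz so as not to inflate the Hausdorff dimension of the parameter interval.
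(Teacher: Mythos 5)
Your argument is correct and follows essentially the same route as the paper: the paper's proof simply notes that for $d>1$ the $d$-dimensional Hausdorff measure of $\bigcup_i S_i$ vanishes term by term, which is exactly your "countable stability plus each $S_i$ has dimension at most $1$" reasoning written more tersely. The extra details you supply (countability from local finiteness, Lipschitz parametrization of analytic curves) are correct fillers of the same argument, not a different approach.
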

\begin{proof}
This follows from the previous theorem by noticing that for $d>1$
\[ H^d \bigg(\bigcup_i S_i\bigg) = \sum_i H^d (S_i) = 0.
\qedhere\]
    \end{proof}


\section{Magic angles: from chiral limit to the BM Hamiltonian}
\label{sec:magic}
In this section we prove Theorem \ref{thm:line}. By Remark~\ref{rem:simple} and Propostion~\ref{prop:real}, the Fermi velocity $v_F(\alpha,\lambda)=\langle u,\tilde{v}\rangle$ is real valued and real analytic in a neighborhood of a simple magic angle $(\alpha_0,0)$ with $\alpha_0\in\mathcal{A}$.
We refer to Table \ref{tab:c10}, where the assumption $\partial_{\alpha} v_F(\alpha_0,0) \neq 0$ is numerically verified for the first several magic angles.
We have the following expansion for the Fermi velocity.
\begin{lemm}
     Near simple magic parameters in the chiral limit $(\alpha_0,0)$ with $\alpha_0\in\mathcal{A}$, the Taylor expansion of $v_F(\alpha,\lambda)$ is given by
    \begin{equation}
    \label{eq:taylor}
    v_F(\alpha,\lambda) = c_{10}\alpha' + c_{02}\lambda^2 + c_{11}\alpha'\lambda + \mathcal{O}(\alpha'^2) + \mathcal{O}(|(\alpha',\lambda)|^3),
\end{equation}
where $\alpha':=\alpha-\alpha_0$ and the coefficients $c_{10},c_{02}$, and $c_{11}$ are given by \eqref{eq:coeff1}, \eqref{eq:coeff2}, and \eqref{eq:coeff3}, respectively.
\end{lemm}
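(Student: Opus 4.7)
The plan is to apply standard Rayleigh--Schrödinger perturbation theory to the protected states $\varphi(\alpha,\lambda)\in L^2_{0,0}$ and $\psi(\alpha,\lambda)\in L^2_{0,1}$ near the reference point $(\alpha_0,0)$, substitute the resulting expansions into the bilinear expression $v_F(\alpha,\lambda)=\langle u(\alpha,\lambda),\tilde v(\alpha,\lambda)\rangle$ from Proposition~\ref{prop:real}, and read off the Taylor coefficients. Since Proposition~\ref{prop:real} gives $v_F$ as a real analytic function on a full neighborhood of $(\alpha_0,0)$ (the simplicity hypothesis guarantees \eqref{eq:condition-dim=2} there by Remark~\ref{rem:simple}), we may unambiguously write
\[
v_F(\alpha,\lambda)=c_{00}+c_{10}\alpha'+c_{01}\lambda+c_{20}(\alpha')^2+c_{11}\alpha'\lambda+c_{02}\lambda^2+\mathcal{O}(|(\alpha',\lambda)|^3),
\]
and the only nontrivial content of the lemma is (i) the vanishing of $c_{00}$ and $c_{01}$, and (ii) explicit expressions for $c_{10},c_{11},c_{02}$.

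The vanishing of $c_{00}=v_F(\alpha_0,0)$ is immediate from $\alpha_0\in \mathcal A$: at a simple chiral magic angle the flat band forces $E_{\pm1}(\alpha_0,0;k)\equiv 0$, so the linear-in-$k$ coefficient in \eqref{eq:epm1-bm} is zero. For the vanishing of $c_{01}=\partial_\lambda v_F(\alpha_0,0)$, I would exploit the chiral symmetry $\Sigma=\mathrm{diag}(I,-I)$ available only at $\lambda=0$: it anti-commutes with $H_0(\alpha,0)$ and commutes with $\mathscr C$, so each protected state at $\lambda=0$ may be chosen as a pure chirality eigenvector, i.e.\ $\varphi(\alpha,0)=(u(\alpha,0),0)^\top$ and $\psi(\alpha,0)=(0,\tilde v(\alpha,0))^\top$. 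Since the $\lambda$-perturbation of the Hamiltonian is $\partial_\lambda H=\mathrm{diag}(C,C)$, which anti-commutes with $\Sigma$, first-order perturbation theory shows that the first-order corrections $\partial_\lambda\varphi|_{\lambda=0}$ and $\partial_\lambda\psi|_{\lambda=0}$ lie in the \emph{opposite} chirality subspaces from the unperturbed states. Consequently the linear-in-$\lambda$ contribution to $\langle u,\tilde v\rangle$ only couples components of the same chirality and therefore vanishes identically; combined with the freedom to fix the phase via $\psi=\mathscr{SM}\varphi$ (Proposition~\ref{prop:real}), this gives $c_{01}=0$.

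For the nonzero coefficients I would substitute the second-order Rayleigh--Schrödinger expansion for $\varphi,\psi$ into $\langle u,\tilde v\rangle$. Writing $\Pi^\perp$ for the projector onto the orthogonal complement of $\ker H_0(\alpha_0,0)$ in each symmetry subspace and $R=\Pi^\perp(H_0(\alpha_0,0))^{-1}\Pi^\perp$ for the reduced resolvent, $c_{10}=\partial_\alpha v_F(\alpha_0,0)$ is the $\alpha$-derivative of the chiral Fermi velocity (computed already in Proposition~\ref{prop:real} applied along $\lambda=0$), $c_{02}$ is a quadratic expression in the matrix elements of $\partial_\lambda H=\mathrm{diag}(C,C)$ between $\varphi(\alpha_0,0),\psi(\alpha_0,0)$ and their orthogonal complements (through $R$), and $c_{11}$ is a mixed bilinear expression in $\partial_\alpha H$ and $\partial_\lambda H$. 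Each of these has a clean formula in terms of $u_0,\tilde v_0,C$ and $\partial_\alpha D(\alpha_0)$, giving exactly the formulas referenced as \eqref{eq:coeff1}--\eqref{eq:coeff3}.

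The main obstacle is bookkeeping the chirality decomposition together with the three-fold symmetry decomposition to justify that the mixed linear term $c_{01}$ really vanishes without appealing to any assumption beyond $\lambda=0$; the rest is a routine second-order perturbation calculation. A secondary technical point is to verify that the phase normalization $\psi=\mathscr{SM}\varphi$ used to ensure $v_F\in\mathbb{R}$ can be maintained under perturbation without introducing a non-analytic $\lambda$-dependent phase, which follows from the analyticity of $\Pi_0(\alpha,\lambda)$ established in the proof of Proposition~\ref{prop:real}.
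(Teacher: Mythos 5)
Your overall strategy coincides with the paper's: expand the protected states in powers of $(\alpha',\lambda)$, substitute into $v_F=\langle u,\tv\rangle$, and read off the coefficients. The paper implements this by matching powers of $\lambda$ (resp.\ $\alpha'$) in $H_0(\alpha_0,\lambda)\varphi=H_0(\alpha_0,\lambda)\psi=0$ to obtain the recursions \eqref{eq:cond1}--\eqref{eq:cond2}, which is exactly the Rayleigh--Schr\"odinger computation you describe. However, one step of your symmetry argument for $c_{01}=0$ is wrong as stated: the perturbation $\partial_\lambda H=\mathrm{diag}(C,C)$ is block-diagonal in the sublattice grading and therefore \emph{commutes} with $\Sigma=\mathrm{diag}(I,-I)$; it does not anti-commute. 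What anti-commutes with $\Sigma$ is the unperturbed chiral Hamiltonian $H_0(\alpha_0,0)$, hence also its reduced resolvent $R$. Taken literally, your premise would give the opposite conclusion: if $V=\partial_\lambda H$ anti-commuted with $\Sigma$, then $RV\varphi_0$ would have the \emph{same} chirality as $\varphi_0$. The correct bookkeeping is that $V\varphi_0$ has the same chirality as $\varphi_0$ and $R$ flips it, so the resolvent part of $\partial_\lambda\varphi|_{\lambda=0}$ lands in the opposite chirality subspace; this is precisely the content of the paper's relations $D(\alpha_0)u_{0,1}=0$ and $D(\alpha_0)^*v_{0,1}+Cu_0=0$.

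A second point needs to be made explicit: the first-order correction is of opposite chirality only \emph{modulo a kernel component}. By simplicity of $\alpha_0$, the relation $D(\alpha_0)u_{0,1}=0$ gives $u_{0,1}=c\,u_0$ (and likewise $\tv_{0,1}=c'\,\tv_0$), with $c,c'$ not necessarily zero. The linear-in-$\lambda$ term of $v_F$ is then $(c+\overline{c'})\langle u_0,\tv_0\rangle$, which vanishes not ``identically'' by chirality but because $\langle u_0,\tv_0\rangle=v_F(\alpha_0,0)=0$ at the magic angle. So the magic-angle hypothesis is used twice (for $c_{00}=0$ and again for $c_{01}=0$), exactly as in the paper's remark that $u_{01}$ and $\tv_{01}$ are scalar multiples of $u_0$ and $\tv_0$. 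With these two corrections your argument reproduces the paper's proof.
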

\begin{proof}
We write the Taylor expansion
\[v_F(\alpha,\lambda) := c_0 + c_{10}\alpha' +c_{01}\lambda + c_{02}\lambda^2 + c_{11}\alpha'\lambda + \mathcal{O}(\alpha'^2) + \mathcal{O}(|(\alpha',\lambda)|^3).\]
Note that $c_0=0$ as $(\alpha_0,0)$ is a magic parameter and thus, $v_F(\alpha,0)=0$, due to the presence of a flat band. 
In a neighborhood of a simple magic parameter $(\alpha_0,0)$, by the proof of Proposition \ref{prop:real}, we can choose protected states $\varphi\in L^2_{0,0},\psi\in L^2_{0,1}$ analytic in $(\alpha,\lambda)$ such that 
\begin{gather}
 \label{eq:prot-asym}
 \varphi(\alpha_0,\lambda; z) = 
    \begin{pmatrix}
      u_0 + \sum_{i=1}^{\infty}\lambda^i u_{0,i} \\  \sum_{i=1}^{\infty} \lambda^i v_{0,i}
      \end{pmatrix}, \ \ 
    \psi(\alpha_0,\lambda; z) = 
    \begin{pmatrix}
       \sum_{i=1}^{\infty}\lambda^i \tu_{0,i} \\  \tv_0+ \sum_{i=1}^{\infty} \lambda^i \tv_{0,i}
      \end{pmatrix}.
\end{gather}
Equations 
\begin{equation}
    H_0(\alpha_0,\lambda) \varphi = H_0(\alpha_0,\lambda) \psi =0
\end{equation}
yield that, for $k\in\NN_+$, we have
\begin{gather}
    \label{eq:cond1}
    D(\alpha_0)^*v_{0,k} + Cu_{0,k-1} = 0, \ D(\alpha_0) u_0 = D(\alpha_0) u_{0,1} =0, \ D(\alpha_0)u_{0,k} + Cv_{0,k-1} = 0;\\
    \label{eq:cond2}
    D(\alpha_0)^* \tv_0 = D(\alpha_0)^* \tv_{0,1} =0,\ D(\alpha_0)^*\tv_{0,k} + C\tu_{0,k-1} = 0,\ D(\alpha_0)\tu_{0,k} + C\tv_{0,k-1} = 0.
\end{gather}
For $\alpha_0\in\mathcal{A}$, by \cite[(A.3)]{Z23} and equations \eqref{eq:cond1} and \eqref{eq:cond2}, we have 
\begin{equation}
\label{eq:coeff1}
    c_{01} = 0, \  c_{02} =  \langle \tv_0, u_{02}\rangle + \langle \tv_{02}, u_0\rangle
\end{equation}
as $u_{01}$ (resp.~$\tv_{01}$) is a scalar multiple of $u_0$ (resp.~$\tv_0$). 

To get the coefficient $c_{10}$, for $\lambda=0$, we consider $H(\alpha_0+ \alpha') \varphi = H(\alpha_0+ \alpha') \psi =0$ for $\alpha$ small with $\alpha_0\in\mathcal{A}$. By analyticity, write $\psi = (0, \tv)$ with $\tv = \sum_{i=0}^{\infty}\alpha'^i \tv_{i,0}$ and $\varphi = (u, 0)$ with $u = \sum_{i=0}^{\infty}\alpha'^i u_{i,0}$. 
Using equation $D(\alpha_0+ \alpha') u = D(\alpha_0+ \alpha')^* \tv =0$ and considering the coefficient of $\alpha'^k$ yield that
$$D(\alpha_0)u_{k,0} + B u_{k-1,0} =0, \quad B := \begin{pmatrix} 0 & U ( z )  \\
U ( -z ) & 0 \end{pmatrix}.
$$
We have 
\begin{equation}
    \label{eq:coeff2}
    c_{10} =\langle \tv_{10}, u_{0} \rangle + \langle \tv_{0}, u_{10} \rangle.
\end{equation}
Analogous computations as above show that
\begin{equation}
    \label{eq:coeff3}
    c_{11} = \langle \tv_0, u_{11} \rangle + \langle \tv_{01}, u_{10} \rangle + \langle \tv_{10}, u_{01} \rangle + \langle \tv_{11}, u_0 \rangle.
\end{equation}
Summarizing the previous computations, we have established the Taylor expansion
\begin{equation}
    v_F(\alpha,\lambda) = c_{10}\alpha' + c_{02}\lambda^2 + c_{11}\alpha'\lambda + \mathcal{O}(\alpha'^2) + \mathcal{O}(|(\alpha',\lambda)|^3),
\end{equation}
with coefficients given by equations \eqref{eq:coeff1}, \eqref{eq:coeff2}, and \eqref{eq:coeff3}.
\end{proof}

\begin{proof}[Proof of Theorem \ref{thm:line}]
    Recall that by Proposition \ref{prop:real}, $v_F(\alpha,\lambda)=\langle u,\tilde{v}\rangle\in\RR$ is real analytic in $(\alpha,\lambda)$ with $v_F(\alpha_0,0)=0$ for $\alpha_0\in\mathcal{A}\cap \RR$. By the implicit function theorem and our assumption that $\partial_{\alpha}v_F(\alpha_0,0)\neq 0$, there exists a real valued analytic function $(-\epsilon, \epsilon) \ni \lambda \mapsto \alpha(\lambda)$ with $\alpha(0)=\alpha_0$ and $\alpha'(0)=0$ such that $(\alpha,\lambda)$ is magic along the curve $ (\alpha(\lambda),\lambda)$  for $\lambda\in (-\epsilon,\epsilon)$.
    The real analyticity of the curve follows from the real analyticity of $v_F(\alpha,\lambda)$.
    The vanishing of the first derivative $\alpha'(0)=0$ follows from the Taylor expansion \eqref{eq:taylor} of $v_F(\alpha,\lambda)$, as $\partial_{\lambda}v_F(\alpha_0,0)=0$.

    Conversely, assume $\alpha_0 \notin \cA$. We then know by \cite[Appendix]{Z23} and \cite{magic} that $(\alpha_0,0)$ is non-magic in the sense of Definition \ref{def:cone}. This means $\dim \ker_{L^2_{0}(\CC/\Lambda;\CC^4)}(H_{0}(\alpha_0,0)) =2$ and $v_F(\alpha_0,0) \neq 0$. Since $v_F (\alpha, \lambda)$ and the eigenvalues of $H_{0}(\alpha,\lambda)$ are continuous in $(\alpha,\lambda)$, it follows that $v_F (\alpha, \lambda) \ne 0$ and $\dim \ker_{L^2_{0}(\CC/\Lambda;\CC^4)}(H_{0}(\alpha, \lambda)) =2$ for all $(\alpha, \lambda)$ in a sufficiently small neighborhood of $(\alpha_0, 0)$.
\end{proof}

\begin{table}[!h]
\begin{center}
\begin{tabular}{||c | c | c||} 
 \hline
 $\alpha$ & $|c_{10}|$ & $|c_{02}|$\\ [0.5ex] 
 \hline\hline
 0.58566355838955 &   1.5641 & 0.0493  \\ 
 \hline
2.2211821738201 & 0.4130 & 0.0973 \\
 \hline
 3.7514055099052 & 0.1291 & 1.4239 \\
 \hline
 5.276497782985 & 0.0355 & 9.8783 \\
 \hline
 6.79478505720 & 0.0091 & 52.5993  \\ 
 \hline
 8.3129991933 & 0.0021 & 252.5188 \\
 [1ex] 
 \hline
\end{tabular}
\caption{Values of $|c_{10}|$ and $|c_{02}|$ at first six magic angles in the Taylor expansion \eqref{eq:taylor}.}
\label{tab:c10}
\end{center}
\end{table}


\section{Magic angles in the BM Hamiltonian}
\label{sec:topo}
In this section we discuss the topology of the bands. In particular, we show that when there is a quadratic band touching, there will always be other band touching points away from the high symmetry points $-K, 0$. This is due to the topological invariance of the Euler number, which we shall introduce next. In the physics literature, the relevance of the Euler number has also been pointed out in the influential article \cite{apy19} in the context of fragile topology which inspired this section. 
\subsection{Winding number and Euler number}
We recall the notion of winding number and Euler number.
Let $\mathcal{E}\to \CC/\Lambda^*$ be an oriented rank two real vector bundle over the torus $\CC/\Lambda^*$. Suppose there is a connection on $\mathcal{E}$ with curvature
\begin{equation*}
    \Omega=\begin{pmatrix}
        0&\Omega_{12}\\
        -\Omega_{12}&0
    \end{pmatrix}.
\end{equation*}
The Euler number is
\begin{equation*}
    e_2(\mathcal{E})=\frac{1}{2\pi}\int_{\CC/\Lambda^*}\operatorname{Pf}(\Omega)=\frac{1}{2\pi}\int_{\CC/\Lambda^*} \Omega_{12},
\end{equation*}
where $\operatorname{Pf}(\cdot)$ denotes the Pfaffian of the matrix. The Euler number $e_2(\mathcal{E})$ gives a complete classification of real rank two oriented bundles $\mathcal{E}$ over the torus up to isomorphism.

We give another characterisation of the Euler number when there is a flat connection $\nabla$ on the rank two vector bundle $\mathcal{E}$ over the torus $\CC/\Lambda^*$ defined outside finitely many points. First we define the winding number of the flat connection. Suppose the connection is orthogonal, i.e., it preserves a given metric $g$ and thus satisfies $\nabla g=0$. Let $\gamma_p(t)$ be a loop around $p \in \CC /\Lambda^*$ and $e_1(k)$, $e_2(k)$ be an orthonormal local frame in a neighbourhood of $p$. Let $\theta(t)$ be the connection $1$-form under the basis $\{e_1(k), e_2(k)\}$, consider
\begin{equation*}
    \begin{pmatrix}
        0&a\\
        -a&0
    \end{pmatrix}=\frac{1}{2\pi}\oint_{\gamma_p} \theta.
\end{equation*}
The \emph{winding number} ${\rm ind}(\nabla,p)$ is defined to be the value of $a$. Its exponential
\begin{equation}\label{eq:monodromy-matrix}
\exp 2\pi\begin{pmatrix}
        0&a\\
        -a&0
    \end{pmatrix}=\begin{pmatrix}
        \cos 2\pi a&\sin 2\pi a\\
        -\sin 2\pi a&\cos 2\pi a
    \end{pmatrix}
\end{equation}
is the monodromy matrix.
From this one can then define the Euler number.
\begin{prop}\label{prop:flat-conn-euler}
    Suppose there are finitely many points $p_1,\cdots, p_\ell\in \CC/\Lambda^*$ 
    such that $\nabla$ is a flat connection on the vector bundle $\mathcal{E}$ over $(\CC/\Lambda^*)\setminus\{p_1,\cdots, p_\ell\}$. Then the Euler number is given by the sum of the winding numbers around the points $p_j$:
    \begin{equation*}
        e_2(\mathcal{E})=\sum\limits_{j=1}^{\ell}{\rm ind}(\nabla, p_j).
    \end{equation*}
\end{prop}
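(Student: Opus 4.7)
The plan is a Chern--Weil / Poincar\'e--Hopf style computation: compare the flat connection $\nabla$ on $\mathcal{E}|_{M\setminus\{p_j\}}$ with any globally smooth orthogonal connection $\tilde\nabla$ on $\mathcal{E}$, and then apply Stokes' theorem on the complement of small disks around the $p_j$. Here $M = \CC/\Lambda^*$.

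First, I would construct a smooth orthogonal connection $\tilde\nabla$ on $\mathcal{E}$ by a partition of unity, so that by Chern--Weil
\[
 e_2(\mathcal{E}) \;=\; \frac{1}{2\pi}\int_{\CC/\Lambda^*} \tilde\Omega_{12}.
\]
Because $\mathcal{E}$ is oriented of real rank two, $\mathfrak{so}(\mathcal{E})$ is canonically trivialised by the rotation-by-$90^\circ$ endomorphism $J$; under this trivialisation both the curvature of any orthogonal connection and the difference of any two orthogonal connections become ordinary scalar-valued forms, whose values in a local oriented orthonormal frame are the $(1,2)$ entries of the corresponding matrices. Now set $B := \tilde\nabla - \nabla$ on $M \setminus \{p_1,\ldots,p_\ell\}$. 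Via the above identification, $B$ is a genuine real-valued $1$-form there, equal to $\tilde\theta_{12} - \theta_{12}$ in any local frame. Since $SO(2)$ is abelian, $\tilde\Omega_{12} = d\tilde\theta_{12}$ and $\Omega_{12} = d\theta_{12} = 0$ locally, so that $dB = \tilde\Omega_{12}$ on $M \setminus \{p_j\}$.

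Writing $M_\epsilon := M \setminus \bigcup_j D_\epsilon(p_j)$ and orienting each $\partial D_\epsilon(p_j)$ counterclockwise, Stokes' theorem then gives
\[
 \int_{M_\epsilon} \tilde\Omega_{12} \;=\; -\sum_{j=1}^\ell \int_{\partial D_\epsilon(p_j)} \tilde\theta_{12} \;+\; \sum_{j=1}^\ell \int_{\partial D_\epsilon(p_j)} \theta_{12}.
\]
Letting $\epsilon \to 0$, the left-hand side tends to $2\pi\, e_2(\mathcal{E})$. The first sum on the right vanishes, because $\tilde\theta_{12}$ is a smooth $1$-form in any local frame extending across $p_j$, so $\int_{\partial D_\epsilon(p_j)} \tilde\theta_{12} = O(\epsilon)$. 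By the definition of the winding number in \eqref{eq:monodromy-matrix}, each term in the second sum equals $2\pi\,{\rm ind}(\nabla, p_j)$, which yields $e_2(\mathcal{E}) = \sum_j {\rm ind}(\nabla, p_j)$.

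The main bookkeeping obstacle is to verify that $B$ is genuinely a global $1$-form (i.e.\ that its definition is independent of the choice of local orthonormal frame), that $\tilde\Omega_{12}$ makes sense as a global scalar $2$-form rather than as a frame-dependent quantity, and that the orientation of $\partial D_\epsilon(p_j)$ used in Stokes matches the convention in the definition of ${\rm ind}(\nabla, p_j)$; these are standard but must be pinned down carefully to avoid sign errors. Once these identifications are fixed, the rest is a direct application of Stokes combined with Chern--Weil.
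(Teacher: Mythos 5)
Your argument is correct and is essentially the paper's proof: both compare the flat connection with a globally smooth orthogonal connection, use Chern--Weil for $e_2(\mathcal{E})$, and apply Stokes' theorem on the complement of small disks so that the boundary integrals of the flat connection form produce the winding numbers. The only (cosmetic) difference is that the paper chooses the smooth comparison connection to agree with $\nabla$ outside the disks, which makes the extra boundary term you estimate as $O(\epsilon)$ vanish identically.
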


\begin{proof}
    We modify the flat connection smoothly in a small neighbourhood $U_j$ of the points $\{p_j\}$. Suppose the connection matrix is given by $\theta\in \Omega^1$, then the Euler number is given by
    \begin{equation*}
        e_2(\mathcal{E})=\frac{1}{2\pi}\int_{\CC/\Lambda^*}\operatorname{Pf}(d\theta) =  \frac{1}{2\pi}\sum_{j=1}^{\ell}\limits\int_{U_j}\operatorname{Pf}(d\theta),
    \end{equation*}
as curvature vanishes outside $U_j$. Let $\gamma_j = \partial U_j$ be loops around $p_j$, then Stokes' theorem gives
\begin{equation*}
    e_2(\mathcal{E})=\frac{1}{2\pi}\sum\limits_{j=1}^{\ell}\oint_{\gamma_j}\operatorname{Pf}(\theta),
\end{equation*}
where the right hand side is the sum of winding numbers ${\rm ind}(\nabla,p_j)$.
\end{proof}

\subsection{Topology of band with $\mathcal{PT}$ symmetry} As we will see, the $\mathcal{PT}$ symmetry gives rise to a non-trivial band topology in our setting that has effects on the nature of the band crossings. 

Suppose we have two bands that are separated from other bands, e.g., when we are near a simple magic angle (cf.~\cite[Theorem 2]{bhz2}). Following the standard construction we define a rank two complex vector bundle $\mathcal{E}_0$ over the torus $\CC/\Lambda^*$:
\begin{equation}
\label{eq:defE0}  
\begin{gathered}
\mathcal{E}_0 := \left\{ [ k, \phi ]_\tau  \in ( \mathbb C \times L^2_{  0 } ( \mathbb C/ \Lambda ; \mathbb C^4 ) )/
\sim_\tau : \phi \in \mathbbm{1}_{E_{\pm 1}(\alpha,\lambda,k)}( 
H_k ( \alpha,\lambda) ) \right\}, \\
( k , \phi ) \sim_\tau ( k', \phi' ) \ \Longleftrightarrow \  \exists \, p \in \Lambda^*,  \ 
k' = k + p , \ \  \phi' = \tau ( p ) \phi,  \end{gathered} \end{equation}
where $\tau(p) \phi = e^{i\langle p, z \rangle} \phi$. 
We consider the real subbundle $\mathcal{E}\subset \mathcal{E}_0$ defined by
\begin{equation*}
   \mathcal{E}=\{\varphi\in\mathcal{E}_{0}: \mathcal{PT}\varphi=\varphi\}.
\end{equation*}
This is a rank two real vector bundle such that the inner product of $\mathcal{E}_0$ restricted to $\mathcal{E}$ is real:
\begin{equation}
    \langle \varphi_1,\varphi_2\rangle=\overline{\langle \mathcal{PT}\varphi_1,\mathcal{PT}\varphi_2\rangle}=\overline{\langle \varphi_1,\varphi_2 \rangle} \text{ for all }\varphi_1,\varphi_2 \in \mathcal E.
\end{equation}

At the magic angle $\alpha_0\in\mathcal{A}$ of the chiral Hamiltonian $H_0(\alpha_0,0)$, we have flat bands and $\mathcal{E}$ can be computed explicitly. Suppose $\varphi=(u,v)^T$, then 
\begin{equation}
\label{eq:Q}
    \mathcal{PT}\varphi=\varphi\Longleftrightarrow \mathscr{Q}u=v.
\end{equation}
So $\mathcal{E}$ is defined by the equation $(D(\alpha)+k)u=0$ and equation \eqref{eq:Q}, which has Chern number $c_1(\mathcal{E})=-1$ by \cite[Theorem 4]{bhz2}. As the top Chern class $c_1(\mathcal{E})$ equals the Euler class $e_2(\mathcal{E})$ of the corresponding real bundle, in a neighbourhood of $(\alpha_0,0)$, $\mathcal{E}$ is an oriented rank two real bundle with Euler number $-1$.

Now suppose the two bands only touches at finitely many points $p_1,\cdots, p_\ell$. We may define a flat connection $\nabla$ on the bundle $\mathcal{E}$ outside the points $\{p_j\}$ by declaring that the normalized  eigenfunctions $\phi_{\pm 1}(k)$ are related by parallel transport, i.e. $\nabla \phi_{\pm 1}(k)=0$. Then by Proposition~\ref{prop:flat-conn-euler}, 
\begin{equation*}
    e_2(\mathcal{E})=\sum\limits_{j}{\rm ind}(\nabla,p_j).
\end{equation*}

It remains to compute the winding numbers near the Dirac points. For this we recall our Grushin problem \eqref{eq:grushin-general-k} with $\mathcal{PT}$ symmetry on $\mathcal{H}=L^2_{0}$. There is also a compatible $\mathcal{PT}$ symmetry on $\CC^2$ given by
\begin{equation*}
    \mathcal{PT}\begin{pmatrix}
        a\\
        b
    \end{pmatrix}=\begin{pmatrix}
        \bar{b}\\
        \bar{a}
    \end{pmatrix}.
\end{equation*}
Suppose we are in the setting of Proposition~\ref{prop:perturb-general} and we have a conic singularity at the Dirac point, i.e., $c = \langle A\varphi,\psi
\rangle\neq 0$. Then we claim the winding number is $-1/2$ (see \cite[Appendix]{bc18} for a different argument).

Using conventions in the proof of Proposition \ref{prop:perturb-general}, recall the Schur's complement formula (cf.~\cite[Lemma 2.10]{notes})
\begin{equation*}
    (H_k-z)^{-1}=F(z)-F_+(z)F_{-+}(z)^{-1}F_-(z).
\end{equation*}

We first compute the winding number at a Dirac cone. When there are Dirac cones, 
\begin{equation*}
    F_{-+}(z)=z-\begin{pmatrix}
        0&\bar{c}\bar{k}\\
        ck&0
    \end{pmatrix}+\mathcal{O}(|k|^2)
\end{equation*}
and 
\begin{equation*}
 F_{-+}(z)^{-1}=(z-E_1(k))^{-1}\Pi_{v_1(k)}+(z-E_{-1}(k))^{-1}\Pi_{v_{-1}(k)}+{\rm holomorphic\ terms},
\end{equation*}
where
\begin{equation*}
    v_{1}(k)=\frac{1}{\sqrt{2}}\begin{pmatrix}
        \left(\frac{k}{|k|}\right)^{-1/2}\\
        \left(\frac{k}{|k|}\right)^{1/2}
    \end{pmatrix}+\mathcal{O}(|k|),\quad  v_{-1}(k)=\frac{i}{\sqrt{2}}\begin{pmatrix}
        -\left(\frac{k}{|k|}\right)^{-1/2}\\
        \left(\frac{k}{|k|}\right)^{1/2}
    \end{pmatrix}+\mathcal{O}(|k|)
\end{equation*}
are normalized eigenvectors of $F_{-+}(z)$ that are invariant under $\mathcal{PT}$ symmetry, and $\Pi_{v_{\pm 1}(k)}$ are projections to the corresponding eigenvectors. Moreover, by the proof of \cite[Proposition 2.12]{notes} we have
\begin{equation*}
    F_{+}(z)=E_+(z)+\mathcal{O}(|k|),\quad F_{-}(z)=E_-(z)+\mathcal{O}(|k|),
\end{equation*}
where $E_{\pm}(z) = R_{\mp}$ is given by equations \eqref{eq:Rpm} and \eqref{eq:E}. Suppose the two bands corresponding to $v_{\pm 1}(k)$ are separated from other bands and $\Pi$ is the projection to the two bands. Then
\begin{equation*}
    \Pi=-\frac{1}{2\pi i}\oint_{\gamma} F_+(z)F_{-+}(z)^{-1}F_-(z)dz=\Pi_{v_1(k)}+\Pi_{v_{-1}(k)}+\mathcal{O}(|k|),
\end{equation*}
where $\gamma$ is a curve that encloses $z=0$ away from other bands, so that the true eigenfunctions $\phi_{\pm 1}(k)$ corresponding to the two bands are given by $\phi_{\pm 1}(k)=v_{\pm 1}(k)\cdot (\varphi,\psi)+\mathcal{O}(|k|)$ in a neighbourhood of $k=0$. Let $\varphi(k)$ and $\psi(k)$ be an orthonormal basis of $\mathbbm{1}_{E_{\pm 1}}(H_k(\alpha,\lambda))$ near $k=0$ with $\varphi(0)=\varphi$ and $\psi(0)=\psi$. Then we can write
\begin{equation*}
    \begin{pmatrix}
        \phi_{1}(k)\\
        \phi_{-1}(k)
    \end{pmatrix}= \frac{1}{\sqrt{2}}\begin{pmatrix}
         \left(\frac{k}{|k|}\right)^{-1/2}&\left(\frac{k}{|k|}\right)^{1/2}\\
          -i\left(\frac{k}{|k|}\right)^{-1/2}&i\left(\frac{k}{|k|}\right)^{1/2}
    \end{pmatrix} \begin{pmatrix}
        \varphi(k)\\
        \psi(k)
    \end{pmatrix} +\mathcal{O}(|k|).
\end{equation*}
Recall $\varphi=\mathcal{PT}(\psi)$, we can take 
\begin{equation*}
   \varphi(k)=\mathcal{PT}(\psi(k)),\quad \varphi_{+}(k)=\frac{1}{\sqrt{2}}(\varphi(k)+\psi(k)),\quad \varphi_{-}(k)=\frac{i}{\sqrt{2}}(\psi(k)-\varphi(k))
\end{equation*}
so that $\mathcal{PT}(\varphi_{\pm}(k))=\varphi_{\pm}(k)$. Then for $k=|k|e^{i\theta}$,
\begin{equation*}
        \begin{pmatrix}
        \phi_{1}(k)\\
        \phi_{-1}(k)
    \end{pmatrix}= \begin{pmatrix}
         \cos (\theta/2)&\sin(\theta/2)\\
          -\sin(\theta/2)&\cos(\theta/2)
    \end{pmatrix} \begin{pmatrix}
        \varphi_+(k)\\
        \varphi_-(k)
    \end{pmatrix} +\mathcal{O}(|k|).
\end{equation*}
In other words,
\begin{equation*}
    \begin{pmatrix}
        \varphi_+(k)\\
        \varphi_-(k)
    \end{pmatrix} = \begin{pmatrix}
         \cos (\omega(\theta))&-\sin(\omega(\theta))\\
          \sin(\omega(\theta))&\cos(\omega(\theta))
    \end{pmatrix}  \begin{pmatrix}
        \phi_{1}(k)\\
        \phi_{-1}(k)
    \end{pmatrix},\quad \omega(\theta)=\theta/2+\mathcal{O}(|k|).
\end{equation*}
Since $H_k\phi_{\pm 1}(k) = E_{\pm 1}(k)\phi_{\pm 1}(k)$, $\phi_1|_{\theta=2\pi}$ has to become $\pm \phi_1|_{\theta=0}$ after rotation around once. Therefore, $\omega(2\pi)-\omega(0)\equiv 0\mod \pi$ (this can also be seen from \eqref{eq:monodromy-matrix}). Hence $\omega(2\pi)-\omega(0)=\pi$ and the winding number is given by
\begin{equation*}
    a=\frac{1}{2\pi}\oint \langle \nabla\varphi_+(k),\varphi_-(k)\rangle = -\frac{1}{2\pi}(\omega(2\pi)-\omega(0))=-\frac{1}{2}.
\end{equation*}

Similarly, when there is a quadratic band touching, i.e., 
\begin{equation*}
    F_{-+}(k)=z-\begin{pmatrix}
        a|k|^2& b k^2\\
        \bar{b}\bar{k}^2& a|k|^2
    \end{pmatrix}+\mathcal{O}(|k|^3).
\end{equation*}
When $b=0$, $a\neq 0$, the eigenfunctions $v_{\pm 1}(k)$ are given by $(1,0)^T+\mathcal{O}(|k|)$ and $(0,1)^T+\mathcal{O}(|k|)$. So the winding number would be zero. When $b\neq 0$, the eigenfunctions are given by
\begin{equation*}
    \begin{pmatrix}
        k/|k|\\
        (k/|k|)^{-1}
    \end{pmatrix} + \mathcal{O}(|k|),\quad \begin{pmatrix}
        k/|k|\\
        -(k/|k|)^{-1}
    \end{pmatrix} + \mathcal{O}(|k|).
\end{equation*} 
The phase of $k/|k|$ would change by $2\pi$, so the winding number is $1$. We conclude the following.
\begin{prop}
    Suppose the two bands $E_{\pm 1}(k)$ are isolated from other bands and exhibit quadratic crossing at points $k=0,-K$ in the sense that, following \eqref{eq:expansion},
    \begin{equation*}
        \langle A \varphi,\psi\rangle=0,\quad \langle (\Pi^{\perp}H_0\Pi^{\perp})^{-1}A^*\varphi, A\psi\rangle\neq 0.
    \end{equation*}
    Then the two bands $E_{\pm 1}(k)$ have to touch at some addition points. Moreover if they touch at a discrete set of points, then the sum of the winding numbers of the touching points outside 
    $\{0,-K\}$ would be $-3$ or $-1$.
\end{prop}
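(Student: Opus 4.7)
The plan is to deploy the Euler-number machinery developed earlier in this section in three steps.

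First, I would fix the value $e_2(\mathcal E) = -1$ globally. Under the standing hypothesis (or the weakening along a gapped path mentioned in the remark), the real $\mathcal P \mathcal T$-invariant subbundle $\mathcal E \subset \mathcal E_0$ of \eqref{eq:defE0} is well-defined throughout a path connecting $(\alpha_0, 0)$ to $(\alpha, \lambda)$. The excerpt already computes $e_2(\mathcal E) = -1$ at the chiral endpoint $(\alpha_0, 0)$ via the identification $\mathscr Q u = v$ with the Chern-$(-1)$ kernel bundle of $D(\alpha) + k$. Since the Euler number is a homotopy invariant of real oriented rank-2 bundles over the torus, it remains $-1$ throughout the deformation.

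Second, assuming the band touchings form a discrete set $\{p_1, \dots, p_\ell\} \supset \{0, -K\}$, parallel transport of the normalized $\mathcal P \mathcal T$-compatible real eigenbasis defines a flat orthogonal connection $\nabla$ on $\mathcal E$ over $(\mathbb C/\Lambda^*) \setminus \{p_1, \dots, p_\ell\}$, and Proposition~\ref{prop:flat-conn-euler} yields
\begin{equation*}
  -1 \;=\; e_2(\mathcal E) \;=\; \mathrm{ind}(\nabla, 0) + \mathrm{ind}(\nabla, -K) + \sum_{p_j \notin \{0, -K\}} \mathrm{ind}(\nabla, p_j).
\end{equation*}

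Third, I would evaluate the two distinguished winding numbers. The hypothesis $\langle A \varphi, \psi \rangle = 0$ together with $\langle (\Pi^\perp H_0 \Pi^\perp)^{-1} A^* \varphi, A \psi \rangle \neq 0$ is precisely the ``$b \neq 0$'' quadratic regime analyzed just before the proposition. The explicit rotation of the eigenvectors $(k/|k|, \pm (k/|k|)^{-1})$ displayed there shows that the winding at $k=0$ is $\pm 1$, and the same analysis at $k = -K$—transported by the symmetries $\mathscr S, \mathscr M$ of Proposition~\ref{prop:sym}, which relate the local Grushin problems at the two high-symmetry points—gives $\mathrm{ind}(\nabla, -K) = \pm 1$. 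Depending on how the local orientations of $\mathcal E$ near $0$ and $-K$ align under $\mathscr M : L^2_{0,p} \to L^2_{-K, 1-p}$, the two contributions combine to $\mathrm{ind}(\nabla, 0) + \mathrm{ind}(\nabla, -K) \in \{0, +2\}$. Substituting into the formula above gives an outside winding sum in $\{-1, -3\}$, which is in particular nonzero, forcing additional band touchings beyond $\{0, -K\}$.

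The main obstacle will be the orientation bookkeeping in the last step: one must choose a global orientation on $\mathcal E$ and verify, via a direct expansion of the Grushin data under the action of $\mathscr M$, that the two local orientations at $0$ and $-K$ combine only as ``equal'' or ``opposite'', ruling out the combination $\mathrm{ind}(\nabla, 0) + \mathrm{ind}(\nabla, -K) = -2$ that would otherwise allow an outside sum of $+1$. The natural approach is to plug the transformation rule for $\mathscr M$ into the eigenvector expansion already used in the $b \neq 0$ winding computation.
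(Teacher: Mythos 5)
Your proposal is correct and follows essentially the same route as the paper: homotopy invariance gives $e_2(\mathcal E)=-1$ from the chiral-limit Chern number, Proposition~\ref{prop:flat-conn-euler} converts this into a sum of local winding numbers of the flat parallel-transport connection, and the explicit Grushin/eigenvector expansion at a quadratic touching gives the local indices at $0$ and $-K$, whose combined contribution of $0$ or $+2$ yields the outside sum $-1$ or $-3$. The orientation bookkeeping you flag as the remaining obstacle is handled no more carefully in the paper itself, which simply records the winding as $1$ in the $b\neq 0$ case and $0$ in the $b=0$ case before stating the same dichotomy.
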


\subsection{Wannier basis}
The nontrivial topology of the band will also give obstructions to the existence of exponential localized Wannier functions. While the Chern number of the bands around zero may be zero, the non-zero Euler number still affects the nature of the associated Wannier function, as also observed in \cite{apy19}. We have the following
\begin{prop}
Suppose the two bands $E_{\pm 1}(k)$ are separated from other bands. Then there does not exist exponential localized Wannier functions that is invariant under $\mathcal{PT}$ symmetry. More precisely, there does not exist an orthonormal family $\{\varphi_\gamma\}$ inside $\Pi_{E_{\pm 1}(k)}L^2(\CC)$ of the form
    \begin{equation}\label{eq:wannier}
        \varphi_\gamma(z)=\mathscr{L}_{\gamma}\varphi_0(z),\quad \mathcal{PT}\varphi_\gamma=\varphi_{-\gamma}, \quad \int_{\CC}|z|^2|\varphi_0(z)|^2 dz<\infty. 
    \end{equation}
\end{prop}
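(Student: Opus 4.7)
The plan is to use Bloch--Floquet theory to turn any hypothetical Wannier system $\{\varphi_\gamma\}$ satisfying \eqref{eq:wannier} into a continuous nowhere-vanishing section of the real rank-two bundle $\mathcal{E}$, and then invoke the obstruction $e_2(\mathcal{E})=-1\ne 0$ recalled just above.

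First I would set
\[
\hat\varphi_0(k,z):=\sum_{\gamma\in\Lambda} e^{-i\langle k,\gamma\rangle}\mathscr{L}_\gamma\varphi_0(z),\qquad k\in\CC/\Lambda^*,
\]
so that $\hat\varphi_0(k)\in L^2_k(\CC/\Lambda;\CC^4)$, and since the family lies in $\Pi_{E_{\pm 1}(k)}L^2(\CC)$, the vector $\hat\varphi_0(k)$ lies in the fiber of $\mathcal{E}_0$ over $k$. The finite second moment $\int_\CC|z|^2|\varphi_0|^2\,dz<\infty$ puts $\hat\varphi_0$ in $H^1(\CC/\Lambda^*;\mathcal{E}_0)$. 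Plancherel for the Bloch transform, applied to the orthonormality relations $\langle\varphi_\gamma,\varphi_0\rangle_{L^2(\CC)}=\delta_{\gamma,0}$, yields
\[
\int_{\CC/\Lambda^*}e^{i\langle k,\gamma\rangle}\|\hat\varphi_0(k)\|_{L^2_k}^{2}\,dk \,=\, |\CC/\Lambda^*|\,\delta_{\gamma,0},
\]
so that all Fourier coefficients of $k\mapsto\|\hat\varphi_0(k)\|^2_{L^2_k}$ vanish except the zeroth and $\|\hat\varphi_0(k)\|_{L^2_k}\equiv 1$; in particular $\hat\varphi_0$ is nowhere vanishing. The commutation $\mathscr{L}_\gamma\mathcal{PT}=\mathcal{PT}\mathscr{L}_{-\gamma}$ combined with antilinearity of $\mathcal{PT}$ converts the assumption $\mathcal{PT}\varphi_\gamma=\varphi_{-\gamma}$ into the fiberwise identity $\mathcal{PT}\hat\varphi_0(k)=\hat\varphi_0(k)$, so $\hat\varphi_0$ is actually a section of the real subbundle $\mathcal{E}\subset\mathcal{E}_0$.

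The topological endgame is then standard: if $s$ is a continuous nowhere-vanishing section of an oriented real rank-two bundle over the torus $\CC/\Lambda^*$, then $\mathcal{E}$ splits as $\mathbb{R}\!\cdot\! s\oplus(\mathbb{R}\!\cdot\! s)^{\perp}$, a direct sum of two oriented real line bundles, each of which is trivial over the torus; hence $e_2(\mathcal{E})=0$, contradicting $e_2(\mathcal{E})=-1$.

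The delicate point, and the main obstacle, is that the $|z|^2$-moment hypothesis only yields $\hat\varphi_0\in H^1(\CC/\Lambda^*)$, and in two dimensions $H^1$ does not embed in $C^0$, so pointwise nonvanishing of $\hat\varphi_0(k)$ does not follow directly from $\|\hat\varphi_0(k)\|_{L^2_k}\equiv 1$. I would handle this by replacing $\hat\varphi_0(k)$ by the orthogonal projector $P_k:=|\hat\varphi_0(k)\rangle\langle\hat\varphi_0(k)|$ on the fiber of $\mathcal{E}_0$ over $k$; these fit together as the Bloch fibers of $P:=\sum_\gamma|\varphi_\gamma\rangle\langle\varphi_\gamma|$, a bounded projector commuting with the spectral projector of $H(\alpha,\lambda)$ onto the isolated two-band cluster. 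Using the spectral gap to smooth in $k$, one argues that $P_k$ varies continuously in $k$ and defines a continuous $\mathcal{PT}$-invariant rank-one subbundle of $\mathcal{E}$, to which the Euler-class obstruction then applies directly.
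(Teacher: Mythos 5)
Your overall route is the same as the paper's: Bloch--transform the hypothetical Wannier family into a unitary, $\mathcal{PT}$-invariant section of the real rank-two bundle $\mathcal{E}$, observe that the second-moment condition gives exactly $H^1$ regularity of that section, and derive a contradiction with the nontrivial topology $e_2(\mathcal{E})=-1$. The computations you sketch (constancy of $\|\hat\varphi_0(k)\|$ from orthonormality via Plancherel, and the fiberwise identity $\mathcal{PT}\hat\varphi_0(k)=\hat\varphi_0(k)$ from $\mathcal{PT}\varphi_\gamma=\varphi_{-\gamma}$) are correct and match the paper's construction of the section $s(k,z)$.

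The genuine gap is in your repair of the regularity issue at the end. You correctly identify that $H^1(\CC/\Lambda^*)\not\hookrightarrow C^0$ in two dimensions, so the splitting argument ($\mathcal{E}=\RR\cdot s\oplus(\RR\cdot s)^\perp$ forces $e_2=0$) needs a \emph{continuous} nowhere-vanishing section, which you do not yet have. But your fix --- passing to $P_k=|\hat\varphi_0(k)\rangle\langle\hat\varphi_0(k)|$ and claiming continuity of $P_k$ ``using the spectral gap'' --- does not work. The spectral gap controls the regularity of the full two-band Riesz projector $\mathbbm{1}_{E_{\pm1}(k)}(H_k)$, which is analytic in $k$; it says nothing about the rank-one subprojector singled out by the Wannier function, since $P=\sum_\gamma|\varphi_\gamma\rangle\langle\varphi_\gamma|$ is not a spectral projector of $H$. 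The fibers $P_k$ inherit only the $H^1$ regularity of $\hat\varphi_0$, and an $H^1$ unit-norm section of a smooth rank-two bundle over a two-torus can be genuinely discontinuous (e.g.\ with a phase like $\log|\log|k||$ near a point), so the claimed continuous rank-one subbundle need not exist. The paper sidesteps this entirely: it regards the oriented rank-two real bundle $\mathcal{E}$ as a complex line bundle with $c_1=e_2=-1$ and invokes a lemma (Lemma 8.9 of the cited lecture notes, in the spirit of Panati's theorem) asserting that a line bundle with nonzero Chern number admits no $H^1$ unitary section --- the Chern integral can be tested directly against an $H^1$ section, so no continuity upgrade is required. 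To close your argument you would need either that lemma or its equivalent (e.g.\ degree theory for $H^1\subset\mathrm{VMO}$ maps into the unit sphere of the fiber); the splitting-into-line-bundles argument as you state it cannot be rescued at $H^1$ regularity by appealing to the spectral gap.
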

\begin{proof}
    This follows from \cite[Theorem 9]{notes}, see also \cite{P07} and references therein. We first observe the existence of Wannier basis satisfying \eqref{eq:wannier} implies there exists an $H^1$ orthonormal trivialization of the bundle $\mathcal{E}$. This follows from
    \begin{equation*}
        s(k,z)=\sum\limits_{\gamma\in\Lambda}e^{i\langle z+ \gamma, k\rangle}\mathscr{L}_{\gamma}\varphi_0(z).
    \end{equation*}
    One easily checks that $s$ is a unitary section satisfying $\mathcal{PT}s=s$. Since
    \begin{equation*}
        \int_{\CC/\Lambda}\int_{\CC
        /\Lambda^*}|\nabla_{k_j} s(k,z)|dk dz=C_\Lambda\int_{\CC}|x_j \varphi_0(z)|^2 dz,
    \end{equation*}
    the section we get is $H^1$. Now think of the oriented rank $2$ real vector bundle $\mathcal{E}$ as a complex line bundle, the Euler number is the same as the Chern number, which is $-1$. Since $s$ gives a unitary section of $\mathcal{E}$, we get a contradiction from \cite[Lemma 8.9]{notes}.
\end{proof}

\begin{figure}[!ht]
\includegraphics[width=8cm]{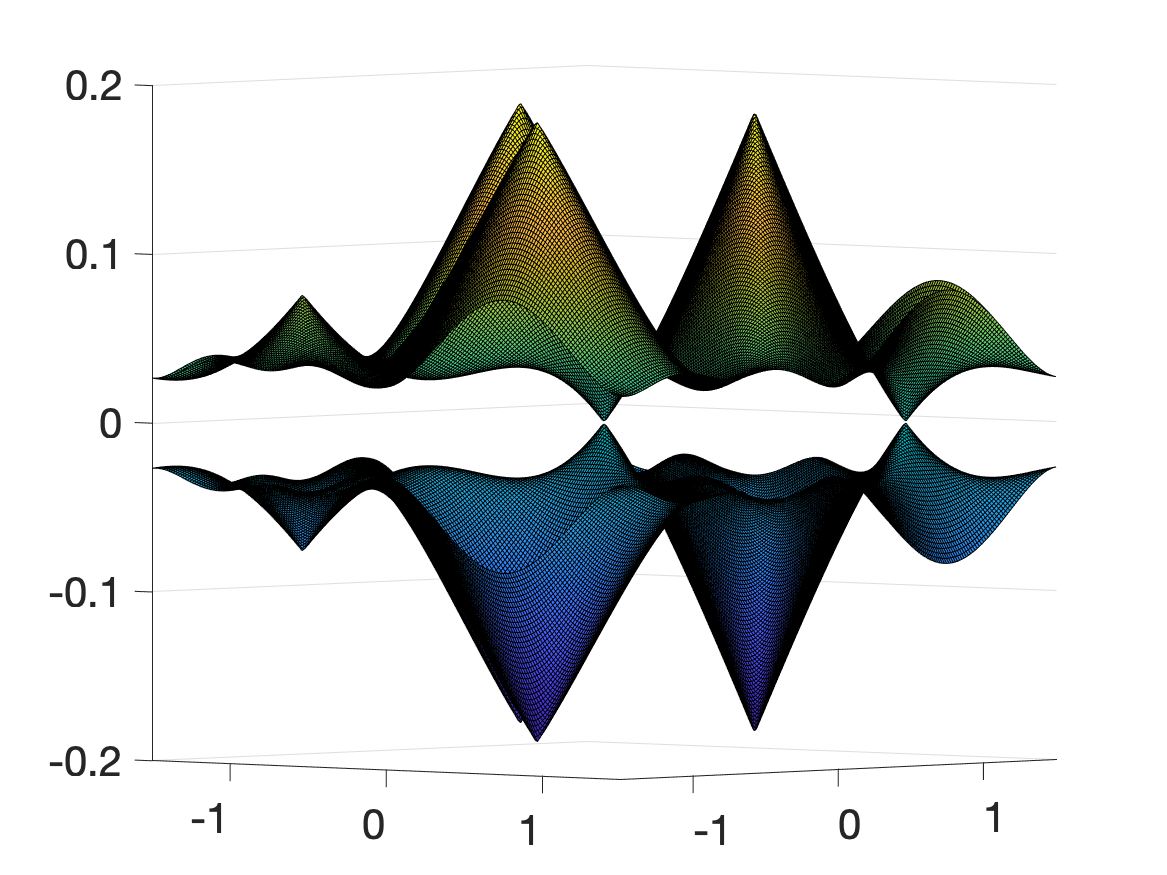} \includegraphics[width=8cm]{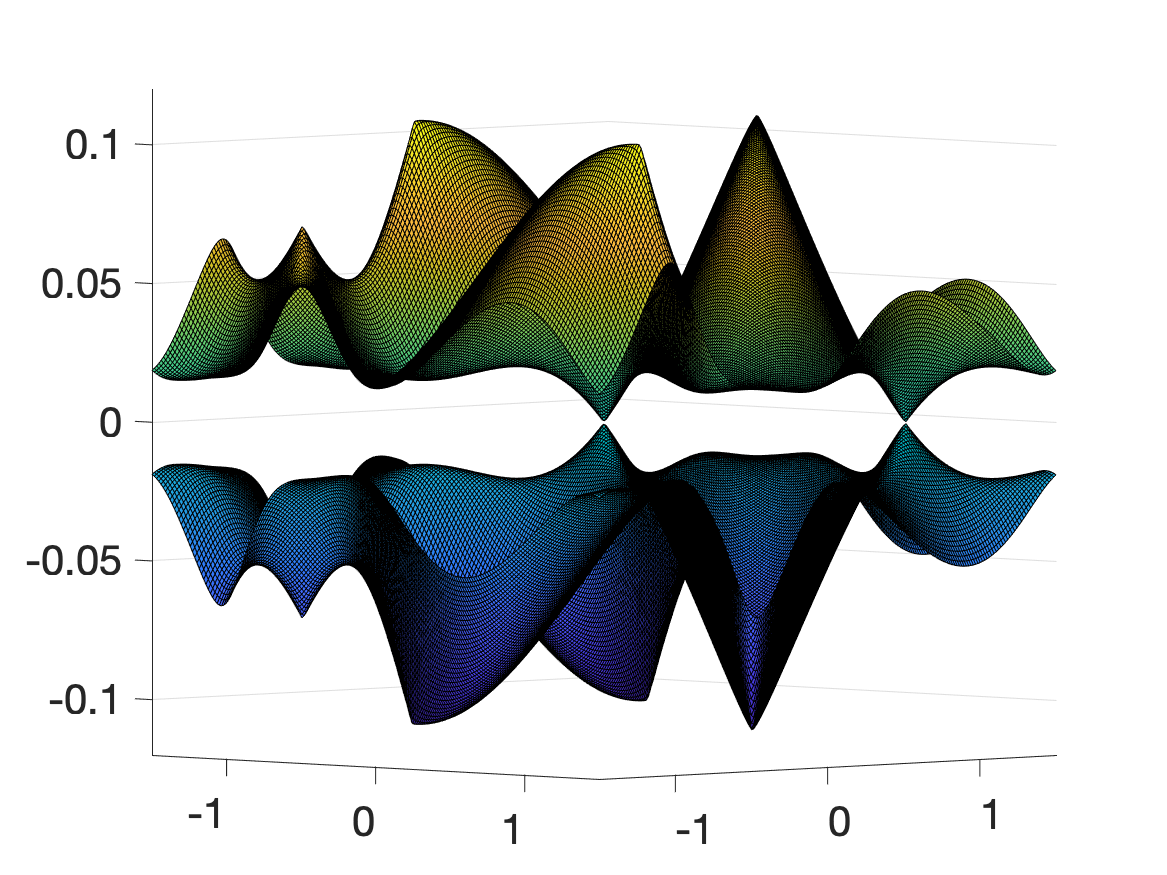}\\
\includegraphics[width=8cm]{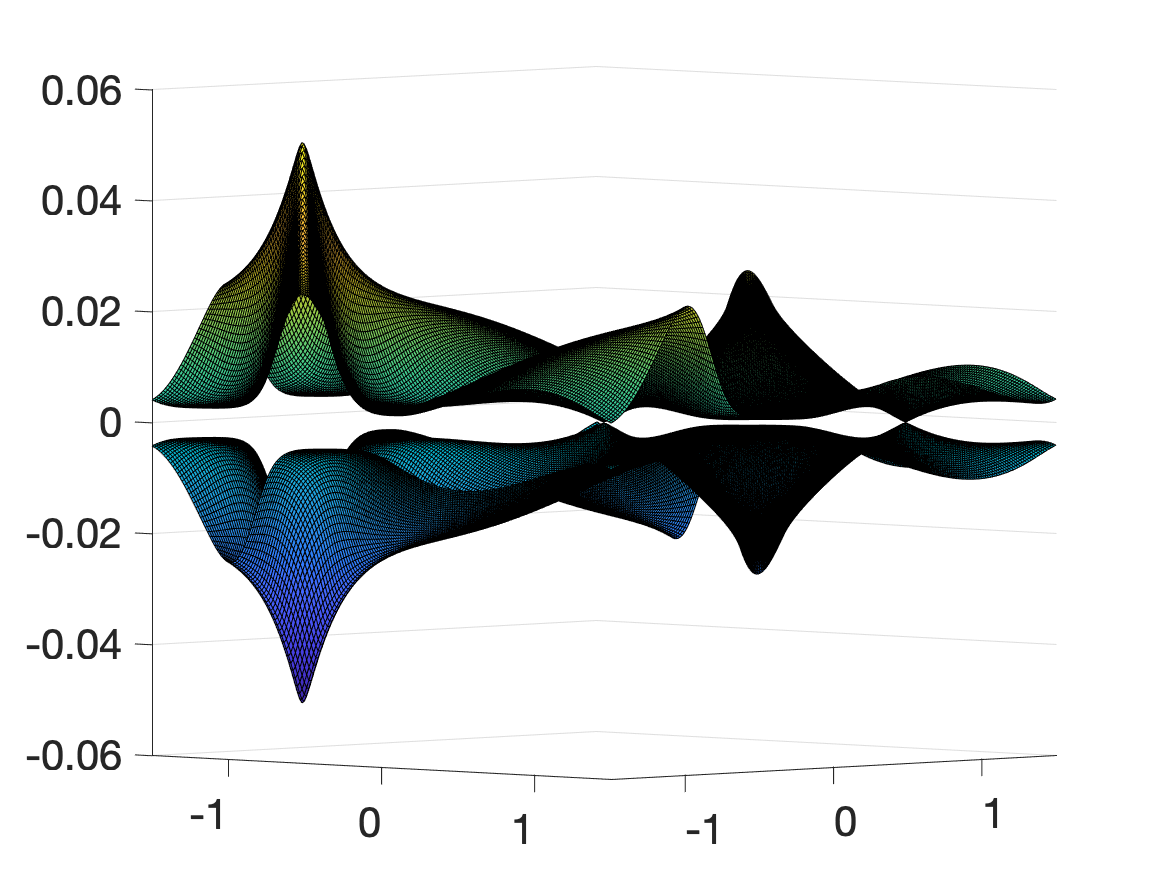}
\includegraphics[width=8cm]{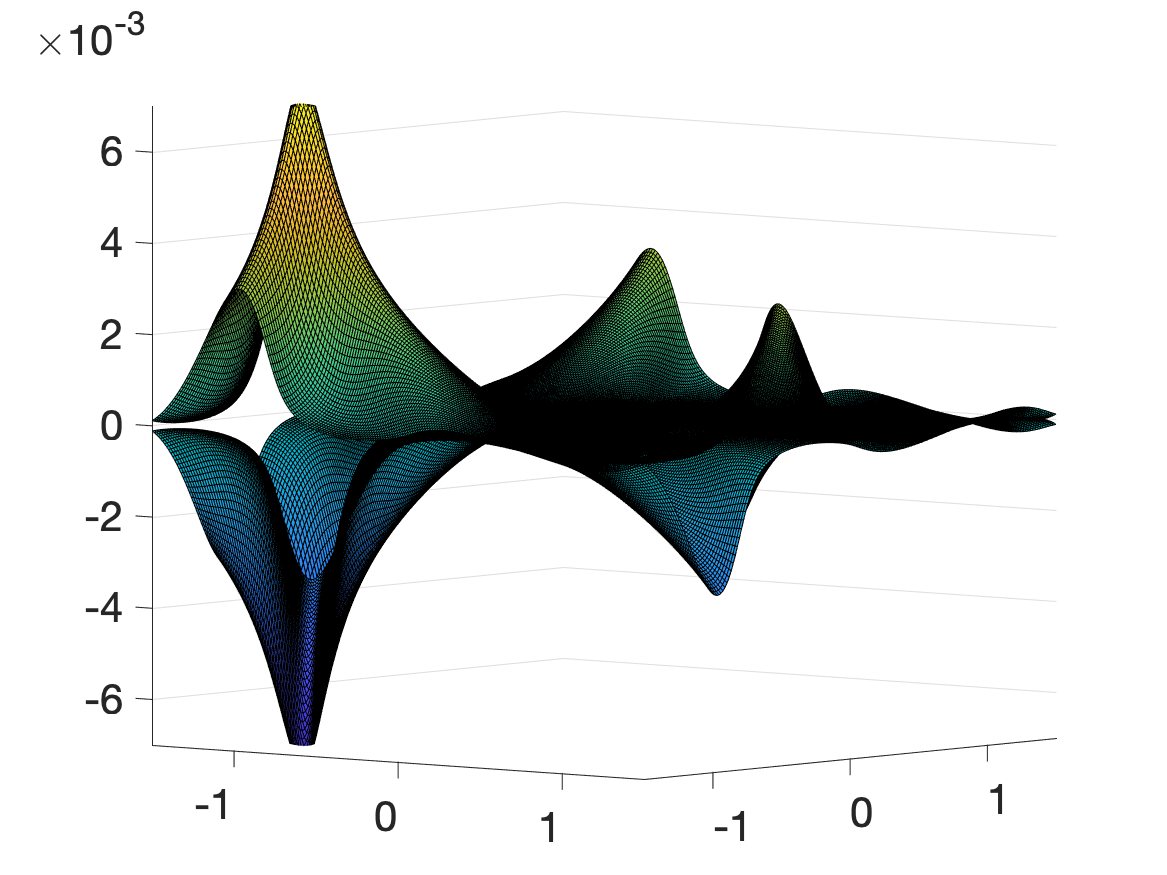}
\caption{\label{f:touching}Two bands closest to zero energy for fixed $\lambda=0.5$ and $\alpha=0$ (top left), $\alpha=0.25$ (top right), $\alpha=0.5$ (bottom left) and $\alpha=0.586$ (bottom right) as $\alpha$ approaches the line of the 1st magic angle in Figure \ref{fig:fermi}. New intersections of the bands appear.}
\end{figure}


\begin{thebibliography}{0}

\bibitem[APY19]{apy19} J. Ahn, S. Park, and B. Yang. \emph{Failure of Nielsen-Ninomiya theorem and fragile topology in two-dimensional systems with space-time inversion symmetry: application to twisted bilayer graphene at magic angle.} Physical Review X 9.2 (2019): 021013.

\bibitem[Am*20]{am20} H.~Ammari, B.~Fitzpatrick, E.~O.~Hiltunen, H.~Lee, and S.~Yu, \emph{Honeycomb-lattice Minnaert bubbles,} SIAM Journal on Mathematical Analysis, 52(6),(2020) 5441-5466.

\bibitem[Be*22]{beta}  S.~Becker, M.~Embree, J.~Wittsten and M.~Zworski,
{\em Mathematics of magic angles in a model of twisted bilayer graphene},
to appear in Probability and Mathematical Physics.

\bibitem[Be*23]{bhwy} S.~Becker, T.~Humbert, J.~Wittsten, and M.~Yang,
{\em Chiral limit of twisted trilayer graphene,} \arXiv{2308.10859}.

\bibitem[BHZ22-1]{bhz1} S.~Becker, T.~Humbert, and M.~Zworski,
{\em Integrability in chiral model of magic angles,} Communications in Mathematical Physics 403.2 (2023): 1153-1169.

\bibitem[BHZ22-2]{bhz2} S.~Becker, T.~Humbert, and M.~Zworski,
{\em Fine structure of flat bands in a chiral model of magic angles,} \arXiv{2208.01628}.

\bibitem[BHZ23]{bhz3} S.~Becker, T.~Humbert, and M.~Zworski,
{\em Degenerate flat bands in twisted bilayer graphene,} \arXiv{arXiv:2306.02909}.

\bibitem[BeZw23-1]{dynip} 
S.~Becker and M.~Zworski,
{\em Dirac points for twisted bilayer graphene with in-plane magnetic field,}
\arXiv{arXiv:2303.00743} 

\bibitem[BeZw23-2]{bz23} 
S.~Becker and M.~Zworski,
{\em From the chiral model of TBG to the Bistritzer--MacDonald model,}
Journal of Mathematical Physics 65.6 (2024).

\bibitem[BeCo18]{bc18} 
G.~Berkolaiko and A.~Comech,
{\em Symmetry and Dirac points in graphene spectrum,} Journal of Spectral Theory 8.3 (2018): 1099-1147.

\bibitem[BiMa11]{BM11} R.~Bistritzer and A.~MacDonald, {\em Moir\'e bands in twisted double-layer graphene}. PNAS, {\bf 108}, 12233--12237, 2011.

\bibitem[Ca23]{Ca23} E. Canc\`es, L. Garrigue, and D. Gontier, \emph{Simple derivation of moir\'e-scale continuous models for twisted bilayer graphene}. Phys. Rev. B, {\bf 107}, 155403, 2023.

\bibitem[Ca19]{Ca19} S. Carr, S. Fang, Z. Zhu, and E. Kaxiras, \emph{Exact continuum model for low-energy electronic states of twisted bilayer graphene}. Phys. Rev. Res., {\bf 1}, 013001, 2019.

\bibitem[Ca18]{Ca18} S. Carr, D. Massatt, S. B. Torrisi, P. Cazeaux, M. Luskin, and E. Kaxiras, \emph{Relaxation and domain formation in incommensurate two-dimensional heterostructure}. Phys. Rev. B, {\bf 98}, 224102, 2018.

\bibitem[Ca24]{ca} J. Cazalis. \emph{Dirac cones for a mean-field model of graphene.} Pure and Applied Analysis 6.1 (2024): 129-185.

\bibitem[CaWe21]{cw21} M.~Cassier, and M.~I.~Weinstein. 
\emph{High contrast elliptic operators in honeycomb structures,} Multiscale Modeling \& Simulation 19.4 (2021): 1784-1856.

\bibitem[Ca20]{Ca20} P. Cazeaux, M. Luskin, and D. Massatt, \emph{Energy Minimization of Two Dimensional Incommensurate Heterostructure}. Arch. Rat. Mech. Anal., {\bf 235}, 1289-1325, 2020.

\bibitem[ChWe24]{cw24} 
J.~Chaban, and M.~I.~Weinstein,
{\em Instability of quadratic band degeneracies and the emergence of Dirac points,}
\arXiv{arXiv:2404.05886}

\bibitem[Dr21]{top} A.~Drouot, 
\emph{Ubiquity of conical points in topological insulators,} Journal de l'\'Ecole polytechnique—Mathématiques 8 (2021): 507-532.

\bibitem[DuNo80]{dun} B.A.~Dubrovin and S.P.~Novikov, {\em Ground states in a periodic field. Magnetic Bloch functions and vector
bundles.} Soviet Math. Dokl. {\bf 22}, 1, 240–244, 1980.

\bibitem[DyZw19]{dz} S.~Dyatlov and M.~Zworski,  {\em Mathematical theory of scattering resonances.} Grad. Stud. Math. 200. Amer. Math. Soc., 2019.

\bibitem[FLW16]{FefLeeWei_prep16}
C.~L. Fefferman, J.~P. Lee-Thorp, and M.~I. Weinstein.
\emph{Honeycomb {S}chr\"odinger operators in the strong binding regime,} Communications on Pure and Applied Mathematics 71.6 (2018): 1178-1270.

\bibitem[FW12]{fefferman2012honeycomb} C.~Fefferman and M.~I.~Weinstein, \emph{Honeycomb lattice potentials and Dirac points}, Journal of the American Mathematical Society, Volume 25, Number 4, 1169--1220, 2012.

\bibitem[GZ23]{gz} J.~Galkowski, and M.~Zworski,
{\em An abstract formulation of the flat band condition.} \arXiv{2307.04896}.

\bibitem[Gr09]{Gru_mn09}
V.~Grushin. \emph{Multiparameter perturbation theory of {F}redholm operators applied to {B}loch functions,} Math.~Notes, 86(5-6):767--774, 2009.

\bibitem[Ka55]{K55} T. Kato, \emph{Notes on Projections and Perturbation Theory}, Technical Report No. 9, Univ. Calif., Berkeley, 1955.

\bibitem[Ka13]{kato2013perturbation} T. Kato, \emph{Perturbation theory for linear operators}, Springer Science \& Business Media, 2013.

\bibitem[Ko*24]{Ko24}T.~Kong, D.~Liu, M.~Luskin, A. B.~Watson, \emph{Modeling of Electronic Dynamics in Twisted Bilayer Graphene},  SIAM Journal on Applied Mathematics, {\bf 84}, 3, 1011-1038, 2024. 

\bibitem[Ku23]{ku} P. Kuchment. \emph{Analytic and algebraic properties of dispersion relations (Bloch varieties) and Fermi surfaces. What is known and unknown.} Journal of Mathematical Physics 64.11 (2023). 

\bibitem[KrPa02]{kp} S.~Krantz and H.~Parks. \emph{A primer of real analytic functions.} Springer Science \& Business Media, 2002.

\bibitem[KuPo07]{KucPos_cmp07}
P.~Kuchment and O.~Post.
\emph{On the spectra of carbon nano-structures,} Comm.~Math.~Phys., 275(3):805--826, 2007.

\bibitem[KuPa08]{blowup}
K.~Kurdyka and L.~Paunescu,
\emph{Hyperbolic polynomials and multiparameter real analytic perturbation theory,\/}
Duke Math. J. 141 (1), 123--149, 2008.

\bibitem[Le16]{lee}
M.~Lee. \emph{Dirac cones for point scatterers on a honeycomb lattice,} SIAM Journal on Mathematical Analysis 48.2 (2016): 1459-1488.

\bibitem[LWZ19]{lwz} J.~P.~Lee-Thorp, M.~I.~Weinstein, and Y.~Zhu, 
\emph{Elliptic operators with honeycomb symmetry: Dirac points, edge states and applications to photonic graphene,} Archive for Rational Mechanics and Analysis 232 (2019): 1-63.

\bibitem[LLZ23]{llz} W.~Li, J.~Lin, and H.~Zhang,
\emph{Dirac points for the honeycomb lattice with impenetrable obstacles,} SIAM Journal on Applied Mathematics 83.4 (2023): 1546-1571.

\bibitem[Ma23]{Ma23} D. Massatt, S. Carr, and M. Luskin, \emph{Electronic Observables for Relaxed Bilayer Two-Dimensional Heterostructures in Momentum Space,} Multiscale Modeling \& Simulation 21(4), 1344-1378, 2023.

\bibitem[P07]{P07} G.~Panati, \emph{Triviality of Bloch and Bloch–Dirac Bundles} Ann. Henri Poincaré 8, 995–1011, 2007.

\bibitem[Q*24]{Q24}X. Quan, A. B. Watson, D. Massatt, \emph{Construction and Accuracy of Electronic Continuum Models of Incommensurate Bilayer 2D Materials}, \arXiv{arXiv:2406.15712}

\bibitem[S77]{S77}B.~Simon, \emph{Notes on infinite determinants of Hilbert space operators}, Advances in Mathematics
Volume 24, Issue 3, June 1977, Pages 244-273.

\bibitem[SlWe58]{SloWei_pr58}
J.~C. Slonczewski and P.~R. Weiss.
\emph{Band structure of graphite,} Phys.~Rev., 109(2):272--279, 1958.

\bibitem[SjZw07]{SZ07} J. Sjöstrand and M. Zworski, \emph{Elementary linear algebra for advanced spectral problems}, Annales de l'Institut Fourier, Volume 57 (2007) no. 7, pp. 2095-2141. doi : 10.5802/aif.2328.

\bibitem[TaZw23]{notes} Z.~Tao and M.~Zworski, 
{\em PDE methods in condensed matter physics,} Lecture Notes, 2023, \\ \url{https://math.berkeley.edu/~zworski/Notes_279.pdf}.

\bibitem[TKV19]{magic}
G.~Tarnopolsky,  A.J.~Kruchkov and A.~Vishwanath,
\emph{Origin of magic angles in twisted bilayer graphene}, Phys. Rev. Lett. 122, 106405, 2019.

\bibitem[Wa47]{Wal_pr47}
P.~R. Wallace.
\emph{The band theory of graphite,} Phys.~Rev.,71:622--634, 1947.


\bibitem[Wa*22]{Wa22}A. B.~Watson, T.~Kong, A.~H.~MacDonald, and M.~Luskin, \emph{Bistritzer-MacDonald dynamics in twisted bilayer graphene},  J. Math. Phys. {\bf 64}(2023), 031502. 


\bibitem[WaLu21]{walu}  A.~Watson and M.~Luskin,
{\em Existence of the first magic angle for the chiral model of bilayer graphene},   J.~Math.~Phys.
{\bf 62}(2021), 091502.

\bibitem[Zw12]{zw12} M. Zworski. \emph{Semiclassical analysis}, Vol. 138. American Mathematical Society, 2022.

\bibitem[Zw23]{Z23}M. Zworski, M. Yang, and Z. Tao, \emph{Mathematical results on the chiral model of twisted bilayer graphene (with an appendix by Mengxuan Yang and Zhongkai Tao)}, Journal of Spectral Theory, to appear. 

\end{thebibliography}
\end{document}